\newtheorem{theorem}{Theorem}[section]
\newtheorem{lemma}[theorem]{Lemma}
\newtheorem{proposition}[theorem]{Proposition}
\newtheorem{corollary}[theorem]{Corollary}
\newtheorem{observation}[theorem]{Observation}
\newtheorem{definition}[theorem]{Definition}
\def\FullBox{\hbox{\vrule width 8pt height 8pt depth 0pt}}
\newcommand{\QED}{\;\;\;\FullBox}
\renewenvironment{proof}{\noindent{\bf Proof:~}}{\hfill\QED}
\newenvironment{proofof}[1]{\noindent{\bf Proof of {#1}:~}}{\hfill\(\QED\)}
\newtheorem*{rep@theorem}{\rep@title}
\newcommand{\newreptheorem}[2]{%
	\newenvironment{rep#1}[1]{%
		\def\rep@title{#2 \ref{##1}}%
		\begin{rep@theorem}}%
		{\end{rep@theorem}}}
\DeclarePairedDelimiter\qu{\langle\!\langle}{\rangle\!\rangle}
\newcommand{\PCP}{\mathsf{PCP}}
\newcommand{\PCPP}{\mathsf{PCPP}}
\newcommand{\PCU}{\mathsf{PCU}}
\newcommand{\Value}{\mathsf{\bf Value}}
\newcommand{\PCUSS}{\mathsf{PCUSS}}
\newcommand{\Proof}{\mathrm{{\bf Proof}}}
\newcommand{\Spiel}{\mathsf{{ Spiel}}}
\newcommand{\Dom}{\mathrm{{ Dom}}}
\newcommand{\etal}{\textit{et al.}}
\renewcommand{\L}{\mathcal{L}}
\renewcommand{\polylog}{\mathrm{{polylog}\;}}
\newcommand{\polylogell}{\mathrm{polylog}^{(\ell)}}
\def\authornameAL{Amit Levi}
\def\authoraffiAL{University of Waterloo, Canada. Email: \href{mailto: amit.levi@uwaterloo.ca}{amit.levi@uwaterloo.ca}. Research supported by the David R. Cheriton Graduate Scholarship. Part of this work was done while the author was visiting the Technion.}
\def\authornameEF{Eldar Fischer}
\def\authoraffiEF{Technion - Israel Institute of Technology, Israel. Email: \href{mailto:eldar@cs.technion.ac.il}{eldar@cs.technion.ac.il}.}
\def\authornameOBE{Omri Ben-Eliezer}
\def\authoraffiOBE{Tel Aviv University, Israel. Email: \href{mailto: omrib@mail.tau.ac.il}{omrib@mail.tau.ac.il}.}
\def\authornameRR{Ron D.~Rothblum}
\def\authoraffiRR{Technion - Israel Institute of Technology,
  Israel. Email:
  \href{mailto:rothblum@cs.technion.ac.il}{rothblum@cs.technion.ac.il}. Supported
  in part by the Israeli Science Foundation (Grant No. 1262/18), a
  Milgrom family grant and the Technion Hiroshi Fujiwara cyber
  security research center and the Israel cyber directorate.}
\begin{document}

\title{Hard properties with (very) short PCPPs and their applications}
\author{
	\authornameOBE\thanks{\authoraffiOBE}
	\and
	\authornameEF\thanks{\authoraffiEF}
	\and
	\authornameAL\thanks{\authoraffiAL}
	\and
	\authornameRR\thanks{\authoraffiRR}
}
\date{}
\maketitle
\begin{abstract}

We show that there exist properties that are maximally hard for
testing, while still admitting $\PCPP$s with a proof size very close to linear. Specifically, for every fixed $\ell$, we construct a
property $\calP^{(\ell)}\subseteq\zo^n$ satisfying the following: Any testing algorithm
for $\calP^{(\ell)}$ requires $\Omega(n)$ many queries, and yet
$\calP^{(\ell)}$ has a constant query $\PCPP$ whose proof size is
$O(n\cdot \log^{(\ell)}n)$, where $\log^{(\ell)}$ denotes the $\ell$
times iterated log function (e.g., $\log^{(2)}n = \log \log n$). The
best previously known upper bound on the $\PCPP$ proof size for a
maximally hard to test property was $O(n \cdot \polylog{n})$.

As an immediate application, we obtain stronger separations between
the standard testing model and both the tolerant testing model and
the erasure-resilient testing model: for every fixed
$\ell$, we construct a property that has a constant-query tester,
but requires $\Omega(n/\log^{(\ell)}(n))$ queries for every tolerant
or erasure-resilient tester.

\end{abstract}

\section{Introduction}

Probabilistically checkable proofs ($\PCP$s) are one of the landmark
achievements in theoretical computer science. Loosely speaking,
$\PCP$s are proofs that can be verified by reading only a very small
(i.e., constant) number of bits. Beyond the construction of highly
efficient proof systems, $\PCP$s have myriad applications, most
notably within the field of hardness of approximation.

A closely related variant of $\PCP$s, called \emph{probabilistically
  checkable proofs of proximity} ($\PCPP$s), was introduced
independently by Ben-Sasson~\etal{}~\cite{BGHSV06} and Dinur and
Reingold~\cite{DR06}. In the $\PCPP$ setting, a verifier is given oracle access
to both an input $x$ and a proof $\pi$. It should make a few (e.g.,
constant) number of queries to both oracles to ascertain whether
$x \in \L$. Since the verifier can only read a few of the input bits,
we only require that it rejects inputs that are \emph{far} (in Hamming
distance) from $\L$, no matter what proof $\pi$ is provided. $\PCPP$s
are highly instrumental in the construction of standard
$\PCP$s. Indeed, using modern terminology, both the original algebraic
construction of $\PCP$s \cite{ALMSS98} (see also \cite{BGHSV06}) as
well as Dinur's \cite{Din07} combinatorial proof utilize $\PCPP$s.

By combining the seminal works of Ben-Sasson and Sudan~\cite{BS08} and
Dinur~\cite{Din07}, one can obtain $\PCP$s and $\PCPP$s with only
poly-logarithmic (multiplicative) overhead. More specifically, the
usual benchmark for $\PCPP$s is with respect to the
$\mathtt{CircuitEval}$ problem, in which the verifier is given
explicit access to a circuit $C$ and oracle access to both an input
$x$ and a proof $\pi$, and needs to verify that $x$ is close to the
set $\{ x' : C(x')=1\}$. The works of \cite{BS08,Din07} yield a
$\PCPP$ whose length is quasilinear in the size $|C|$ of the circuit
$C$.\footnote{Note that a $\PCPP$ for $\mathtt{CircuitEval}$ can be
  easily used to construct a $\PCP$ for $\mathtt{CircuitSAT}$ with
  similar overhead (see \cite[Proposition 2.4]{BGHSV06}).}

Given the important connections both to constructions of efficient
proof-systems, and to hardness of approximation, a central question in
the area is whether this result can be improved: Do $\PCPP$s with only
a \emph{constant} overhead exist? In a recent work,
Ben~Sasson~\etal{}~\cite{BKKMS16} construct $\PCP$s with
constant overhead, albeit with very large query complexity (as well as
a non-uniform verification procedure).\footnote{Although it is not
  stated in \cite{BKKMS16}, we believe that their techniques can also
  yield $\PCPP$s with similar parameters.} To verify that $C(x)=1$
the verifier needs to make $|C|^{\delta}$ queries, where $\delta>0$
can be any fixed constant.

Given the lack of success (despite the significant interest) in constructing
constant-query $\PCPP$s with constant overhead, it may be the case that there exist languages that do not have such efficient
$\PCPP$s. A natural class of candidate languages for which such
$\PCPP$s may not exist are languages for which it is \emph{maximally}
hard to test whether $x \in \L$ or is far from such, \emph{without} a
$\PCPP$ proof. In other words, languages (or rather properties) that
do not admit sub-linear query testers. Thus, we investigate the following question:

\begin{quote}
  \emph{Supposing that $\L$ requires $\Omega(n)$ queries for every (property)
    tester, must any constant-query $\PCPP$ for $\L$ have proof
    length $n \cdot (\log{n})^{\Omega(1)}$?}
\end{quote}

\subsection{Our Results}
Our first main result answers the above question negatively, by
constructing a property that is maximally hard for testing, while
admitting a very short $\PCPP$. For the exact theorem statement, we
let $\log^{(\ell)}$ denote the $\ell$ times iterated $\log$
function. That is, $\log^{(\ell)}(n) = \log(\log^{(\ell-1)}(n))$ for
$\ell \geq 1$ and $\log^{(0)}n=n$.

\begin{theorem}[informal restatement of Theorem~\ref{thm:PCUSS-construction}]\label{informalthm:main} For every constant integer $\ell\in\N$, there exists a property $\calP\subseteq \{0,1\}^n$ such that any
  testing algorithm for $\calP$ requires $\Omega(n)$ many queries, while
  $\calP$ admits a (constant query\footnote{For detection radius (or proximity parameter) $\eps > 0$ and constant soundness, the particular query complexity of the $\PCPP$ system is bounded by $(2^{\ell} / \eps)^{O(\ell)}$.}) $\PCPP$ system with proof length
  $O(n\cdot \log^{(\ell)}(n))$.
\end{theorem}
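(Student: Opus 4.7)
The plan is to construct $\calP^{(\ell)}$ by induction on $\ell$, using an $\ell$-fold recursive composition that each time trades one level of iterated logarithm in the PCPP overhead. For the base case $\ell=0$, I would take $\calP^{(0)}$ to be the image of a good linear binary error-correcting code of linear distance, which is maximally hard to test (a standard consequence of large distance and high dimension) while admitting a constant-query PCPP of quasilinear length via Ben-Sasson--Sudan and Dinur~\cite{BS08,Din07}.

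For the inductive step, suppose $\calP^{(\ell-1)}$ on domain size $m$ already admits a constant-query PCPP of length $O(m \cdot \log^{(\ell-1)} m)$. I would define $\calP^{(\ell)} \subseteq \{0,1\}^n$ by partitioning the $n$ coordinates into blocks of size $b = \Theta(\log n)$, requiring each block to be an encoding of a word in (a scaled version of) $\calP^{(\ell-1)}$, and enforcing consistency across blocks via an outer code of constant rate and linear distance. The PCPP for $\calP^{(\ell)}$ then consists of an inner PCPP per block (certifying block-wise membership in $\calP^{(\ell-1)}$) plus a short outer PCPP certifying the outer-code relations; a standard PCPP composition keeps the query complexity constant. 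The total length is at most $(n/b) \cdot O(b \cdot \log^{(\ell-1)} b) + O(n) = O(n \cdot \log^{(\ell-1)}(\log n)) = O(n \cdot \log^{(\ell)} n)$, as desired.

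The main obstacle, and the heart of the argument, is to show that $\calP^{(\ell)}$ remains maximally hard for (proof-free) testing. The block structure and inner encoding introduce substantial local structure that a tester might in principle exploit, so a naive construction need not inherit the $\Omega(n)$ lower bound from $\calP^{(0)}$. To address this, I would design the outer code and the within-block embedding so that any sublinear-query tester for $\calP^{(\ell)}$ can be converted, via a hybrid/reduction argument, into either a sublinear-query tester for $\calP^{(0)}$ (contradicting its hardness) or a sublinear-query tester for the outer code itself (contradicting a standard testing lower bound for codes of linear distance). Specifying the outer code to simultaneously admit a very short PCPP \emph{and} force a proof-free tester to inspect linearly many blocks in order to distinguish codewords from $\varepsilon$-far inputs is the technically delicate step; the PCPP-length bookkeeping and verifier composition are comparatively routine once the construction is fixed.
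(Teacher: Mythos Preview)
Your recursive skeleton and the $\PCPP$-length bookkeeping match the paper's approach at a high level, but the proposal has two genuine gaps, both on the lower-bound side.

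First, the base case claim is incorrect as stated: ``linear distance and high dimension'' do \emph{not} imply that a code is maximally hard to test. Locally testable codes (and even simple examples like Reed--Muller) have large distance yet admit constant-query testers. What the paper actually uses is a linear code with large \emph{dual} distance; by Lemma~\ref{lem:dual-to-secret}, this guarantees that the restriction of a uniformly random codeword to any $o(n)$ coordinates is exactly uniform, which is what forces $\Omega(n)$ queries.

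Second, and more importantly, your inductive hypothesis is too weak to close the recursion. Knowing only that $\calP^{(\ell-1)}$ is hard to \emph{test} does not let you argue that a tester for $\calP^{(\ell)}$ learns nothing useful from sub-block queries: a tester could make a few queries into each of many blocks and, without deciding membership of any single block, still accumulate partial information about the encoded per-block values and hence about the outer constraint. The paper's fix is to carry through the induction a strictly stronger property, the \emph{shared-secret} / $\PCUSS$ guarantee (Definition~\ref{def:PCUSS}): any $o(m)$ queries to a random encoding of $w$ are distributed identically regardless of $w$. With this in hand, the inner blocks are information-theoretically opaque, and the outer layer is instantiated not with an arbitrary linear-distance code but with degree-$|\F|/2$ polynomials (Shamir secret sharing), so that even the fully-revealed values of any $o(|\F|)$ blocks are jointly uniform. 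These two ingredients are precisely what make the indistinguishability argument (Lemmas~\ref{lem:Base-id-distributed}--\ref{lem:iter-Queries}) go through; your proposed ``hybrid/reduction to a tester for $\calP^{(0)}$ or for the outer code'' does not work without them, because neither sub-tester is actually being simulated when the adversary spreads queries thinly across blocks.
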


We remark that
all such maximally hard properties cannot have constant-query $\PCPP$
proof-systems with a \emph{sub-linear} length proof
string (see Proposition~\ref{prop:NoSublinearPCPP-HardProperties}), leaving only a small gap of $\log^{(\ell)}(n)$ on the proof
length in \cref{informalthm:main}.
\medskip

Beyond demonstrating that $\PCPP$s with extremely short proofs exist
for some hard properties, we use \cref{informalthm:main} to derive
several applications. We proceed
to describe these applications next.

\paragraph{Tolerant Testing.}
Recall that property testing (very much like $\PCPP$s) deals with
solving approximate decision problems. A tester for a property $\calP$
is an algorithm that given a sublinear number of queries to its input
$x$, should accept (with high probability) if $x \in \calP$ and reject
if $x$ is \emph{far} from $\calP$ (where, unlike $\PCPP$s, the tester is not provided with any proof).

The standard setting of property testing is arguably fragile,
since the testing algorithm is only guaranteed to accept all functions
that exactly satisfy the property. In various settings and
applications, accepting only inputs that exactly have a certain
property is too restrictive, and it is more beneficial to distinguish
between inputs that are close to having the property, and those that
are far from it. To address this question, Parnas, Ron and
Rubinfeld~\cite{parnas2006tolerant} introduced a natural
generalization of property testing, in which the algorithm is required
to accept functions that are close to the property. Namely, for parameters $0\le\epsilon_0<\epsilon_1\le1$, an
\emph{$(\eps_0,\eps_1)$-tolerant testing algorithm} is given an oracle access to the
input, and is required
to determine (with high probability) whether a given input is
$\epsilon_0$-close to the property or whether it is $\epsilon_1$-far
from it. As observed in~\cite{parnas2006tolerant}, any standard
testing algorithm whose queries are uniformly (but not necessarily
independently) distributed, is inherently tolerant to some
extent. Nevertheless, for many problems, strengthening the tolerance
requires applying advanced methods and devising new algorithms (see
e.g.,~\cite{FN07,KS09,CGR13,BMR16,BCELR18}).

It is natural to ask whether tolerant testing is strictly harder than standard testing. This question was
explicitly studied by Fischer and Fortnow~\cite{FF06}, who used
$\PCPP$s with polynomial size proofs to show that there exists a
property $\calP \subseteq \{0,1\}^n$ that admits a tester with constant query
complexity, but such that every tolerant tester for $\calP$
has query complexity $\Omega(n^c)$ for some $0<c<1$. Using modern quasilinear $\PCPP$s~\cite{BS08, Din07} in combination with the techniques of~\cite{FF06} it is possible to construct a property demonstrating a better separation, of constant query complexity for standard testing versus $\Omega(n / \polylog n)$ for tolerant testing.

Using \cref{informalthm:main} we can obtain an improved separation
between testing and tolerant testing:

{\begin{theorem}[informal restatement of Theorem~\ref{thm:Tol-Sep-Notintro}]\label{thm:Tol-Sep-Intro}
  For any constant integer $\ell\in \N$, there exist a property of
  boolean strings $\calP\subseteq\{0,1\}^n$ and a constant $\eps_1\in (0,1)$ such that
  $\calP$ is $\eps_0$-testable for any $\eps_0>0$ with a number of queries\footnote{The (constant) query complexity of the intolerant $\epsilon$-tester has the same asymptotic bound as in Theorem \ref{informalthm:main}.} independent of
  $n$, but for any $\eps_0\in(0,\eps_1)$, every $(\eps_0,\eps_1)$-tolerant tester for $\calP$ requires $\Omega(n/ \polylogell n)$ many queries.
\end{theorem}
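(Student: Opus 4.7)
The plan is to instantiate the Fischer--Fortnow~\cite{FF06} paradigm with the nearly-linear-length $\PCPP$ of Theorem~\ref{thm:PCUSS-construction}: bootstrap the short proof system into a property whose tester/tolerant-tester gap is almost linear. Fix $\ell\in\N$ and let $\calP^{(\ell)}\subseteq\{0,1\}^n$ be the maximally hard property of Theorem~\ref{thm:PCUSS-construction}, with its constant-query $\PCPP$ verifier $V$ and a valid proof map $\Pi:\calP^{(\ell)}\to\{0,1\}^m$ of length $m = O(n\cdot \log^{(\ell)}n)$. Choose a repetition factor $t = \Theta(\log^{(\ell)}n)$ so that $tn$ dominates $m$ by a large constant, let $N = tn+m = \Theta(n\log^{(\ell)}n)$, and define the separating property
\[
  \calP \;=\; \{(x^{\,t},\Pi(x)) \in \{0,1\}^{N} \,:\, x\in\calP^{(\ell)}\},
\]
where $x^{\,t}$ denotes $t$ concatenated copies of $x$.

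For the intolerant tester, I would perform block-consistency sampling across the $t$ copies of $x$ and simulate $V$ on a representative copy together with the proof portion. Completeness is immediate. Soundness follows from a routine argument: any string $\eps$-far from $\calP$ either has its $t$ input blocks $\Omega(\eps)$-inconsistent (caught by the equality samples) or has a plurality-vote block that is $\Omega(\eps)$-far from $\calP^{(\ell)}$, which $V$ rejects with constant probability by its $\PCPP$ soundness guarantee. The total query complexity remains $O_{\ell,\eps}(1)$, matching the bound of Theorem~\ref{informalthm:main}.

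For the lower bound, I would reduce tolerant testing of $\calP^{(\ell)}$ to tolerant testing of $\calP$. Given oracle access to $y\in\{0,1\}^n$, simulate a candidate tolerant tester $T$ for $\calP$ on the virtual input $(y^{\,t},0^m)$, answering queries into any $x$-block by querying $y$ at the corresponding position and answering proof queries by $0$. If $y$ is $\eps_0$-close to $\calP^{(\ell)}$, let $x^*\in\calP^{(\ell)}$ achieve this distance; then $(x^{*\,t},\Pi(x^*))\in\calP$ is within relative distance $\eps_0\cdot tn/N + m/N$ of $(y^{\,t},0^m)$, which is strictly below $\eps_1$ once $t$ is chosen so that $m/N$ is a sufficiently small constant. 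Conversely, if $y$ is $\eps_1$-far from $\calP^{(\ell)}$, every element of $\calP$ differs from $(y^{\,t},0^m)$ in at least $\eps_1 tn$ input coordinates, hence in relative distance at least $\eps_1\cdot tn/N = \Omega(\eps_1)$. A $q$-query tolerant tester for $\calP$ with appropriately chosen constant tolerance parameters therefore yields a tolerant tester for $\calP^{(\ell)}$ with $q$ queries, so the $\Omega(n)$ lower bound for $\calP^{(\ell)}$ forces $q = \Omega(n) = \Omega(N/\log^{(\ell)}N)$.

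The main obstacle is the simultaneous constraint on the repetition factor $t$: it must be large enough that $m/N$ falls below the tolerance gap $\eps_1-\eps_0$ (so that the close case of the reduction genuinely produces a string close to $\calP$) and yet small enough to preserve $N/n = O(\log^{(\ell)}n)$, which is precisely what yields the $\polylogell n$ factor in the final bound. A secondary subtlety is verifying that $\calP^{(\ell)}$ is hard to \emph{tolerantly} test with a constant tolerance gap rather than merely hard to test intolerantly; I expect this to follow from the random-coding flavor of the construction underlying Theorem~\ref{thm:PCUSS-construction}, or otherwise from a mild strengthening that concatenates $\calP^{(\ell)}$ with a linear-distance code so that $\eps_1$-far instances remain $\Omega(1)$-far from all members.
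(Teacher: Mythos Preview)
Your construction and both halves of the argument match the paper's (Lemmas~\ref{lem:Tol-Sep-Easy} and~\ref{lem:Tol-Sep-Hard}): the repetition-plus-proof property, block-consistency sampling together with the $\PCPP$ verifier for the upper bound, and the zero-proof reduction for the lower bound.

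The one place you overcomplicate is your ``secondary subtlety''. You do not need tolerant hardness of the base property. The paper's reduction (and yours, once you specialize the base-side tolerance to $0$) is from \emph{intolerant} $2\eps_1$-testing of $\calE^{(\ell)}_{\F,0}$: YES instances are exactly in the property, NO instances are $2\eps_1$-far, and this is precisely what the hard distribution behind Theorem~\ref{thm:PCUSS-construction} already supplies, since the $\Dno^{(\ell)}$ instances are $\Theta(4^{-\ell})$-far by Lemma~\ref{lem:BaseDistance} while being indistinguishable from $\Dyes^{(\ell)}$ by Lemma~\ref{lem:StrongerLB}. No concatenation with an auxiliary distance code is needed. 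A smaller quantitative point: to get the conclusion for \emph{every} $\eps_0\in(0,\eps_1)$ with a single property, you need the proof fraction $m/N$ to tend to $0$, not merely to be a small constant; with $t=\Theta(\log^{(\ell)}n)$ and $m=\Theta(n\log^{(\ell)}n)$ you only get a constant. The paper resolves this by taking the repetition factor to be $(m/n)\cdot\log^{(\ell)}n$, so that the proof occupies a $1/(\log^{(\ell)}n+1)$ fraction of the input while $N/n$ remains $\polylogell n$.
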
}

\paragraph{Erasure-Resilient Testing.} Another variant of the property
testing model is the \emph{erasure-resilient testing} model. This
model was defined by Dixit et.~al.~\cite{DRTV18} to address cases
where data cannot be accessed at some domain points due to privacy
concerns, or when some of the values were adversarially erased. More
precisely, an $\alpha$-erasure-resilient $\eps$-tester gets as input parameters
$\alpha,\epsilon\in(0,1)$, as well as oracle access to a function $f$,
such that at most an $\alpha$ fraction of its values have been
erased. The tester has to accept with high probability if there is a
way to assign values to the erased points of $f$ such that the resulting
function satisfies the desired property. The tester has to reject with
high probability if for every assignment of values to the erased
points, the resulting function is still $\eps$-far from the desired
property.

Similarly to the tolerant testing scenario, $\PCPP$s were also used in \cite{DRTV18} to
show that there exists a property of boolean strings of length $n$
that has a tester with query complexity independent of $n$, but for any constant
$\alpha>0$, every $\alpha$-erasure-resilient
tester is required to query $\Omega(n^c)$ many bits for some $c>0$,
thereby establishing a separation between the models. Later,
in~\cite{RRV19} $\PCPP$ constructions were used to provide a
separation between the erasure-resilient testing model and the
tolerant testing model.

Similarly to the tolerant testing case, we use Theorem~\ref{informalthm:main} to prove a stronger separation between the erasure-resilient testing model and the standard testing model.

\medskip


{\begin{theorem}[informal restatement of Theorem~\ref{thm:ER-Sep-notintro}]\label{thm:ER-Sep-Intro}
  For any constant integer $\ell\in\N$, there exist a property of
  boolean strings $\calP\subseteq\{0,1\}^n$ and a constant $\eps_1\in(0,1)$ such that
  $\calP$ is $\eps$-testable for any $\eps>0$ with number of queries\footnote{Again, the query complexity of the (non erasure resilient) $\epsilon$-tester has the same asymptotic bound as in Theorem \ref{informalthm:main}.} independent of
  $n$, but for any $\alpha=\Omega(1/\log^{(\ell)}n)$ and $\eps\in (0,\eps_1)$ such that $\eps<1-\alpha$, any $\alpha$-erasure-resilient $\eps$-tester is required to query
  $\Omega(n/\polylogell n)$ many bits.
\end{theorem}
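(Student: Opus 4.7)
I plan to prove this by combining the standard Fischer--Fortnow style reduction (adapted by Dixit et al.~\cite{DRTV18} for erasures) with the short-PCPP guarantee of \cref{informalthm:main}.

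Starting from the property $\calP_0 \subseteq \{0,1\}^{n_0}$ obtained from \cref{informalthm:main}---maximally hard for testing (requiring $\Omega(n_0)$ queries) yet equipped with a constant-query PCPP $\Pi$ of length $m = O(n_0 \log^{(\ell)} n_0)$---I build the separating property by replicating the input and appending the PCPP:
$$\calP = \{\, y^k \circ \Pi(y) \,:\, y \in \calP_0 \,\} \subseteq \{0,1\}^N, \qquad N = k n_0 + m.$$
Here $y^k$ denotes the concatenation of $k$ consecutive copies of $y$, and I choose the repetition parameter $k = \Theta((\log^{(\ell)} n_0)^2)$ so that the proof block occupies exactly a $\Theta(1/\log^{(\ell)} N)$ fraction of the total length. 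The standard constant-query $\eps$-tester for $\calP$ first performs a repetition self-consistency check on the $y^k$ block (to detect strings deviating from the product structure) and then runs the PCPP verifier, serving input queries from random copies and proof queries from the appended proof block. PCPP soundness, together with the fact that the proof block is a vanishing fraction of $N$, implies that every $w$ which is $\eps$-far from $\calP$ is rejected with constant probability using a number of queries that depends on $\eps$ but is independent of $N$.

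For the erasure-resilient lower bound I consider an adversary that erases precisely the proof block, using $m/N = \Theta(1/\log^{(\ell)} N)$ of the erasure budget, which fits inside any $\alpha = \Omega(1/\log^{(\ell)} N)$. I then set up two distributions on the unerased coordinates by drawing $y$ uniformly from $\calP_0$ in the YES case, and uniformly from a constant-density subset of strings $\eps_1$-far from $\calP_0$ in the NO case. Under YES, completing the erasures with $\Pi(y)$ yields an element of $\calP$, so every valid tester must accept; under NO, already the $y^k$ block contributes distance $k \eps_1 n_0 / N = \eps_1 (1 - o(1))$, so no completion of the proof block can be closer than $\Omega(\eps_1)$ to $\calP$, and the tester must reject. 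A $Q$-query erasure-resilient tester then translates into a standard $Q$-query tester for $\calP_0$ on $y$: each query either lands in the $y^k$ block (revealing a single bit of $y$ via its index modulo $n_0$) or in the erased proof (returning $\perp$ in both distributions). Hardness of $\calP_0$ therefore forces $Q = \Omega(n_0) = \Omega\bigl(N / (\log^{(\ell)} N)^2\bigr) = \Omega(N / \polylogell N)$.

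The main obstacle will be the parameter bookkeeping: verifying that the standard tester's proximity analysis absorbs the $m/N = o(1)$ slack contributed by the proof block (so that any string $\eps$-far from $\calP$ yields an input which is $\Omega(\eps)$-far from $\calP_0$ once the self-consistency pass is enforced), and checking that the set of strings $\eps_1$-far from $\calP_0$ carries enough mass to support the NO distribution---which follows from the fact that a property requiring $\Omega(n_0)$ queries to test cannot be too dense in $\{0,1\}^{n_0}$. The side condition $\eps < 1 - \alpha$ in the statement simply ensures that the target distance parameter is compatible with the erased fraction.
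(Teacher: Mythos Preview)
Your proposal is correct and follows essentially the same route as the paper: define the separating property as many repeated copies of the hard-to-test property concatenated with its short $\PCPP$ proof, test it in the standard model via a repetition self-consistency check plus the $\PCPP$ verifier, and obtain the erasure-resilient lower bound by erasing exactly the proof block and reducing to standard testing of the hard property. The paper phrases the lower bound as a direct black-box reduction (any erasure-resilient tester for $\calQ^{(\ell)}$ yields a standard tester for $\calE^{(\ell)}_{\F,0}$) rather than via explicit YES/NO distributions, which makes your aside about ``enough mass for the NO distribution'' unnecessary, but the substance is identical.
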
}

{\paragraph{Secret Sharing applications.} 
	As an additional application of our techniques we also obtain a new
  type of \emph{secret sharing scheme}. Recall that in a secret
  sharing scheme, a secret value $b \in \{0,1\}$ is shared between $n$
  parties in such a way that only an authorized subset of the users
  can recover the secret. We construct a secret sharing scheme in
  which no subset of $o(n)$ parties can recover the secret and yet
  it is possible for each one of the parties to recover the secret, if
  given access to a $\PCPP$-like proof, with the guarantee that no
  matter what proof-string is given, most parties will either recover
  $b$ or reject.

  We obtain such a secret sharing scheme through a notion called
  \emph{Probabilistically Checkable Unveiling of a Shared Secret
    ($\PCUSS$)}, which will be central in our work. This notion is
  loosely described in Subsection~\ref{subsec:Techniques} and formally
  defined in Section~\ref{sec:PCUandPCUSS}.  }

\subsection{Techniques}\label{subsec:Techniques}
Central to our construction are {(univariate)} polynomials over a
{finite} field $\F$. A basic fact is that a random polynomial
$p : \F \to \F$ of degree (say) $|\F|/2$, evaluated at any set of at
most $|\F|/2$ points, looks exactly the same as a totally random
function $f : \F \to \F$. This is despite the fact that a random
function is very far (in Hamming distance) from the set of low degree
polynomials. Indeed, this is the basic fact utilized by Shamir's
secret sharing scheme \cite{shamir1979share}.

Thus, the property of being a degree-$|\F|/2$ univariate polynomial is a hard problem
to decide for any tester, in the sense that such a tester must make
$\Omega(|\F|)$ queries to the truth table of the function in order to
decide. Given that, it seems natural to start with this property in
order to prove \cref{informalthm:main}. Here we run into two
difficulties. First, the property of being a low degree polynomial is defined over a large alphabet,
whereas we seek a property over boolean strings. Second, the best
known $\PCPP$s for this property have quasi-linear length \cite{BS08},
which falls short of our goal.

To cope with these difficulties, our approach is to use composition, or
more accurately, an iterated construction. The main {technical}
contribution of this paper lies in the mechanism enabling this
iteration. More specifically, rather than having the property contain
the explicit truth table of the low degree polynomial $p$, we would
like to use a more redundant representation for encoding each value
$p(\alpha)$. This encoding should have several properties:

\begin{itemize}
\item It must be the case that one needs to read (almost) the entire
  encoding to be able to decode $p(\alpha)$. This feature of the
  encoding, which we view as a secret-sharing type of property, lets
  us obtain a hard to test property over boolean strings.

\item The encoding need not be efficient, and in fact it will be made long enough to eventually subsume the typical length of a $\PCPP$ proof-string for the low degree property, when calculated with respect to an {\em unencoded} input string.

\item Last but not least, we need the value to be decodable using very few queries, when given access to an auxiliary $\PCP$-like proof string. This would allow us to ``propagate'' the $\PCPP$ verification of the property across iterations.

\end{itemize}

In more detail, we would like to devise a (randomized) encoding of
strings in $\{0,1\}^k$ by strings in $\{0,1\}^m$. The third requirement
listed above can be interpreted as saying that given oracle access to
$v\in\{0,1\}^m$ and explicit access to a value $w\in\{0,1\}^k$, it
will be possible verify that $v$ indeed encodes $w$ using a
$\PCPP$-like scheme, i.e. by providing a proof that can be verified
with a constant number of queries. We
refer to this property as a \emph{probabilistically checkable
  unveiling ($\PCU$)}\footnote{In fact, we will use a stronger variant where the access to $w$ is also restricted.}. Note that in our setting a single
value $w$ may (and usually will) have more than one valid encoding.

Going back to the first requirement of the encoding, we demand that without a
proof, one must query at least $\Theta(m)$ bits of $v$ to obtain
\emph{any} information about the encoded $w$, or even discern that $v$
is indeed a valid encoding of some value. Given this combination of requirements, we refer to the verification procedure as a
\emph{Probabilistically Checkable Unveiling of a Shared Secret
  ($\PCUSS$)}.


Low degree polynomials can be used to obtain a $\PCUSS$ based on
Shamir's secret sharing scheme. More specifically, to encode a $k$ bit
string $w$, we take a random polynomial whose values on a subset
$H \subseteq \F$ are exactly equal to the bits of $w$. However, we
provide the values of this polynomial only over the sub domain
$\F\setminus H$. Then, the encoded value is represented by the
(interpolated) values of $g$ over $H$, which admit a $\PCU$ scheme. On
the other hand, the ``large independence'' feature of polynomials
makes the encoded value indiscernible without a a supplied proof
string, unless too many of the values of $g$ over $\F\setminus H$ are
read, thus allowing for a $\PCUSS$.

This construction can now be improved via iteration. Rather than
explicitly providing the values of the polynomial, they will be
provided by a $\PCUSS$ scheme. Note that the $\PCUSS$ scheme that we
now need is for strings of a (roughly) exponentially smaller
size. The high level idea is to iterate this construction $\ell$ times
to obtain the $\ell$ iterated log function in our theorems.

At the end of the recursion, i.e., for the smallest blocks at the
bottom, we utilize a linear-code featuring both high distance and high
dual distance, for a polynomial size $\PCUSS$ of the encoded
value. This is the only ``non-constructive'' part in our construction,
but since the relevant block size will eventually be less than
$\log\log(n)$, the constructed property will still be uniform with
polynomial calculation time (the exponential time in
$\poly(\log\log(n))$, needed to construct the linear-code matrix,
becomes negligible).

Our $\PCUSS$ in particular provides a property that is hard to test (due 
to its shared secret feature), and yet has a near-linear $\PCPP$ through 
its unveiling, thereby establishing Theorem~\ref{informalthm:main}. We utilize this property for separation results in a 
similar manner to \cite{FF06} and \cite{DRTV18}, by considering a 
weighted version of a ``$\PCPP$ with proof'' property, where the proof part 
holds only a small portion of the total weight. The $\PCPP$ proof part enables a constant query test, whereas if the $\PCPP$ proof is deleted, efficient testing is no longer possible.

\subsection{Related work}

\paragraph{Short $\PCPP$s.}
For properties which can be verified using a circuit of size $n$,
\cite{BGHSV06} gave $\PCPP$ constructions with proof length
$n\cdot \exp(\poly(\log\log n))$ and with a query complexity of
$\poly(\log\log n)$, as well as slightly longer proofs with constant
query complexity. Later, Ben-Sasson and Sudan~\cite{BS08} gave
constructions with quasilinear size proofs, but with slightly higher
query complexity. The state of the art construction is due to
Dinur~\cite{Din07} who, building on \cite{BS08}, showed a $\PCPP$
construction with proof length that is quasilinear in the circuit size
and with constant query complexity. In a recent work
Ben~Sasson~\etal{}~\cite{BCGRS17} constructed an \emph{interactive}
version of $\PCPP$s \cite{BCS16,RRR16} of strictly linear length and
constant query complexity.

\paragraph{Tolerant Testing.}
The tolerant testing framework has received significant attention in the past decade. Property testing of dense graphs, initiated by \cite{GGR98}, is inherently tolerant by the canonical tests of Goldreich and Trevisan~\cite{GT03}.
 Later, Fischer and Newman \cite{FN07} (see also \cite{BF_CCC18}) showed that every testable (dense) graph property admits a tolerant testing algorithm for \emph{every} $0<\epsilon_0<\epsilon_1<1$, which implies that $O(1)$ query complexity testability is equivalent to distance approximation in the dense graph model.
Some properties of boolean functions were also studied recently in the tolerant testing setting. In particular, the properties of being a \emph{$k$-junta} (i.e. a function that depends on $k$ variables) and being \emph{unate} (i.e., a function where each direction is either monotone increasing or monotone decreasing)~\cite{BCELR18,LW19,DMN19}.

\paragraph{Erasure-resilient Testing.}
For the erasure resilient model, in addition to the separation between that model and the standard testing model, \cite{DRTV18} designed efficient erasure-resilient testers for important properties, such as monotonicity and convexity. Shortly after, in~\cite{RRV19} a separation between the erasure-resilient testing model and the tolerant testing model was established. The last separation requires an additional construction (outside $\PCPP$s), which remains an obstacle to obtaining better than polynomial separations.

\section{Preliminaries}
We start with some notation and central definitions. For a set $A$, we let $2^A$ denote the power-set of $A$. For two strings
$x,y\in\{0,1\}^*$ we use $x\sqcup y$ to denote string
concatenation. 

For an integer $k$, a field $\F=\mathrm{GF}(2^k)$ and $\alpha\in\F$,
we let $\qu{\alpha}\in\{0,1\}^k$ denote the binary representation of
$\alpha$ in some canonical way.  

\noindent For two sets of strings $A$ and
$B$ 
we use $A\sqcup B$ to denote the set
$\{a\sqcup b\;\mid \; a\in A,\; b\in B \}$. For a collection of sets
$\{A(d)\}_{d\in D}$ we use $\bigsqcup_{d\in D}A(d)$ to denote the set
of all possible concatenations $\bigsqcup_{d\in D}a_d$, where $a_d\in A(d)$
for every $d \in D$. 

Throughout this paper we use boldface letters to denote random variables, and assume a fixed canonical ordering over the elements in all the sets we define.  {For a set $D$, we write $\bv\sim D$ to denote a random variable resulting from a uniformly random choice of an element $v\in D$.}

\subsection{Error correcting codes and polynomials over finite fields}

The relative Hamming distance of two strings $x,y \in \Sigma^n$ is
defined as
$\dist(x,y) = \frac{1}{n} \cdot |\{i\in [n]\mid x_i\neq y_i \}|$. For a
string $x \in \Sigma^n$ and a non-empty set $S \subseteq \Sigma^n$, we
define $\dist(x,S) = \min_{y \in S} \dist(x,y)$. The following plays
a central role in many complexity-related works, including ours.

\begin{definition}
 A \emph{code} is an injective function
$C : \Sigma^k \to \Sigma^n$. If $\Sigma$ is a finite field and $C$ is
a linear function (over $\Sigma$), then we say that $C$ is a
\emph{linear code}.  The \emph{rate} of $C$ is defined as $k/n$, whereas the
\emph{minimum relative distance} is defined as the minimum over all
distinct $x,y \in \Sigma^k$ of $\dist(C(x),C(y))$.

An \emph{$\epsilon$-distance code} is a code whose minimum relative distance
is at least $\epsilon$. When for a fixed $\epsilon>0$ we have a family of
$\epsilon$-distance codes (for different values of $k$), we refer to its members
as \emph{error correcting codes}.
\end{definition}

In this work we use the fact that efficient codes with constant
rate and constant relative distance exist.  Moreover,
there exist such codes in which membership can be decided by a
quasi-linear size Boolean circuit.

\begin{theorem}[see e.g., \cite{Spi96}]\label{thm:spiel}
  There exists a linear code $\Spiel:\zo^k \to \{0,1\}^{100 k}$ with constant relative distance, for which membership
  can be decided by a $k \cdot \polylog{k}$ size Boolean circuit.
\end{theorem}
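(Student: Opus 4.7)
The plan is to invoke Spielman's construction of asymptotically good linear codes with linear-time encoders. Spielman builds an explicit family $\{C_k:\{0,1\}^k\to\{0,1\}^{n_k}\}$ of linear codes by recursively composing error-reducing and error-adding gadgets along a sequence of constant-degree bipartite expander graphs; the resulting family has constant rate (tunable to any value strictly below $1$) and constant minimum relative distance, and, crucially, each $C_k$ is computed by an explicit Boolean circuit of size $O(n_k)$, built from XOR gates wired through the fixed expander structure. Fixing the rate to $1/100$ yields a linear code $\Spiel:\{0,1\}^k\to\{0,1\}^{100k}$ with constant relative distance.

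Next, I would reduce membership to encoding via the systematic form. Spielman's recursion naturally produces codes in systematic form, so I may write $\Spiel(x)=x\sqcup R(x)$ where $R:\{0,1\}^k\to\{0,1\}^{99k}$ is the linear redundancy map, computed by an $O(k)$-size circuit. To decide membership of $y\in\{0,1\}^{100k}$ in $\mathrm{Image}(\Spiel)$, parse $y=(y_1,y_2)$ with $y_1\in\{0,1\}^k$ and $y_2\in\{0,1\}^{99k}$, compute $R(y_1)$ using Spielman's encoder circuit, and output the AND of the bitwise equalities $R(y_1)_i=(y_2)_i$ over $i\in[99k]$. This uses $O(k)$ gates for the encoder, $O(k)$ gates for the bitwise comparison and the final AND, for a total of $O(k)$, which is comfortably within $k\cdot\polylog k$.

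A couple of small points to handle carefully: first, to hit rate exactly $1/100$, one either tunes Spielman's parameters directly (the construction supports any constant rate below $1$) or pads codewords with a constant fraction of fixed zero bits, which preserves both linearity and the constant relative distance up to a constant factor. Second, one should verify that the cited ``linear-time encoder'' really is realized as an $O(k)$-size circuit rather than merely an $O(k)$-time uniform algorithm; this is explicit in Spielman's construction, since the encoder is described as a shallow, explicitly wired XOR circuit.

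The main (and essentially only) obstacle here is invocation rather than content: one just has to commit to a specific linear-time-encodable asymptotically good code construction and check that ``linear-time encoding'' transfers to ``linear-size membership circuit'' through the systematic-form trick. Any alternative near-linear construction (e.g., concatenated Reed--Solomon codes with FFT-based encoding, which directly give $O(k\log k)$-size circuits) would also suffice for the weaker $k\cdot\polylog k$ bound stated, so the theorem follows with room to spare.
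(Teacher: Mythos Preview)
Your proposal is correct, but note that the paper does not actually prove this statement: it is stated as a black-box citation to \cite{Spi96}, with no accompanying argument. The paper only remarks afterward that Spielman's code in fact has better rate and is also linear-time decodable, neither of which is needed here.

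Your write-up is thus a correct unpacking of what the citation is meant to convey: Spielman's codes are linear, asymptotically good, and have linear-size encoding circuits, and the systematic-form trick turns linear-size encoding into linear-size membership testing. This is exactly the intended content behind the reference, so there is nothing to compare on the level of proof strategy.
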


Actually, the rate of the code in \cite{Spi96} is significantly
better, but since we do not try to optimize constants, we
use the constant $100$ solely for readability. In addition, the code described
in~\cite{Spi96} is  \emph{linear time decodeable}, but we do not
make use of this feature throughout this work.

\medskip We slightly abuse notation, and for a finite field $\F$ of
size $2^k$, view the encoding given in \cref{thm:spiel} as
$\Spiel:\F \to \{0,1\}^{100k}$, by associating $\{0,1\}^k$ with $\F$ in
the natural way. Note that for $f:\F \to \F$, it holds that
$\qu {f(\beta)}\in\{0,1\}^{k}$ for every $\beta \in \F$, and therefore
$\Spiel(f(\beta))\in\{0,1\}^{100k}$. We slightly abuse
notation, and for a function $f: \F \to \F$ we write $\Spiel(f)$ to
denote the length $100k \cdot 2^k$ bit string
$\bigsqcup_{\beta\in \F}\Spiel(f(\beta))$ (where we use the canonical
ordering over $\F$). 


\begin{definition} Let $\calC_\F$ denote the set of polynomials
	$g:\F\to \F$ such that $\deg({g})\le \frac{|\F |}{2}$.
\end{definition}
The following lemma of
\cite{horowitz1972fast}, providing a fast univariate interpolation,  will be an important tool in this work.

\begin{lemma}[\cite{horowitz1972fast}]\label{lem:FastInterpolation}
	Given a set of pairs $\{(x_1,y_1),\ldots,(x_r,y_r)\}$ with all $x_i$
	distinct, we can output the coefficients of $p(x)\in \F[X]$ of
	degree at most $r-1$ satisfying $p(x_i)=y_i$ for all $i\in[r]$, in
	$O(r \cdot \log^3(r))$ additions and multiplications in
	$\F$.

\end{lemma}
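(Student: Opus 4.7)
The plan is to follow the classical divide-and-conquer strategy based on the Lagrange interpolation formula and the \emph{subproduct tree}. Recall that Lagrange's formula writes
\[
p(x) \;=\; M(x)\cdot\sum_{i=1}^{r} \frac{y_i}{M'(x_i)}\cdot \frac{1}{x-x_i},
\qquad M(x)=\prod_{i=1}^{r}(x-x_i),
\]
where $M'(x_i)=\prod_{j\neq i}(x_i-x_j)$. First I would build a balanced binary \emph{subproduct tree} whose leaves are the linear factors $x-x_i$ (in the canonical ordering of the $x_i$) and whose internal nodes store the product of the polynomials at their children; the root stores $M(x)$. Using fast polynomial multiplication (FFT-based), the total work spent at level $j$ from the bottom is $O(r\log r)$, and since there are $O(\log r)$ levels the entire tree is built in $O(r\log^{2}r)$ field operations.

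Next I would differentiate $M$ to obtain $M'$ in $O(r)$ operations and then use the standard \emph{fast multipoint evaluation} procedure driven by the same subproduct tree: descending the tree, at each internal node one reduces the current polynomial modulo the two children-polynomials using fast polynomial division, and at the leaves one reads off the values $M'(x_1),\ldots,M'(x_r)$. Again the cost is $O(r\log r)$ per level and $O(r\log^{2}r)$ in total. After computing the scalars $c_i=y_i/M'(x_i)$ in linear time, I would recombine them by traversing the subproduct tree upwards: at an internal node whose children hold the subproduct polynomials $M_L,M_R$ (from the precomputed tree) and the interpolating polynomials $P_L,P_R$ for their respective subsets, I set $P := P_L\cdot M_R + P_R\cdot M_L$. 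By induction this produces, at the root, the unique polynomial $p$ of degree at most $r-1$ satisfying $p(x_i)=y_i$ for all $i$, and again the total work is dominated by the polynomial multiplications, contributing $O(r\log^{2}r)$.

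The main obstacle is the cost of polynomial multiplication itself over an arbitrary finite field $\F$, where primitive roots of unity of the required orders need not exist. Plugging in classical FFT would give $O(r\log^{2}r)$ overall, but over a general $\F$ one invokes a Sch\"onhage-Strassen style construction that multiplies two degree-$d$ polynomials in $O(d\log d\log\log d)$ (or, in the cleaner $O(d\log^{2}d)$ variant used in~\cite{horowitz1972fast}), which inflates each of the three phases by an additional logarithmic factor and yields the claimed $O(r\log^{3}r)$ bound. All other ingredients—the subproduct tree, multipoint evaluation, and upward combination—are standard, and correctness follows directly from the Lagrange formula together with a straightforward induction on the tree.
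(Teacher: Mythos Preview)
The paper does not give its own proof of this lemma; it simply cites \cite{horowitz1972fast} and uses the result as a black box. Your sketch is correct and is precisely the classical subproduct-tree algorithm from that reference (and from standard treatments such as von zur Gathen--Gerhard), so there is nothing to compare.
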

%

\medskip
\noindent The next lemma states that a randomly chosen function $\blambda :\F \to \F$ is far from any low degree polynomial with very high probability.
\begin{lemma} \label{lem:rand-function-far-F}With probability at least $1-o(1)$, a uniformly random function $\blambda:\F \to\F$ is $1/3$-far from $\calC_{\F}$.
\end{lemma}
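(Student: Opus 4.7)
The plan is a straightforward first-moment / union-bound argument. Writing $q := |\F|$, I would first observe that $|\calC_\F| \le q^{q/2+1}$, since any polynomial of degree at most $q/2$ is determined by its $q/2+1$ coefficients in $\F$. Then, for each fixed $p \in \calC_\F$, the number of functions $\lambda : \F \to \F$ within relative Hamming distance $1/3$ of $p$ is at most
$$\sum_{i=0}^{\lfloor q/3 \rfloor} \binom{q}{i} (q-1)^i \;\le\; q \cdot \binom{q}{\lfloor q/3 \rfloor} \cdot q^{q/3},$$
since such $\lambda$ agree with $p$ on all but at most $q/3$ inputs, and can take any of at most $q$ values on the disagreement set.

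Next, a union bound over $\calC_\F$, divided by the total count $q^q$ of functions $\F \to \F$, would bound the probability that $\blambda$ is \emph{not} $1/3$-far from $\calC_\F$ by
$$\frac{q^{q/2+1} \cdot q \cdot \binom{q}{\lfloor q/3 \rfloor} \cdot q^{q/3}}{q^q} \;\le\; q^2 \cdot \frac{2^{H(1/3)\, q}}{q^{q/6}},$$
where $H(\cdot)$ denotes the binary entropy function and I use the standard estimate $\binom{q}{q/3} \le 2^{H(1/3) q}$ with $H(1/3) < 1$. Since $q^{q/6} = 2^{(q/6) \log_2 q}$ grows super-exponentially in $q$, while the numerator only grows exponentially, this ratio is $o(1)$ as $q \to \infty$, which is exactly the conclusion the lemma requires.

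The main thing to verify is bookkeeping: the degree budget $q/2$ together with the distance $1/3$ must leave enough slack. The numerator contributes $q^{q/2} \cdot q^{q/3} = q^{5q/6}$, so the denominator's $q^q$ leaves a free $q^{q/6}$ factor that easily dominates the binomial term $2^{H(1/3) q}$. I do not anticipate a conceptual obstacle here — the lemma is just a first-moment computation exploiting that $|\calC_\F|$ is vastly smaller than $|\F|^{|\F|}$, together with the fact that a distance-$1/3$ ball has volume much smaller than the full space $\F^\F$.
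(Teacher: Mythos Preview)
Your proposal is correct and follows essentially the same approach as the paper: bound the size of a Hamming ball of relative radius $1/3$ in $\F^\F$, multiply by $|\calC_\F| = |\F|^{|\F|/2+1}$, and observe that the product is $o(|\F|^{|\F|})$. The only cosmetic difference is that the paper bounds the ball volume by $\binom{|\F|}{|\F|/3}\cdot|\F|^{|\F|/3}\le (3e|\F|)^{|\F|/3}$ using $\binom{n}{k}\le(en/k)^k$, whereas you use the entropy bound $\binom{q}{q/3}\le 2^{H(1/3)q}$; both estimates leave ample slack against the $q^{q/6}$ surplus.
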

\begin{proof}Consider the size of a ball of relative radius $1/3$ around some function $\lambda:\F \to \F$ in the space of functions from $\F$ to itself. The number of points (i.e., functions from $\F\to\F$) contained in this ball is at most \[\binom{|\F |}{|\F |/3}\cdot |\F |^{|\F|/3}\le (3e|\F |)^{|\F |/3}.\]
	{By the fact that the size of $\calC_{\F}$ is $|\F|^{|\F|/2+1}$}, the size of the set of points that are at relative
	distance {at most} $1/3$ from any point in $\calC_\F$ is at most \[|\F
	|^{|\F |/2+1}\cdot (3e|\F |)^{|\F |/3}= o({|\F|^{|\F|}}).\]
	{The lemma follows by observing that there are
		$|\F|^{|\F|}$ functions from $\F$ to itself.}
\end{proof}

\medskip

\subsubsection{Dual distance of linear codes}

We focus here specifically on a linear code $C:\F^k\to\F^n$, 
and consider the linear subspace of its image,
$V_C=\{C(x):x\in\F^k\}\subseteq\F^n$. We define the {\em distance} of a linear space
as $\dist(V)=\min_{v\in V\setminus \{0^n\}}\dist(v,0^n)$, and note that in the case of $V$ being
the image $V_C$ of a code $C$, this is identical to $\dist(C)$. For a linear code, it helps to
investigate also \emph{dual distances}.

\begin{definition}
Given two vectors $u,v\in\F^n$, we define their \emph{scalar product} as
$u\cdot v=\sum_{i\in [n]}u_iv_i$, where multiplication and addition are
calculated in the field $\F$.
Given a linear space $V\subseteq\F^n$, its \emph{dual space} is the
linear space $V^{\bot}=\{u:\forall v\in V, u\cdot v=0\}$. In other
words, it is the space of vectors who are orthogonal to all members of
$V$.The \emph{dual distance} of the space $V$ is simply defined as $\dist(V^{\bot})$.
\end{definition}

For a code $C$, we define its \emph{dual distance}, $\dist^{\bot}(C)$, as
the dual distance of its image $V_C$. We call $C$ an \emph{$\eta$-dual-distance} code
if $\dist^{\bot}(C)\geq\eta$. The following well-known lemma is essential to us,
as it will relate to the ``secret-sharing'' property that we define later.

\begin{lemma}[See e.g., {\cite[Chapter $1$, Theorem $10$]{macwilliams1977theory}}]\label{lem:dual-to-secret}
  Suppose that $C:\F^k\to\F^n$ is a linear $\eta$-dual distance code,
  let $Q\subset [n]$ be any set of size less than $\eta \cdot n$, and
  consider the following random process for picking a function
  $\bu:Q\to\F$: Let $\bw\in\F^k$ be drawn uniformly at random, and set $\bu$
  be the restriction of $C(\bw)$ to the set $Q$. Then, the distribution
  of $\bu$ is identical to the uniform distribution over the set of all
  functions from $Q$ to $\F$.
\end{lemma}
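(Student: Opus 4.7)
My plan is to reduce the statement to showing that the linear projection map that restricts a codeword to its coordinates in $Q$ is surjective onto $\F^Q$; once that is established, the uniformity of $\bu$ follows immediately from the linearity of $C$.

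More concretely, let $V_C\subseteq\F^n$ denote the image of $C$, and define the linear restriction map $\pi_Q:V_C\to\F^Q$ by $\pi_Q(v)=v\restrict{Q}$. Since $\bw$ is uniform over $\F^k$ and $C$ is an injective linear map, $C(\bw)$ is uniform over $V_C$, so $\bu=\pi_Q(C(\bw))$ is a uniformly random element of the image $\pi_Q(V_C)$. If I can show that $\pi_Q(V_C)=\F^Q$, then all fibers of $\pi_Q$ have equal size $|V_C|/|\F|^{|Q|}$, and $\bu$ is uniform over $\F^Q$, which is exactly the conclusion.

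The main (and only nontrivial) step is therefore to prove surjectivity of $\pi_Q$, and this is where the dual distance hypothesis enters. The plan is to argue by duality: $\pi_Q(V_C)=\F^Q$ holds iff the dual space $\pi_Q(V_C)^{\bot}\subseteq\F^Q$ is trivial. Given any $y\in\pi_Q(V_C)^{\bot}$, let $\tilde y\in\F^n$ be the extension of $y$ obtained by placing zeros on all coordinates outside $Q$. Then for every $v\in V_C$, one has $\tilde y\cdot v = y\cdot\pi_Q(v)=0$, so $\tilde y\in V_C^{\bot}$. However, $\tilde y$ has Hamming support contained in $Q$, so its relative weight is at most $|Q|/n<\eta$. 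Since $\dist^{\bot}(C)\geq\eta$, the only element of $V_C^{\bot}$ of relative weight strictly less than $\eta$ is the all-zeros vector, so $\tilde y=0$ and hence $y=0$. This forces $\pi_Q(V_C)^{\bot}=\{0\}$, so $\pi_Q$ is surjective.

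The argument is essentially a repackaging of the standard MacWilliams-type duality between minimum distance and the existence of information sets; the only conceptual point to be careful about is the zero-extension trick that converts an element of $\pi_Q(V_C)^\bot$ into a short-support element of the full dual code $V_C^{\bot}$. Nothing else beyond basic linear algebra over $\F$ is needed, and no step relies on $\F$ being a prime field, so the lemma holds verbatim for the field $\F=\mathrm{GF}(2^k)$ used throughout the paper.
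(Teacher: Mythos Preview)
The paper does not give its own proof of this lemma; it only cites MacWilliams--Sloane as a reference. Your argument is correct and is precisely the standard proof: the zero-extension step converting an element of $\pi_Q(V_C)^{\bot}$ into a dual codeword supported on $Q$ is exactly the classical reason a linear code of dual distance $d$ is an orthogonal array of strength $d-1$ (equivalently, any set of fewer than $d$ coordinates of a uniformly random codeword is jointly uniform).
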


\subsection{Probabilistically checkable proofs of proximity ($\PCPP$)}
\label{subsec:PCPP-def}

{As described briefly in the introduction, a $\PCPP$ verifier for a property $\calP$ is given access to an input $x$ and a proof $\pi$, as well as a \emph{detection radius} $\eps>0$ and \emph{soundness error } $\delta>0$. The verifier should make a constant  number of queries (depending only on $\eps,\delta$) to the input $x$ and the proof $\pi$, and satisfy the following. If $x\in \calP$, then there exists $\pi$ for which the verifier should always accept $x$. If $\dist(x,\calP)>\eps$, the verifier should reject $x$ with probability at least $\delta$, regardless of the contents of $\pi$. More formally, we define the following.}

\begin{definition}[$\PCPP$] For $n\in\N$, let $\calP\subset \zo^n$ be a
  property of $n$-bit Boolean strings, and let $t\in \N$. We say that $\calP$ has a
  \emph{$q(\epsilon,\delta)$-query, length-$t$ Probabilistically
    Checkable Proof of Proximity ($\PCPP$) system} if the following
  holds: {There exists a verification algorithm $V$  that takes as input $\epsilon,\delta>0$ and
    $n \in \N$, makes a total of $q(\epsilon,\delta)$ queries on
    strings $w\in\zo^n$ and $\pi\in\zo^t$, and satisfies the
    following}:

	\begin{enumerate}
		\item{ (Completeness)} If $w\in \calP$, then there exists a proof $\pi={\Proof_\calP(w)}\in\zo^t$ such that for every $\epsilon,\delta>0$, the verifier $V$ accepts with probability $1$.
		\item {(Soundness) }If $\dist(w,\calP)>\eps$, then for every alleged proof $\pi\in\zo^t$, the verifier $V$ rejects with probability greater than $\delta$.
	\end{enumerate}
\end{definition}
Note that soundness is easy to amplify: Given a $\PCPP$ as above with parameters $\eps, \delta, t$ and query complexity $q(\eps, \delta)$, one can increase the soundness parameter to $1-\tau$ by simply running $\Theta(\log(1/\tau) / \delta)$ independent instances of the verification algorithm $V$, and rejecting if at least one of them rejected; the query complexity then becomes $\Theta(q(\eps, \delta) \cdot \log(1/\tau) / \delta)$, while the parameters $\eps$ and $t$ remain unchanged.

The following lemma, {establishing the existence of a quasilinear
  $\PCPP$ for any property $\calP$ that is verifiable in quasilinear
  time}, will be an important tool throughout this work.
\begin{lemma}[Corollary 8.4 in \cite{Din07}, see also
	\cite{GM07}] \label{lem:PCPP-Irit} Let $\calP$ be a property of
	Boolean strings which is verifiable by a size $t$ Boolean
	circuit. Then, there exists a length-$t'$ $\PCPP$ system
	$\calP$ with parameters $\eps,\delta > 0$, that makes at most
	$q(\epsilon,\delta)$ queries, where $t'=t\cdot \polylog{t}$.
\end{lemma}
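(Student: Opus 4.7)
The statement being asked about is essentially the end product of the Ben-Sasson--Sudan quasilinear $\PCPP$ construction together with Dinur's gap amplification, applied to the \texttt{CircuitEval} problem. My plan would be to reduce the general property $\calP$ to a canonical constraint satisfaction instance, construct a long-but-structured polynomial-based $\PCPP$ for it, and then repeatedly shrink the query complexity by composition.

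First, I would Cook-reduce $\calP$ to an instance of $\mathtt{CircuitEval}$: given that membership in $\calP$ is decided by a circuit $V_\calP$ of size $t$, the pair $(w,\pi)$ with $w\in\zo^n$ the input and $\pi$ an auxiliary witness of length $O(t)$ describing the values carried on the wires of $V_\calP$ is easy to check locally: every gate imposes a constant-size local constraint, so the whole circuit corresponds to a 3-CNF-like instance $\Phi$ of size $O(t)$, verifiable by a $t\cdot\polylog t$-size Boolean circuit. The task then reduces to giving a quasilinear $\PCPP$ for the assertion that $w$ is close to an input that can be extended to a satisfying assignment of $\Phi$, while preserving the proximity parameter up to a constant.

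Next, I would arithmetize $\Phi$ in the style of \cite{BS08}: encode the assignment (and the wire values) as the evaluation of a univariate polynomial of degree $d=\tilde O(t)$ over a binary field $\F$ of size $O(t\cdot\polylog t)$, so that the satisfiability of all local constraints translates to a single polynomial divisibility condition on a related low-degree polynomial. The Ben-Sasson--Sudan construction provides a quasilinear-length $\PCPP$ certifying such a proximity-to-Reed--Solomon-code statement, using an additive FFT / recursive polynomial decomposition; its length is $t\cdot\polylog t$ and its query complexity is $\polylog t$, with constant soundness for any fixed proximity parameter $\eps$. Amplification by $\Theta(\log(1/\delta))$ independent repetitions gives soundness $\delta$ at the cost of a multiplicative $\log(1/\delta)$ factor in queries, with the proof length unchanged.

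Finally, I would reduce the query complexity from $\polylog t$ to a constant depending only on $\eps$ and $\delta$ via Dinur's gap-amplification / proof composition \cite{Din07}. Concretely, iterate the ``gap doubling'' step a constant number of times on the constraint graph induced by the verifier, then compose with a constant-query inner $\PCPP$ for the (now constant-size) local verification predicate; each iteration multiplies the proof length by a fixed constant while squaring the rejection probability, so only $O(\log\log t)$ rounds are needed to reach constant query complexity. The overall proof length remains $t\cdot\polylog t$, and the number of queries becomes a function $q(\eps,\delta)$ independent of $t$. The main technical obstacle, and the source of essentially all the work in the cited references, is the quasilinear-length step for Reed--Solomon proximity: making every recursive layer lose only a $\polylog$ factor in length while preserving constant-rate soundness requires the careful additive-FFT-style recursion of \cite{BS08}, which is the real content of the lemma.
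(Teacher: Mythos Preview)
The paper does not prove this lemma at all: it is quoted verbatim as Corollary~8.4 of \cite{Din07} (with a pointer to \cite{GM07}) and used as a black box throughout. So there is no ``paper's own proof'' to compare against; your sketch is really an outline of the cited external results rather than of anything in this paper.

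That said, your outline is a faithful high-level summary of the \cite{BS08}+\cite{Din07} pipeline: reduce to $\mathtt{CircuitEval}$, arithmetize via univariate Reed--Solomon over a binary field, use the Ben-Sasson--Sudan additive-FFT recursion to get a quasilinear-length $\PCPP$ with $\polylog t$ queries, then apply Dinur's gap amplification/composition to bring the query complexity down to $q(\eps,\delta)$ while keeping the length $t\cdot\polylog t$. One small inaccuracy: the number of Dinur iterations needed is $O(\log\log t)$ only if you start from constant soundness and $\polylog$ queries and want to reach constant queries via composition; the actual argument in \cite{Din07} is phrased slightly differently (gap amplification on the constraint graph followed by a single composition step, iterated $O(\log\log t)$ times), but the effect on length is the same polylogarithmic blowup. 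None of this is needed for the present paper, which only invokes the statement.
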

Specifically, $q(\eps, \delta) = O(\eps^{-1})$ suffices for any $\delta < 0.99$.

As described briefly in the introduction, maximally hard properties cannot have a constant query $\PCPP$ proof systems with a sublinear length proof string.

\begin{proposition} \label{prop:NoSublinearPCPP-HardProperties}Let $\calP\subseteq \zo^n$ and $\eps>0$ be such that any $\eps$-tester for $\calP$ has to make $\Omega(n)$ many queries. Then, any constant query $\PCPP$ system for $\calP$ (where e.g.~$\delta=1/3$) must have proof length of size $\Omega(n)$.
\end{proposition}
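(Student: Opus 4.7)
The plan is to prove the contrapositive: given a constant-query $\PCPP$ for $\calP$ with proof length $t$, I will construct an ordinary $\eps$-tester for $\calP$ that uses only $O(t)$ queries to the input, and then invoke the hypothesized $\Omega(n)$ query lower bound for testing $\calP$ to conclude $t=\Omega(n)$. The central observation is that once the proof string is short enough, the tester can brute-force enumerate all $2^t$ candidate proofs internally, so only queries to the input $w$ actually need to go to the oracle.

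First I would, without loss of generality, make the $\PCPP$ verifier $V$ non-adaptive, paying only a constant (depending on $q$) blow-up in query complexity by reading every position that $V$ might visit along any of its at most $2^q$ possible computation paths. Next I would sample $k=\Theta(t)$ independent random coin strings $r_1,\dots,r_k$ for $V$, and pre-compute the union of all input positions that $V$ queries across these $k$ runs; this uses $O(q \cdot k)=O(t)$ queries to $w$. Finally, the tester would loop over every candidate $\pi\in\{0,1\}^t$ and simulate each of the $k$ runs of $V^{w,\pi}$ using the already-read bits of $w$ together with the explicitly written $\pi$; it accepts iff some $\pi$ makes all $k$ simulated runs accept.

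For the analysis, completeness is immediate from $\PCPP$ completeness: if $w\in\calP$ the honest proof $\pi^{\ast}=\Proof_\calP(w)$ satisfies $V^{w,\pi^{\ast}}(r)=1$ on every coin string $r$, so the tester accepts with probability $1$. Soundness follows from a union bound: when $\dist(w,\calP)>\eps$, for each fixed $\pi$ the verifier rejects with probability at least $\delta$ over its coins, so the probability that all $k$ runs accept under that $\pi$ is at most $(1-\delta)^k$; summing over the $2^t$ choices of $\pi$ gives total error at most $2^t(1-\delta)^k$, which is below $1/3$ once $k=\Theta(t/\delta)$.

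Putting these together: if $\calP$ admitted a constant-query $\PCPP$ with $t=o(n)$, the tester above would $\eps$-test $\calP$ using $O(t)=o(n)$ queries, contradicting the hypothesis, and hence $t=\Omega(n)$. The only mildly delicate step, which I would check carefully, is the adaptivity-to-nonadaptivity reduction, since the tester must commit to its query positions in $w$ before the enumeration over $\pi$ begins; but because $q$ is constant this costs only a constant factor, and the enumerate-and-amplify template is otherwise routine.
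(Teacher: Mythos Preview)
Your proposal is correct and follows essentially the same approach as the paper's proof: reduce to a non-adaptive verifier, amplify by running $\Theta(t)$ independent copies (the paper phrases this as boosting soundness to $1-2^{-t}/3$), reuse the same set of $O(t)$ input queries while enumerating all $2^t$ candidate proof strings, and finish with a union bound. The only differences are cosmetic, and your treatment of the adaptive-to-non-adaptive step is in fact slightly more explicit than the paper's.
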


\begin{proof} Suppose that there exists a $\PCPP$ for $\calP$ with $O(1)$ queries and proof length $t=o(n)$. Since the $\PCPP$ verifier has constant query complexity, we may assume that it is non adaptive and uses $q=O(1)$ queries. By an amplification argument as above, we can construct an amplified verifier that makes $O(q\cdot t)=o(n)$ queries, with soundness parameter $1-2^{-t}/3$. By the fact that the verifier is non-adaptive, it has the same query distribution regardless of the proof string. Therefore, we can run $2^t$ amplified
	verifiers in parallel while reusing queries, one verifier for each of the $2^t$ possible proof strings. If any of the $2^t$ amplified verifiers accept, we accept the input.
	If the input belongs to $\calP$, one of the above $2^t$ verifiers will accept
	(the one that used the correct proof). If the input was $\epsilon$-far from
	$\calP$, then by a union bound, the probability that there was any
	accepting amplified verifier is at most $1/3$. This yields an $o(n)$ tester for $\calP$, which contradicts our assumption.
\end{proof}

{
\subsection{Testing, tolerant testing and erasure-resilient testing}
In this subsection we define notions related to the property testing framework . We also formally define a few variants of the original testing model that will be addressed in this work.

A \emph{property} $\calP$ of $n$-bit boolean strings is a subset of all those
strings, and we say that a string $x$ \emph{has} the property $\calP$ if
$x\in\calP$.
%
%

Given $\eps\ge 0$ and a property $\calP$, we say that a string $x\in\zo^n$ is \emph{$\eps$-far} from $\calP$ if $\dist(x,\calP)>\eps$, and otherwise it is \emph{$\eps$-close} to $\calP$.
 We next define the notion of a \emph{tolerant} tester of which standard (i.e.~intolerant) testers are a special case.

\begin{definition} [Intolerant and tolerant testing] Given $0\le\epsilon_0<\epsilon_1\le1$, a $q$-query $(\eps_0,\eps_1)$-\emph{testing} algorithm $T$ for a property $\calP\subseteq \{0,1\}^n$ is a probabilistic algorithm (possibly adaptive) making $q$ queries to an input $x\in \zo^n$ that outputs a binary verdict satisfying the following two conditions.
	\begin{enumerate}
		\item If $\dist(x,\calP)\le \eps_0$, then $T$ accepts $x$ with probability at least $2/3$.
		\item If $\dist(x,\calP)>\eps_1$, then $T$ rejects $x$ with probability at least $2/3$.
	\end{enumerate}
When $\eps_0=0$, we say that $T$ is an \emph{$\eps_1$-testing algorithm} for $\calP$, and otherwise we say that $T$ is an \emph{$(\eps_0,\eps_1)$-tolerant testing algorithm} for $\calP$.
\end{definition}

Next, we define the erasure-resilient testing model. We start with some terminology.
A string $x\in \{0,1, \bot\}^n$ is $\alpha$-erased if $x_i$ is equal to $\bot$ on at most $\alpha n$ coordinates. A string $x'\in \zo^n$ that differs from $x$ only on coordinates $i\in[n]$ for which $x_i=\bot$ is called a \emph{completion} of $x$. The (pseudo-)distance $\dist(x, \calP)$ of an $\alpha$-erased string $x$ from a property $\calP$ is the minimum, over every completion $x'$ of $x$, of the relative Hamming distance of $x'$ from $\calP$. Note that for a string with no erasures, this is simply the Hamming distance of $x$ from $\calP$.
As before, $x$ is $\eps$-far from $\calP$ if $\dist(x, \calP) > \eps$, and $\eps$-close otherwise.

\begin{definition}[Erasure-resilient tester] Let $\alpha\in[0,1)$ and  $\eps\in(0,1)$ be parameters satisfying $\alpha+\eps<1$. A $q$-query $\alpha$-erasure-resilient $\eps$-tester $T$ for $\calP$ is a probabilistic algorithm making $q$ queries to an $\alpha$-erased string $x\in \{0,1,\perp\}^n$, that outputs a binary verdict satisfying the following two conditions.
	\begin{enumerate}
		\item If $\dist(x, \calP) = 0$ (i.e., if there exists a completion $x'$ of $x$, such that $x'\in \calP$),
		 then $T$ accepts $x$ with probability at least $2/3$.
		\item If $\dist(x, \calP) > \eps$ (i.e., if every completion of $x'$ of $x$ is $\eps$-far from $\calP$), then $T$ rejects $x$ with probability at least $2/3$.
	\end{enumerate}
\end{definition}}
\medskip

The next lemma will be useful to prove that some properties are hard
to test. The lemma states that if we have two distributions whose
restrictions to any set of queries of size at most $q$ are identical,
then no (possibly adaptive) algorithm making at most $q$ queries can
distinguish between them.

\begin{definition}[Restriction]
Given a distribution $\calD$ over functions $f: D\to \zo$ and a subset $Q\subseteq D$, we define the \emph{restriction $\calD|_Q$ of $\calD$ to $Q$} to be the distribution over functions $g:Q\to\zo $, that results from choosing a function $f:D\to \zo$ according to $\calD$, and setting $g$ to be $f|_Q$, the restriction of $f$ to $Q$.
\end{definition}


\begin{lemma} [\cite{fischer2004functions}, special
	case] \label{lem:ditsributed-exactly-indist}Let $\calD_1$ and
	$\calD_2$ be two distributions of functions over some domain $D$. Suppose
	that for any set $Q\subset D$ of size at most $q$, the
	{restricted} distributions $\calD_1|_Q$ and $\calD_2|_Q$ are identically distributed. Then, any  (possibly adaptive) algorithm making at most $q$ queries cannot distinguish $\calD_1$ from $\calD_2$ with any positive probability.
\end{lemma}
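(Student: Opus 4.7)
The plan is to reduce to deterministic algorithms via Yao's minimax principle, and then analyze the decision tree of an adaptive deterministic $q$-query algorithm leaf-by-leaf. Since any randomized algorithm is a distribution over deterministic ones, it suffices to show that for every deterministic $q$-query algorithm $A$ and every output value $b \in \{0,1\}$, the probability $\Pr_{\bf f \sim \calD_i}[A^{\bf f} = b]$ is the same for $i=1,2$.

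First I would fix a deterministic adaptive $q$-query algorithm $A$ and represent its execution as a binary decision tree $T_A$ of depth at most $q$, where internal nodes are labeled by the next query to make (as a function of the answers received so far) and leaves are labeled by the output. Without loss of generality we may assume that along any root-to-leaf path the queried points are distinct, since any repeated query returns the same answer and can be eliminated from the tree without altering behavior. Each leaf $\ell$ of $T_A$ then corresponds to a unique transcript $((q_1^\ell,a_1^\ell),\ldots,(q_{k(\ell)}^\ell,a_{k(\ell)}^\ell))$ with distinct queries and $k(\ell) \le q$; the algorithm reaches leaf $\ell$ on input $f$ iff $f(q_j^\ell) = a_j^\ell$ for all $j \le k(\ell)$.

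Next I would compute, for each leaf $\ell$ and each $i \in \{1,2\}$,
\[
\Pr_{{\bf f}\sim \calD_i}\bigl[A^{\bf f} \text{ reaches } \ell\bigr] \;=\; \Pr_{{\bf f}\sim \calD_i}\bigl[{\bf f}(q_1^\ell) = a_1^\ell,\ldots,{\bf f}(q_{k(\ell)}^\ell)=a_{k(\ell)}^\ell\bigr].
\]
The right-hand side depends on $\calD_i$ only through the restricted distribution $\calD_i|_{Q_\ell}$ where $Q_\ell = \{q_1^\ell,\ldots,q_{k(\ell)}^\ell\}$, a set of size at most $q$. By the hypothesis of the lemma, $\calD_1|_{Q_\ell} = \calD_2|_{Q_\ell}$, so the probability of reaching $\ell$ is identical under the two distributions.

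Finally, summing over all leaves labeled with output $b$ yields $\Pr_{{\bf f}\sim\calD_1}[A^{\bf f}=b] = \Pr_{{\bf f}\sim\calD_2}[A^{\bf f}=b]$. Averaging over the coin tosses of any randomized $q$-query algorithm then gives the lemma. I do not expect any genuine obstacle here; the only point requiring slight care is the WLOG assumption of distinct queries along each path, which is needed so that $|Q_\ell|\le q$ and the hypothesis applies directly.
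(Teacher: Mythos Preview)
Your proof is correct and is exactly the standard decision-tree argument for this fact. Note, however, that the paper does not actually supply a proof of this lemma: it is stated with a citation to \cite{fischer2004functions} (as a special case of a result there) and is used as a black box. So there is no ``paper's own proof'' to compare against; your argument is a clean self-contained justification of the cited statement.
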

\section{Code Ensembles}
It will be necessary for us to think of a generalized definition of an encoding, in which each encoded value has multiple legal encodings.
	\begin{definition}[Code ensemble]
	A \emph{code ensemble} is a function $\calE:\Sigma^k \to 2^{\Sigma^m}$. Namely, every $x \in \Sigma^k$ has a set of its valid encodings from $\Sigma^m$.  We define the distance of the code ensemble as $$\min_{x\neq x'\in \zo^{k}}\;\min_{(v,u)\in \calE (x)\times \calE (x')}\dist(v,u).$$

	\end{definition}
	\medskip
	It is useful to think of a code ensemble $\calE : \Sigma^k\to 2^{\Sigma^m}$ as a {\em randomized mapping}, that given $x\in \Sigma^k$, outputs a uniformly random element from the set of encodings $\calE(x)$.
	Using the above we can define a \emph{shared secret} property. In particular, we use a strong information theoretic definition of a shared secret, in which $o(m)$ bits do not give \emph{any information at all} about the encoded value. Later on, we construct code ensembles with a shared secret property.

	\begin{definition}[Shared Secret] For $m,k \in \N$ and a constant $\zeta>0$, we say that a code ensemble $\calC : \zo^k\to 2^{(\zo^m)}$ has a \emph{$\zeta$-shared secret} property if it satisfies the following. For any $Q\subseteq [m]$ of size $|Q|\le \zeta  m$, any $w,w'\in\zo^{k}$ such that $w\neq w'$, and any $t\in\zo^{|Q|}$ it holds that
		\[ \Prx_{\bv \sim \calC(w)}[\bv|_Q=t]\;=\Prx_{\bv' \sim \calC(w')}[\bv'|_Q=t].\]
		Namely, for any $w\neq w'$ and any $Q\subseteq [m]$ of size at most $\zeta m$, the distribution obtained by choosing a uniformly random member of $\calC(w)$ and considering its restriction to $Q$, is identical to the distribution obtained by choosing a uniformly random member of $\calC(w')$ and considering its restriction to $Q$.\end{definition}

\subsection{A construction of a hard code ensemble}

{We describe a construction of a code ensemble for which a linear number of
  queries is necessary to verify membership or to decode the encoded
  value. This code will be our \emph{base} code in the iterative
  construction.}  The existence of such a code ensemble is proved probabilistically, 
relying on the following simple lemma.
\begin{lemma}
	\label{lem:span_random_vectors}
        Fix constant $\alpha, \beta > 0$ where
        $\beta \log (e/\beta) < \alpha$. Let $s, t \in \N$ so that
        $s \leq (1-\alpha)t$. Then, with probability $1-o(1)$, a
        sequence of $s$ uniformly random vectors
        $\{v_1, \ldots, v_s\}$ from $\{0,1\}^t$ is linearly
        independent, and corresponds to a $\beta$-distance linear code.
\end{lemma}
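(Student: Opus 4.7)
The plan is to prove both parts (linear independence and $\beta$-distance) by the probabilistic method, bounding the failure probability of each event and then applying a union bound.

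First, I would handle linear independence. The standard argument: reveal the vectors $\bv_1, \ldots, \bv_s$ one at a time, and note that conditioned on any outcome of $\bv_1,\dots,\bv_{i-1}$, the probability that $\bv_i$ lies in their span is at most $2^{i-1}/2^t$. Summing over $i \in [s]$ gives a failure probability of at most $2^s/2^t \leq 2^{-\alpha t}= o(1)$, since $s\le(1-\alpha)t$.

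Next, I would show that with high probability the resulting code $C:\{0,1\}^s\to\{0,1\}^t$ defined by $C(c)=\sum_{i=1}^s c_i \bv_i$ has minimum distance at least $\beta t$. Since the code is linear, it suffices to show that every nonzero codeword has Hamming weight at least $\beta t$. The key observation is that for any fixed nonzero $c\in\{0,1\}^s$, the vector $\sum_i c_i \bv_i$ is a sum including at least one uniformly random independent vector, hence is itself uniformly distributed in $\{0,1\}^t$. Thus
\[
\Pr\!\left[\,\mathrm{wt}\!\left(\textstyle\sum_i c_i \bv_i\right) < \beta t\,\right] \;\leq\; \frac{\binom{t}{\leq \beta t}}{2^t} \;\leq\; \frac{(e/\beta)^{\beta t}}{2^t} \;=\; 2^{-t(1-\beta\log(e/\beta))}.
\]
Taking a union bound over the $2^s-1$ nonzero coefficient vectors $c$, the probability that some nonzero codeword has weight less than $\beta t$ is at most
\[
2^s \cdot 2^{-t(1-\beta\log(e/\beta))} \;\leq\; 2^{-t(\alpha - \beta\log(e/\beta))},
\]
using $s\leq(1-\alpha)t$. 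By the hypothesis $\beta\log(e/\beta)<\alpha$, this bound is $2^{-\Omega(t)}=o(1)$.

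Finally, a union bound over the two failure events shows that with probability $1-o(1)$, the vectors $\bv_1,\dots,\bv_s$ are both linearly independent and define a linear code of minimum relative distance at least $\beta$. The only non-routine step is choosing the right tail bound $\binom{t}{\leq\beta t}\leq(e/\beta)^{\beta t}$ and verifying the exponent is negative, which is exactly what the hypothesis on $\alpha,\beta$ guarantees; no real obstacle arises.
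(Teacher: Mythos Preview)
Your argument is correct and uses essentially the same union bound over nonzero linear combinations with the same tail estimate $\binom{t}{\leq \beta t}\leq (e/\beta)^{\beta t}$ as the paper. The only difference is that the paper handles both conclusions at once: since a nontrivial linear combination equal to $0$ has weight $0<\beta t$, the event ``some nonzero combination has weight below $\beta t$'' already subsumes linear dependence, so your separate step for independence is redundant (though of course not wrong).
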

\begin{proof}
  The proof follows from a straightforward counting argument. If we
  draw $s$ uniformly random vectors $v_1, \ldots, v_s \in \{0,1\}^t$,
  then each non-trivial linear combination of them is in itself a
  uniformly random vector from $\{0,1\}^t$, and hence has weight less
  than $\beta$ with probability at most
$$2^{-t} \cdot \binom{t}{\beta t} \leq 2^{-t} \left( \frac{et}{\beta
    t} \right)^{\beta t} = 2^{-t} \cdot 2^{\beta \log(e/\beta) t} =
2^{(\gamma - 1)t},$$ where we set
$\gamma = \beta \log (e/\beta) < \alpha$.

By a union bound over all $2^{s} \leq 2^{(1-\alpha)t}$ possible
combinations, the probability that there exists a linear combination
with weight less than $\beta$ is at most
$2^{(\gamma-\alpha)t} = o(1)$. If this is not the case, then
$v_1, \ldots, v_s$ are linearly independent, and moreover, 
$\{v_1, \ldots, v_s\}$ corresponds to a $\beta$-distance linear code
(where we use the fact that the distance of a linear code is equal to
the minimal Hamming weight of a non-zero codeword).
\end{proof}

Our construction makes use of a sequence of vectors that correspond to a high-distance and high-dual distance code, as described below.
\begin{definition}[Hard code ensemble $\calH_k$] \label{def:HardCode}Let $k\in\N$ and let $\{v_1,\ldots,v_{3k}\}$ be a sequence of vectors in $\{0,1\}^{4k}$ such that $\mathrm{Span}\{v_1,\ldots,v_{3k}\}$ is a $1/30$-distance code, and that $\mathrm{Span}\{v_{k+1},\ldots,v_{3k}\}$ is a $1/10$-dual distance code. Let
	\[
	A=\begin{bmatrix}
	\lvert &  &\lvert \\
	v_1 & \cdots &v_{3k}\\
	\lvert & & \lvert
	\end{bmatrix}.
	\]
	We define the code ensemble $\calH_k:\zo^{k }\to 2^{\zo^{4k}}$ as
	\[\calH_k(w)=\{A u\;:\; u\in \zo^{3k}\text{ where }\;u|_{\{1,\ldots,k\}}=w \},\]
	where all operations are over $\mathrm{GF}(2)$.
\end{definition}
The next lemma states that a collection of random vectors $\{v_1, \ldots, v_{3k}\}$ in $\{0,1\}^{4k}$ satisfies the basic requirements of a code ensemble $\calH_k$ with high probability (that is, with probability tending to one as $k \to \infty$), and hence such a code ensemble exists.

\begin{lemma}
	A set $\{v_1, \ldots, v_{3k}\}$ of random vectors in $\{0,1\}^{4k}$ satisfies with high probability the following two conditions: $\mathrm{Span}\{v_1,\ldots,v_{3k}\}$ is a $1/30$-distance code, and $\mathrm{Span}\{v_{k+1},\ldots,v_{3k}\}$ is a $1/10$-dual distance code. In particular, for all $k$ large enough the code ensemble $\calH_k$ exists.
\end{lemma}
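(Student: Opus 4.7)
The plan is to establish the two stated properties separately, each holding with probability $1-o(1)$, and then combine them by a union bound.

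For the first condition (primal distance), I would directly invoke Lemma~\ref{lem:span_random_vectors} with $t=4k$, $s=3k$ (so $\alpha=1/4$) and $\beta=1/30$. The hypothesis $\beta\log(e/\beta)<\alpha$ reduces to the numerical check $(1/30)\log(30e)<1/4$, which holds with ample slack. This yields that $\mathrm{Span}\{v_1,\ldots,v_{3k}\}$ is a $1/30$-distance linear code with probability $1-o(1)$.

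For the second condition (dual distance), Lemma~\ref{lem:span_random_vectors} does not apply directly since the property concerns the orthogonal complement rather than the span itself. Instead, I would run a direct probabilistic argument. The span $\mathrm{Span}\{v_{k+1},\ldots,v_{3k}\}$ has dual distance less than $4k/10$ if and only if there exists a nonzero $u\in\{0,1\}^{4k}$ of Hamming weight below $4k/10$ satisfying $u\cdot v_j = 0$ in $\mathrm{GF}(2)$ for every $j\in\{k+1,\ldots,3k\}$. For any fixed nonzero $u$, each inner product $u\cdot v_j$ is a uniformly random bit and the $2k$ inner products are mutually independent, so the probability that $u$ lies in the dual is exactly $2^{-2k}$. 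The number of nonzero vectors $u$ of weight strictly less than $4k/10$ is at most $2^{H(1/10)\cdot 4k}$, where $H$ is the binary entropy; since $H(1/10)<0.47$, a union bound gives the probability of any offending $u$ as at most $2^{(4H(1/10)-2)k}=o(1)$.

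Combining the two estimates by a final union bound shows that both properties hold simultaneously with probability $1-o(1)$, which in particular proves that, for all $k$ large enough, at least one concrete choice of $\{v_1,\ldots,v_{3k}\}$ meets the requirements of Definition~\ref{def:HardCode}, so the code ensemble $\calH_k$ exists. The only point requiring care is the constants: both arguments compare a binary entropy against an available rate (namely $H(1/30)\log_2 e$ vs.\ $1/4$ inside Lemma~\ref{lem:span_random_vectors}, and $H(1/10)$ vs.\ $1/2$ in the dual-distance union bound), and both inequalities have enough slack that no delicate numerical tuning is needed.
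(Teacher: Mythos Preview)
Your proof is correct. The first condition is handled exactly as in the paper, by invoking Lemma~\ref{lem:span_random_vectors} with $t=4k$, $s=3k$, $\alpha=1/4$, $\beta=1/30$. For the dual-distance condition your route differs: you run a direct union bound over low-weight vectors $u$, using that $\Pr[u\cdot v_j=0\text{ for all }j]=2^{-2k}$ and that there are at most $2^{4kH(1/10)}$ candidates, which is $o(2^{2k})$ since $4H(1/10)<2$. The paper instead argues via a symmetry of subspaces: conditioned on the $2k$ random vectors being linearly independent, their span $V$ is uniform over rank-$2k$ subspaces of $\{0,1\}^{4k}$, and since $V\mapsto V^\perp$ is a bijection on this set, $V^\perp$ is uniform too; hence the dual distance of $V$ has the same distribution as the distance of a fresh span of $2k$ random vectors, and a second application of Lemma~\ref{lem:span_random_vectors} (with $s=2k$, $\alpha=1/2$, $\beta=1/10$) finishes. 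Your argument is more elementary and self-contained; the paper's recycles Lemma~\ref{lem:span_random_vectors} for both parts at the cost of the extra duality observation. One small slip: your closing parenthetical ``$H(1/30)\log_2 e$ vs.\ $1/4$'' is not the right quantity---the hypothesis of Lemma~\ref{lem:span_random_vectors} is $\beta\log(e/\beta)<\alpha$, i.e.\ $(1/30)\log_2(30e)\approx 0.21<1/4$, which you do state correctly earlier.
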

{\begin{proof}
		We apply Lemma~\ref{lem:span_random_vectors} multiple times. First, picking $t = 4k$, $s = 3k$, $\alpha=1/4$, and $\beta=1/30$, we conclude that $v_1, \ldots, v_{3k}$ with high probability correspond to a $1/30$-distance code.
%

		To show that with high probability the code
                spanned by the last $2k$ vectors has high dual
                distance, we compare the following two processes,
                whose output is a linear subspace of
                $(\mathrm{GF}(2))^{4k}$, that we view as a code: (i)
                Choose $2k$ vectors and return their span. (ii) Choose
                $4k-2k=2k$ vectors and return the dual of their span.
                Conditioning on the chosen $2k$ vectors being linearly
                independent, the output distributions of these two
                processes are identical. Indeed, by a symmetry argument it is not hard to see that under the conditioning, the linear subspace generated by Process (i) is uniformly distributed among all rank-$2k$ subspaces $V$ of $(\mathrm{GF}(2))^{4k}$. Now, since we can uniquely couple each such $V$ with its dual $V^\perp$ (also a rank-$2k$ subspace) and since $V = (V^{\perp})^{\perp}$, this means that the output distribution of Process (ii) is uniform as well.

		However, it follows again from
		Lemma~\ref{lem:span_random_vectors} (with $t = 4k$,
		$s = 2k$, $\alpha = 1/2$, and any $\beta > 0$ satisfying the conditions of the lemma) that the chosen $2k$ vectors are independent
		with high probability.
		This means that (without the conditioning) the output
                distributions of Process (i) and Process (ii) are
                $o(1)$-close in variation distance. Applying Lemma~\ref{lem:span_random_vectors} with $t = 4k$, $s = 2k$,
                $\alpha = 1/2$, and $\beta = 1/10$ we get that the distance of the code generated by Process (i) is at least
                $\beta = 1/10$ with high probability. However, the latter distance equals by definition to the dual distance of the code generated by Process (ii). By the closeness of the distributions, we conclude that the dual distance of Process~(i) is also at least $1/10$ with high probability.
\end{proof}}

\medskip

%

{
	We next state a simple but important observation regarding membership verification.
	\begin{observation} \label{obs:HardCodeProperties} Once a matrix $A$
		with the desired properties is constructed (which may take
		$\exp(k^2)$ time if we use brute force), given $w\in\zo^k$, the membership of $v$ in $\calH_k(w)$ can be verified in $\poly(k)$ time (by solving a system of linear equations over $\mathrm{GF}(2)$).

\end{observation}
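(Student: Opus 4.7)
The plan is to directly unpack the definition of $\calH_k(w)$ and reduce membership testing to the solvability of a linear system over $\mathrm{GF}(2)$, which can be handled by Gaussian elimination in polynomial time.

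Concretely, I would first partition the matrix $A$ as $A=[A_1\mid A_2]$, where $A_1\in\mathrm{GF}(2)^{4k\times k}$ consists of the first $k$ columns $v_1,\ldots,v_k$ and $A_2\in\mathrm{GF}(2)^{4k\times 2k}$ consists of the remaining columns $v_{k+1},\ldots,v_{3k}$. By the definition of $\calH_k(w)$, a vector $v\in\{0,1\}^{4k}$ lies in $\calH_k(w)$ if and only if there exists $u'\in\{0,1\}^{2k}$ such that $A\cdot(w\sqcup u') = v$, equivalently $A_2 u' = v + A_1 w$ (over $\mathrm{GF}(2)$, where the right-hand side is an explicit vector in $\{0,1\}^{4k}$ computable in $O(k^2)$ time).

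Thus verification reduces to deciding whether the linear system $A_2 u' = v + A_1 w$ in $2k$ unknowns and $4k$ equations has a solution. I would run Gaussian elimination on the augmented matrix $[A_2 \mid v + A_1 w]$: bring $A_2$ to row echelon form, and then check whether the resulting system is consistent (i.e., no row of the form $[0\cdots 0 \mid 1]$ appears). This takes $O(k^3)$ field operations over $\mathrm{GF}(2)$, hence $\poly(k)$ time.

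There is essentially no obstacle here; the statement is really just the remark that solvability of an $\mathrm{GF}(2)$-linear system with polynomially many rows and columns is in $\poly(k)$. The only mild subtlety is to note that the preprocessing cost of finding $A$ (which is the $\exp(k^2)$ brute-force search alluded to in the statement) is a one-time cost paid during construction and not during verification, so it does not enter the verification runtime.
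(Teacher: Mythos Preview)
Your proposal is correct and is precisely the approach the paper intends: the observation is stated without a formal proof, with the parenthetical ``by solving a system of linear equations over $\mathrm{GF}(2)$'' serving as the entire justification. Your explicit reduction to $A_2 u' = v + A_1 w$ and the Gaussian-elimination bound simply flesh out that remark.
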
}

\section{ $\PCU$s and $\PCUSS$s}\label{sec:PCUandPCUSS}

\medskip {Next, we define the notion of Probabilistically Checkable
  Unveiling ($\PCU$). This notion is similar to $\PCPP$, but here instead of requiring our input to satisfy a given property, we require our input to encode a value $w \in \{0,1\}^k$ (typically
  using a large distance code ensemble). We
  then require that given the encoded value $w$, it will be possible
  to prove in a $\PCPP$-like fashion that the input is indeed a valid
  encoding of $w$.

\begin{definition}[$\PCU$] \label{def:PSU} Fix $m,t,k\in \N$, and let $\calC: \{0,1\}^k \to 2^{\{0,1\}^m}$ be a code ensemble. We say that $\calC$ has a $q(\epsilon,\delta)$-query, length-$t$ $\PCU$ if the following holds. There exists a verification algorithm $V$ that takes as inputs $\epsilon,\delta>0$, $m\in \N$, and $w\in\{0,1\}^k$, makes at most $q(\epsilon,\delta)$ queries to the strings $v\in\zo^m$ and $\pi\in\zo^t$, and satisfies the following:
	\begin{enumerate}
		\item If $v\in\calC(w)$, then there exists a proof $\pi=\Proof_{\calC}(v)\in\{0,1\}^t$ such that for every $\epsilon,\delta>0$, the verifier $V$ accepts with probability $1$.
		\item If $\dist(v,\calC(w))>\eps$, then for every alleged proof $\pi\in\zo^t$, the verifier $V$ rejects $v$ with probability greater than $\delta$.
	\end{enumerate}
\end{definition}

In order to facilitate the proof of the main theorem, we utilize a more stringent variant of the above $\PCU$ definition. Instead of supplying $w\in\zo^k$ to the algorithm, we supply  oracle access to a a string $\tau\in\zo^{100k}$ that is supposed to represent $\Spiel(w)$, along with the proof $\pi$, and the algorithm only makes $q(\epsilon,\delta)$ queries to
the proof string $\pi$, the original encoding $v$ \emph{and} the string $\tau$.
For cases where $v\in \calC (w)$, we use $\Value(v)$ to denote $\Spiel(w)$.
\begin{definition}[$\Spiel$-$\PCU$] \label{def:SpielPSU} Fix $m,t,k\in \N$, and let $\calC: \{0,1\}^k \to 2^{\{0,1\}^m}$ be a code ensemble. We say that $\calC$ has a $q(\epsilon,\delta)$-query, length-$t$ $\Spiel$-$\PCU$ if the following holds. There exists a verification algorithm $V$  that takes as inputs $\epsilon,\delta>0$, $m\in \N$, makes at most $q(\epsilon,\delta)$ queries to the strings $v\in\zo^m$, $\tau\in\zo^{100k}$ and $\pi\in\zo^t$, and satisfies the following:
	\begin{enumerate}
		\item If there exists $w\in\zo^k$ for which $v\in\calC(w)$ and $\tau=\Value(v)=\Spiel(w)$, then there exists a proof $\pi=\Proof_{\calC}(v)\in\{0,1\}^t$ such that for every $\epsilon,\delta>0$, the verifier $V$ accepts with probability $1$.
		\item If for every $w\in\zo^k$ either $\dist(\tau,\Spiel(w))>\eps$ or  $\dist(v,\calC(w))>\eps$, then for every alleged proof $\pi\in\zo^t$, the verifier $V$ rejects $v$ with probability greater than $\delta$.
	\end{enumerate}
\end{definition}

 Note that a code ensemble admitting a $\Spiel$-$\PCU$ automatically admits a $\PCU$. Indeed, given the string $w$, an oracle for
$\Spiel(w)$ can be simulated.

\medskip
{The following lemma states the existence of $\Spiel$-$\PCU$ for efficiently computable code ensembles, and will be used throughout this work. The proof follows from Lemma~\ref{lem:PCPP-Irit} together with a simple concatenation argument.

 \begin{lemma}\label{lem:Quasilinear->PCU} Let $k,m,t\in \N$ be such that $t\ge m$, and let $\calC: \zo^k \to 2^{\zo^m}$ be a code ensemble. If given  $w\in\zo^k$ and $v\in\zo^m$, it is possible to verify membership of $v$ in $\calC (w)$ using a circuit of size $t$, then there is a $q(\eps,\delta)$-query, length-$t'$ $\Spiel$-$\PCU$ for $\calC$ where $t'=t\cdot \polylog t$.
 \end{lemma}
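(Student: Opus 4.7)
The plan is to reduce the $\Spiel$-$\PCU$ to the standard $\PCPP$ of Lemma~\ref{lem:PCPP-Irit} via a simple concatenation argument. Consider the auxiliary property
\[
\calP=\bigl\{\,v\sqcup\tau\sqcup w\in\{0,1\}^{m+100k+k}\;:\;w\in\{0,1\}^k,\ v\in\calC(w),\ \tau=\Spiel(w)\,\bigr\}.
\]
First I would show that membership in $\calP$ is verifiable by a Boolean circuit of size $O(t\cdot\polylog t)$. Given $v\sqcup\tau\sqcup w$, one runs the hypothesized size-$t$ circuit for ``$v\in\calC(w)$'' in parallel with a circuit that re-encodes $w$ via $\Spiel$ (using the quasilinear-size encoder for the linear code $\Spiel$ from Theorem~\ref{thm:spiel}) and compares the result bitwise with $\tau$; the second step costs only $O(k\cdot\polylog k)$, and $k\le m\le t$ since $\calC$ is a positive-distance code ensemble.

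Next, I would apply Lemma~\ref{lem:PCPP-Irit} to $\calP$, obtaining a $\PCPP$ of length $t'=t\cdot\polylog t$ and query complexity $q(\eps,\delta)$. The $\Spiel$-$\PCU$ proof is then defined as $\pi=w\sqcup\pi_{\PCPP}$, where $\pi_{\PCPP}$ is the canonical $\PCPP$ proof for the string $v\sqcup\tau\sqcup w$. The $\Spiel$-$\PCU$ verifier simulates the $\PCPP$ verifier for $\calP$, routing each query to the composite input oracle $v\sqcup\tau\sqcup w$ to the appropriate underlying oracle---the first $m$ coordinates to $v$, the next $100k$ to $\tau$, and the last $k$ to the first $k$ bits of $\pi$---and each query to the $\PCPP$ proof oracle to the remaining suffix of $\pi$. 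The resulting total proof length is $k+t\cdot\polylog t=t\cdot\polylog t$, as required.

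Completeness is immediate from the underlying $\PCPP$'s completeness. The main technical (and only genuinely non-routine) step is translating the $\Spiel$-$\PCU$ soundness hypothesis---that for every $w'\in\{0,1\}^k$, either $\dist(\tau,\Spiel(w'))>\eps$ or $\dist(v,\calC(w'))>\eps$---into a Hamming-distance lower bound for the $\PCPP$ input $v\sqcup\tau\sqcup w_0$, where $w_0$ denotes the prover's first $k$ proof bits. For any candidate element $v'\sqcup\tau'\sqcup w'\in\calP$, the relative Hamming distance from $v\sqcup\tau\sqcup w_0$ is at least
\[
\frac{m\cdot\dist(v,\calC(w'))+100k\cdot\dist(\tau,\Spiel(w'))}{m+101k}\;\geq\;\frac{\eps\cdot\min(m,100k)}{m+101k}\;=\;\Omega(\eps),
\]
using that $m$ and $k$ are within a constant factor of one another in the iterative settings where the lemma is invoked. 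Setting the $\PCPP$'s detection radius to this quantity and its soundness parameter to $\delta$ yields the desired $\Spiel$-$\PCU$ soundness and completes the proof. The main obstacle I anticipate is precisely this bookkeeping: one must choose the $\PCPP$ detection radius small enough that the disjunctive, quantified soundness of $\Spiel$-$\PCU$ maps correctly onto the Hamming-based soundness of the underlying $\PCPP$, yet large enough that the query complexity remains controlled by $\eps$ and $\delta$ alone.
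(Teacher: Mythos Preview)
Your reduction to Lemma~\ref{lem:PCPP-Irit} is the right idea, and the overall structure (define an auxiliary property, route oracle queries, lift completeness and soundness) matches the paper's. However, the step you flag as ``the only genuinely non-routine'' one is in fact a genuine gap. You assert
\[
\frac{\eps\cdot\min(m,100k)}{m+101k}=\Omega(\eps),
\]
justified by ``$m$ and $k$ are within a constant factor of one another in the iterative settings where the lemma is invoked.'' This assumption is neither part of the lemma statement nor true in its main application: when the lemma is invoked for the ensemble $\calL$ in Lemma~\ref{lem:Spiel-PCU-for-L}, the input length is $m=O(|\F|\log|\F|)$ while $k$ is the (fixed, possibly tiny) secret length, so $m/k$ is unbounded --- indeed the separation theorems even take $k=0$. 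With your auxiliary property, the $\tau$-block has length only $100k$, so an input that is far only in $\tau$ contributes relative distance at most $100k\eps/(m+101k)\to 0$. The $\PCPP$ detection radius must then shrink with $m/k$, and the resulting query complexity $q$ is no longer a function of $(\eps,\delta)$ alone, violating Definition~\ref{def:SpielPSU}.

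The paper handles exactly this by \emph{balancing} the two blocks: rather than a single copy of $\tau$, it concatenates $\xi=\lfloor m/(100k)\rfloor$ repetitions and works with
\[
\calC_{eq}=\bigl\{\,v\sqcup(\Spiel(w))^{\xi}\ :\ \exists\,w\in\{0,1\}^k,\ v\in\calC(w)\,\bigr\},
\]
answering any query into the repeated block from the single $\tau$ oracle. Now the two halves have comparable length, the composite input is $\eps/3$-far from $\calC_{eq}$ under the $\Spiel$-$\PCU$ soundness hypothesis, and a fixed detection radius of $\eps/3$ suffices. (This also obviates carrying $w$ in the proof string; $w$ is recovered inside the circuit by decoding one $\Spiel$ block.) Your argument is easily repaired by adopting this repetition trick.
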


\begin{proof} Assume without loss of generality that $m\ge |\Spiel(0^k)|$. Let $\xi =\left\lfloor \frac{m	}{|\Spiel(0^k)|}\right\rfloor$ (note that $\xi\ge 1$), and define
	\[ \calC_{eq}\eqdef \left\{v\sqcup (\Spiel(w))^{\xi}\;\big\lvert \;\exists w\in \zo^k\text{ for which}\; v\in \calC (w) \right\},  \]
	where $(\Spiel(w))^\xi$ denotes the $\xi$-times concatenation of $\Spiel(w)$.

	For any string $u$ it is possible to check, using a quasilinear size circuit (see \cite{Spi96}), that the substring that corresponds to the domain of $(\Spiel(w))^\xi$ is a $\xi$-times repetition of $\Spiel(w)$ for some $w$. After doing so, we decode $w$ using a quasilinear size circuit (as in \cite{Spi96}), and then, by the premise of the lemma, we can verify membership in $\calC (w)$ using a circuit of size $t$. Therefore, membership in $\calC_{eq}$ can be decided using a $O(t)$ size boolean circuit, and therefore by Lemma~\ref{lem:PCPP-Irit} admits a $\PCPP$ system whose proof length is  quasilinear in $t$.

Given an input $v$ to $\Spiel$-$\PCU$, let $v'=v\sqcup(\Spiel(w))^\xi$ and use the $\PCPP$ system for $\calC_{eq}$, with detection radius $\eps/3$ and soundness parameter $\delta$, where each query to $v'$ is emulated by a corresponding query to $v$ or $\Spiel(w)$.
Note that if $v\in\calC (w)$, then $v'\in\calC_{eq}$, so the $\PCPP$ system for $\calC_{eq}$ will accept with probability $1$.

Next, suppose that $\dist(v,\calC (w))>\eps$, and observe that this implies that $v'$ is at least $\eps/3$-far from $\calC_{eq}$. Thus, by the soundness property of the $\PCPP$ for $\calC_{eq}$, the verifier rejects with probability at least $\delta$, regardless of the contents of the alleged proof $\pi$ it is supplied with.
\end{proof}
\medskip}

Next we define Probabilistically Checkable Unveiling of a Shared Secret ($\PCUSS$).
\begin{definition} \label{def:PCUSS}For $m,k,t\in \N$, we say that a function $\calC\colon\zo^k \to 2^{(\zo^n)}$ has a $q(\epsilon,\delta)$-query, length-$t$ $\PCUSS$, if $\calC$ has a shared secret property, as well as $\calC$ has a $q(\epsilon,\delta)$-query, length-$t$ $\PCU$.
Similarly, when $\calC$ has a shared secret property (for constant $\zeta$), as well as $\calC$ has a $q(\eps,\delta)$-query, length-$t$ $\Spiel$-$\PCU$, we say that $\calC$ has a $q(\eps,\delta)$-query, length-$t$ \emph{$\Spiel$-$\PCUSS$}.
\end{definition}}
\noindent Note that $\calC$ admitting a $\Spiel$-$\PCUSS$ directly implies that it admits a $\PCUSS$ with similar parameters.

\medskip
The following lemma establishes the existence of a $\Spiel$-$\PCUSS$ for $\calH_k$, where $\calH_k$ is the code ensemble from Definition~\ref{def:HardCode}.
\begin{lemma}\label{lem:HardCodePCUSS} For any $k\in\N$, $\calH_k$ has a $q(\eps,\delta)$-query, length-$t'$ $\Spiel$-$\PCUSS$ where $t'=\poly(k)$.
\end{lemma}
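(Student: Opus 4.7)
The plan is to verify separately the two conditions in Definition~\ref{def:PCUSS}: the shared secret property and the existence of a $\Spiel$-$\PCU$ of length $\poly(k)$.

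For the shared secret property, I would decompose the encoding by writing a uniformly random $\bu \in \{0,1\}^{3k}$ with $\bu|_{\{1,\ldots,k\}} = w$ as $\bu = (w, \bu')$ with $\bu' \in \{0,1\}^{2k}$ uniform. Then
\[ A\bu \;=\; \sum_{i=1}^{k} w_i v_i \;+\; M\bu', \]
where $M$ is the $4k \times 2k$ matrix whose columns are $v_{k+1},\ldots,v_{3k}$. The linear code $u' \mapsto Mu'$ has dual distance at least $1/10$ by the hypothesis on $\calH_k$, so Lemma~\ref{lem:dual-to-secret} implies that for every $Q \subseteq [4k]$ with $|Q| < (1/10) \cdot 4k$, the restriction $(M\bu')|_Q$ is uniform on $\{0,1\}^{|Q|}$. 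Since adding the $w$-dependent offset $\sum_{i=1}^{k} w_i v_i$ to a uniform distribution preserves uniformity, the distribution of $(A\bu)|_Q$ is uniform over $\{0,1\}^{|Q|}$ regardless of $w$. This yields the $\zeta$-shared secret property for any constant $\zeta < 1/10$ (say $\zeta = 1/11$).

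For the $\Spiel$-$\PCU$ part, Observation~\ref{obs:HardCodeProperties} guarantees that once $A$ has been fixed, one can verify membership of $v$ in $\calH_k(w)$ in $\poly(k)$ time by solving a linear system over $\mathrm{GF}(2)$. Hence there is a Boolean circuit of size $t = \poly(k)$ accepting exactly the pairs $(v,w)$ with $v \in \calH_k(w)$. Applying Lemma~\ref{lem:Quasilinear->PCU} to this circuit immediately produces a $q(\eps,\delta)$-query, length-$t \cdot \polylog t = \poly(k)$ $\Spiel$-$\PCU$ for $\calH_k$. Combined with the shared secret argument, this gives the desired $\Spiel$-$\PCUSS$.

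The only subtlety is matching the hypotheses of Lemma~\ref{lem:dual-to-secret}: one must apply it to the \emph{sub-code} $\mathrm{Span}\{v_{k+1},\ldots,v_{3k}\}$ rather than to the full span, and it is precisely for this sub-code that the $1/10$ dual distance bound is guaranteed by Definition~\ref{def:HardCode}. Everything else reduces to invoking prior lemmas of the paper, so no deep obstacle is expected.
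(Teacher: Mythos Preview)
Your proposal is correct and follows essentially the same approach as the paper: invoke Observation~\ref{obs:HardCodeProperties} together with Lemma~\ref{lem:Quasilinear->PCU} for the $\Spiel$-$\PCU$, and use Lemma~\ref{lem:dual-to-secret} for the shared secret property. If anything, your treatment of the shared secret part is more careful than the paper's one-line appeal to Lemma~\ref{lem:dual-to-secret}: you make explicit the decomposition $A\bu = \sum_i w_i v_i + M\bu'$ and correctly point out that the lemma must be applied to the sub-code $\mathrm{Span}\{v_{k+1},\ldots,v_{3k}\}$ (whose dual distance is the one guaranteed by Definition~\ref{def:HardCode}), with the $w$-dependent part acting merely as a fixed offset.
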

\begin{proof} By Observation~\ref{obs:HardCodeProperties}, given $w$, membership in $\calH_k(w)$ can be checked in $\poly(k)$ time, which means that there exists a polynomial size circuit that decides membership in $\calH_k(w)$. Combining the above with Lemma~\ref{lem:Quasilinear->PCU} implies a $q(\eps,\delta)$-query, length-$t'$ $\Spiel$-$\PCU$ where $t'=\poly(k)$. By Lemma~\ref{lem:dual-to-secret}, the large dual distance property of $\calH_k$ implies its shared secret property for some constant $\zeta$, which concludes the proof of the lemma.
\end{proof}


\section{$\PCUSS$ construction}

In this section we give a construction of code ensembles that admit a $\PCUSS$. First we show that our code ensemble has a $\PCU$ with a short proof. Specifically,

\begin{lemma} \label{lem: PCU-construction}For any  fixed $\ell\in\N$ and any $k\in \N$, there exists $n_0(\ell,k)$ and a code ensemble $\calE^{(\ell)}:\zo^{k}\to 2^{(\zo^n)}$, such {that for all $n>n_0(\ell,k)$,} the code ensemble $\calE^{(\ell)}$ has a $q(\eps,\delta)$-query length-$t$ $\PCU$, for $t=O(n\cdot\polylogell n)$.
\end{lemma}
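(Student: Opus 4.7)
The plan is to proceed by induction on $\ell$, actually constructing a $\Spiel$-$\PCU$ at each level (which automatically yields a $\PCU$, as observed right after Definition~\ref{def:SpielPSU}), since the composition between consecutive levels requires oracle access to the Spiel encoding of the ``unveiled'' value rather than explicit access to the value itself. For the base case $\ell=0$, take $\calE^{(0)}=\calH_k$: by Lemma~\ref{lem:HardCodePCUSS} it admits a $\Spiel$-$\PCUSS$ (in particular a $\Spiel$-$\PCU$) of length $\poly(k)$, and since $\polylog^{(0)} n$ absorbs any fixed polynomial factor, this already matches the target bound.

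For the inductive step ($\ell\ge 1$), I would implement the polynomial-based scheme outlined in Section~\ref{subsec:Techniques}. Fix a field $\F=\mathrm{GF}(2^{k'})$ with $k'$ close to $\log k$, and a subset $H\subseteq\F$ of size $\lceil k/k'\rceil$ satisfying $|H|\le|\F|/2$. Given $w\in\{0,1\}^k$, split $w$ into $|H|$ blocks of $k'$ bits each to obtain a map $w_H:H\to\F$. Then let $\calE^{(\ell)}(w)$ be the set of all strings of the form $\bigsqcup_{\alpha\in\F\setminus H}v_\alpha$ for which there exists a polynomial $p\in\F[x]$ of degree at most $|\F|/2$ with $p|_H=w_H$ and $v_\alpha\in\calE^{(\ell-1)}(\qu{p(\alpha)})$ for every $\alpha\in\F\setminus H$. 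The total length is $n=(|\F|-|H|)\cdot n'$, where $n'$ is the (valid) encoding length chosen at the previous level.

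The $\Spiel$-$\PCU$ proof at level $\ell$ has three parts: for every $\alpha\in\F\setminus H$, include $\tau_\alpha=\Spiel(p(\alpha))$ and the recursive level-$(\ell-1)$ $\Spiel$-$\PCU$ proof that $v_\alpha$ is a valid $\calE^{(\ell-1)}$-encoding of the field element whose Spiel representation is $\tau_\alpha$; additionally, include a standard $\PCPP$ from Lemma~\ref{lem:PCPP-Irit} verifying that the function $\F\to\F$ assembled from the $\tau_\alpha$'s together with $w_H$ (whose Spiel encoding is supplied as an outer oracle) agrees with some polynomial of degree $\le|\F|/2$. By the fast interpolation of Lemma~\ref{lem:FastInterpolation}, this low-degree check is computable by a circuit of size $\tilde{O}(k)$, so the $\PCPP$ has length $O(k\polylog k)$. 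The verifier invokes the $|\F|-|H|$ sub-verifiers and the $\PCPP$ with individual soundness amplification so that a union bound yields the required $(\eps,\delta)$-soundness at constant query complexity.

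The main obstacle is the parameter balancing that closes the recursion. Choosing $|\F|$ so that $n'\approx\log n$ forces $\polylog^{(\ell-1)}n'=\polylog^{(\ell)}n$, so by the inductive hypothesis the aggregate sub-proof length is $(|\F|-|H|)\cdot O(n'\polylog^{(\ell-1)}n')=O(n\polylogell n)$, exactly the target bound. The $\tau_\alpha$'s contribute only $O(|\F|\cdot k')=O(k)$ bits, and the low-degree $\PCPP$ adds another $O(k\polylog k)$ bits. Both can be absorbed into $O(n\polylogell n)$ provided that $n$ is taken large enough relative to $k$ (specifically, $n\ge k\cdot(\log k)^{O(1)}$), which is exactly what the threshold $n_0(\ell,k)$ is chosen to guarantee. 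Closing the induction then yields the lemma.
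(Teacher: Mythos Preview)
Your high-level plan---prove a $\Spiel$-$\PCU$ by induction on $\ell$, using the polynomial-based composition with recursive sub-proofs plus a $\PCPP$ for the low-degree consistency check---is exactly what the paper does (Definition~\ref{def:iteratedEnc}, Definition~\ref{def:ProofP_l}, Lemma~\ref{lem:IterPCPP}). However, there is a real gap in your verifier. You write that it ``invokes the $|\F|-|H|$ sub-verifiers \ldots\ at constant query complexity'', but running all of them costs $\Omega(|\F|)$ queries, which is certainly not constant in $n$. The paper's verifier (Figure~\ref{fig:VerifierProcedureIterated}) instead \emph{samples} only $O(1/\eps)$ random indices $\beta\in\F\setminus H$ and runs the level-$(\ell-1)$ verifier on those alone; this is precisely what keeps the query complexity a function of $\eps,\delta,\ell$ only. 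Your ``union bound'' remark also suggests the soundness of this step has not been worked out. The actual argument (Lemma~\ref{lem:Iter-soundness}) is: if the $\calL$-check accepts with good probability, then the purported values $\tau_\alpha$ are close to $\Spiel(g(\alpha))$ for a genuine low-degree $g$ with $g|_H=w$; hence if $v$ is $\eps$-far from $\calE^{(\ell)}(w)$, an $\Omega(\eps)$ fraction of the blocks $v_\beta$ must themselves be $\Omega(\eps)$-far from $\calE^{(\ell-1)}(\qu{g(\beta)})$, and a random sample of $O(1/\eps)$ indices hits one with constant probability.

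A second, smaller issue is the parameterization. You fix $|\F|=2^{k'}$ with $k'\approx\log k$, i.e., $|\F|\approx k$; this ties the encoding length $n$ to $k$, whereas the lemma asks for a construction for every $n>n_0(\ell,k)$. In the paper $|\F|$ is the free parameter (only constrained by $|\F|\ge\max\{c_\ell,ck\}$), with $n=m^{(\ell)}_\F$ growing with $|\F|$, and the next level's parameters are $k'=\log|\F|$ and $|\F'|\ge(\log|\F|)^d$. Your later sentence ``Choosing $|\F|$ so that $n'\approx\log n$'' is the right instinct but contradicts your earlier choice $|\F|\approx k$; decoupling $|\F|$ from $k$ is exactly how the paper sets things up, and is also what guarantees that the $O(|\F|\cdot\polylog|\F|)$-length low-degree $\PCPP$ is absorbed into $O(n)$ (this is the role of the constant $d$ in Definition~\ref{def:iteratedEnc} and Lemma~\ref{lem:Iter-length}).
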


Later, we prove that our code ensemble has a shared secret property, which implies that it has a $\PCUSS$ (which implies Theorem~\ref{informalthm:main}, as we shall show).

\begin{theorem}\label{thm:PCUSS-construction} For any fixed $\ell\in\N$ and any $k\in \N$, there exists $n_0(\ell,k)$ and a code ensemble $\calE^{(\ell)}:\zo^{k}\to 2^{(\zo^n)}$, {such that for all $n>n_0(\ell,k)$}, the code ensemble $\calE^{(\ell)}$ has a $q(\eps,\delta)$-query length-$t$ $\PCUSS$, for $t=O(n\cdot\polylogell n)$.
\end{theorem}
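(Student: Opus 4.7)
The $\PCU$ component of the $\PCUSS$ is already supplied by Lemma~\ref{lem: PCU-construction}, so to upgrade it into Theorem~\ref{thm:PCUSS-construction} it suffices to show that the very same code ensemble $\calE^{(\ell)}$ additionally enjoys a $\zeta_\ell$-shared secret property for some positive constant $\zeta_\ell$. Together with Definition~\ref{def:PCUSS}, this immediately yields a $\PCUSS$ (in fact a $\Spiel$-$\PCUSS$) of length $O(n\cdot\polylogell n)$. The plan is to prove the shared secret property by induction on $\ell$.

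The base case $\ell=0$ reduces to the base code $\calH_{k'}$ (for a sufficiently small block size $k'$), whose shared secret property is already recorded in Lemma~\ref{lem:HardCodePCUSS} via the high dual distance of $\calH_{k'}$ together with Lemma~\ref{lem:dual-to-secret}. For the inductive step, recall the recursive construction from Lemma~\ref{lem: PCU-construction}: a uniformly random encoding of $w$ is obtained by drawing a uniformly random polynomial $p:\F\to\F$ of degree at most $|\F|/2$ whose restriction to a fixed subset $H\subseteq\F$ encodes $w$, then independently sampling $v_\alpha\sim\calE^{(\ell-1)}(\qu{p(\alpha)})$ for each $\alpha\in\F\setminus H$, and finally concatenating the blocks to form $v=\bigsqcup_\alpha v_\alpha$.

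Fix any query set $Q\subseteq[n]$ with $|Q|\leq\zeta_\ell n$, and partition $Q=\bigsqcup_\alpha Q_\alpha$ according to the block structure. Call a block $\alpha$ \emph{saturated} if $|Q_\alpha|>\zeta_{\ell-1}|v_\alpha|$, and let $S\subseteq\F\setminus H$ denote the set of saturated indices. A pigeonhole bound gives $|S|\leq(\zeta_\ell/\zeta_{\ell-1})(|\F|-|H|)$, so choosing $\zeta_\ell$ to be a small enough constant multiple of $\zeta_{\ell-1}$ guarantees $|H|+|S|\leq|\F|/2$. For unsaturated $\alpha$, the inductive hypothesis makes the distribution of $v_\alpha|_{Q_\alpha}$ identical regardless of the actual value of $p(\alpha)$, so marginalizing each such factor yields a probability that does not depend on the choice of $w$. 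For saturated $\alpha$ the inner distribution may depend on $p(\alpha)$, but since $|H|+|S|\leq|\F|/2\leq|\F|/2+1$, the standard Lagrange-interpolation fact (any prescription of at most $|\F|/2+1$ evaluations of a degree-$|\F|/2$ polynomial is realized by the same number of polynomials) implies that conditional on $p|_H$ encoding $w$, the tuple $(p(\alpha))_{\alpha\in S}$ is uniformly distributed on $\F^{|S|}$, independently of which specific $w$ was encoded. Writing $\Pr[v|_Q=t]$ as a sum over possible values $(y_\alpha)_{\alpha\in S}$ of $\Pr[(p(\alpha))_{\alpha\in S}=(y_\alpha)]\cdot\prod_{\alpha\in S}\Pr[v_\alpha|_{Q_\alpha}=t_\alpha\mid p(\alpha)=y_\alpha]\cdot\prod_{\alpha\notin S}\Pr[v_\alpha|_{Q_\alpha}=t_\alpha]$ and collecting the factors that do not depend on $w$ finishes the induction.

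The main technical obstacle is the combinatorial bookkeeping between the inner shared-secret constant $\zeta_{\ell-1}$ (which controls which blocks count as saturated) and the outer Lagrange-interpolation threshold $|\F|/2-|H|$ (which bounds how many saturated blocks the Reed--Solomon layer can absorb), all while remaining consistent with the block sizes and field sizes prescribed by the construction in Lemma~\ref{lem: PCU-construction}. Since each recursive level divides $\zeta$ only by a constant factor, the constant $\zeta_\ell>0$ stays positive for any fixed $\ell$; combining this with the $O(n\cdot\polylogell n)$ $\PCU$ length from Lemma~\ref{lem: PCU-construction} then completes the proof of Theorem~\ref{thm:PCUSS-construction}.
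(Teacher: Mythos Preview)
Your proposal is correct and follows essentially the same route as the paper: induct on $\ell$, partition the query set by blocks, bound the number of ``saturated'' blocks via pigeonhole, use the degree-$|\F|/2$ polynomial's uniformity on the saturated coordinates, and invoke the inductive hypothesis on the unsaturated ones. The paper frames this slightly differently---it routes the comparison through an auxiliary ``no'' distribution $\Dno^{(\ell)}$ (Lemmas~\ref{lem:Base-id-distributed} and~\ref{lem:iter-Queries}), proving the stronger statement that $\Dyes^{(\ell)}(w)$ is indistinguishable from $\Dno^{(\ell)}$ and then noting that the shared secret property is an immediate corollary---but the substance of the argument (big-cluster counting plus Reed--Solomon uniformity plus induction) is identical to what you wrote.
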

\noindent Specifically, by the discussion before Lemma \ref{lem:Iter-completeness}, for any fixed soundness parameter $0 < \delta < 1$ it suffices to take
$$q(\eps, \delta) \leq \left(2^\ell/\eps\right)^{O(\ell)},$$
and for the high soundness regime where $\delta = 1-\tau$ (and $\tau > 0$ is small), it suffices to have
$$q(\eps, \delta) \leq \left(2^\ell/\eps\right)^{O(\ell)} \log(1/\tau).$$

\subsection{The iterated construction}

Our iterative construction uses polynomials over a binary finite field
$\mathrm{GF}(2^t)$. In our proof we will need to be able to implement
arithmetic operations over this field efficiently (i.e., in $\poly(t)$
time). This can be easily done given a suitable representation of the
field: namely, a degree $t$ irreducible polynomial over
$\mathrm{GF}(2)$. It is unclear in general whether such a polynomial
can be found in $\poly(t)$ time. Fortunately though, for
$t=2\cdot 3^r$ where $r\in\N$, it is known that the polynomial
$x^t+x^{t/2}+1$ is irreducible over $\mathrm{GF}(2)$ (see,
e.g.,~\cite[Appendix G]{Gol08}). We will therefore restrict our
attention to fields of this form. At first glance this seems to give
us a property that is defined only on a sparse set of input
lengths. However, towards the end of this section, we briefly describe
how to bypass this restriction.


We next formally define our iterated construction, starting with the ``level-0'' construction as a base case. The constants $c,d$ in the definition will be explicitly given in the proof of Lemma~\ref{lem:Iter-length}. Additionally, for any $\ell \in \N$, we shall pick a large enough constant $c_{\ell}$ that satisfies several requirements for the ``level-$\ell$'' iteration of the construction.

%

\begin{definition}[Iterated coding ensemble]\label{def:iteratedEnc}
  For $k\in\N$ and $w\in\zo^{k}$, we define the \emph{base code ensemble of
    $w$} (i.e., level-$\ell$ code ensemble of $w$ for $\ell=0$)
  as \[\calE_k^{(0)}{(w)}=\calH_k(w).\]
  Let $c,d\in \N$ be large
  enough global constants,  fix $\ell>0$, let $c_{\ell}$ be large enough, and let $\F$ be a
  finite field for which $|\F|\ge \max\{c_\ell, c\cdot k\}$.
\medskip

We define the \emph {level-$\ell$ code ensemble of $w\in\zo^{k}$ over
    $\F$} as follows. {Let $r\in \N$ be the smallest integer such that $(\log|\F |)^d\le 2^{2\cdot 3^r}$, set $\F'=\mathrm{GF}\left(2^{2\cdot 3^r}\right)$ and
  $k'=\log|\F|$. Note that these satisfy the recursive requirements of
  a level-$(\ell-1)$ code ensemble provided that $c_\ell$ is large enough
  (specifically we require
  $(\log|\F|)^{d-1}>c$, so that $|\F'|\ge ck'$)}. Finally, let $H\subseteq \F$
  be such that $|H|=k$, and define
  \[\calE^{(\ell)}_{\F,k}{(w)}=\bigcup_{g\in
      \calC_{\F}:\;g|_{H}=w\;}\bigsqcup_{\beta\in \F\setminus
      H}\calE_{\F ',k'} ^{(\ell-1)}{(\qu{g(\beta)})}.
 \](Note that for $\ell=1$ we just use $\calE^{(1)}_{\F,k}{(w)}=\bigcup_{g\in
 	\calC_{\F}:\;g|_{H}=w\;}\bigsqcup_{\beta\in \F\setminus
 	H}\calE_{k'} ^{(0)}{(\qu{g(\beta)})}$).

\end{definition}

{That is, $v\in \calE_{\F,k}^{(\ell)}(w)$ if there exists a polynomial
$g\in \calC_{\F}$ such that
  $v= \bigsqcup_{\beta\in\F \setminus H} v_\beta$, where $v_\beta \in \calE_{\F ' ,k'}^{(\ell-1)}(\qu {g(\beta)})$ for every $\beta\in\F \setminus H$ and $g|_H=w$ (where we identify the $0$ and $1$ elements of $\F$ with $0$ and $1$ bits respectively). When the context is clear, we sometimes omit the subscripts.}
\medskip

Our choice of the constants $c,d,c_\ell$ needs to satisfy the following conditions. The constant $c$ is chosen such that $H$ will not be an overly large portion of $\F$ (this requirement is used in Lemma~\ref{lem:BaseDistance}). The constant $d$ is needed to subsume the length of $\PCPP$ proof string which is part of the construction (this requirement is used in Lemma~\ref{lem:Iter-length}). Finally, the constant $c_\ell$ needs to be large enough to enable iteration (as explained in Definition~\ref{def:iteratedEnc} itself).

  \medskip
Let $\ell\ge 0$ be some fixed iteration. The following simple
observation follows by a simple inductive argument using the
definition of the level-$\ell$ coding ensemble, and in particular that
$|\F'|=\polylog|\F|$.

\begin{observation} For $\ell>0$, let $n=|\F|$ and $w\in\zo^{k}$. If $v\in \calE^{(\ell)}{(w)}$, then $m^{(\ell)}_\F \eqdef|v|=n\cdot \poly(\log n)\cdot \poly(\log\log n)\cdots\poly(\log^{(\ell)}n)$, where $\log^{(\ell)}n$ is the $\log$ function iterated $\ell$ times.
\end{observation}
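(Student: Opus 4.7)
The plan is to prove the length formula by induction on $\ell$, unfolding Definition~\ref{def:iteratedEnc} one level at a time and tracking how the parameters $(\F, k)$ evolve into $(\F', k')$.

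First, I would record the base case as a direct consequence of Definition~\ref{def:HardCode}: any element of $\calE^{(0)}_k(w) = \calH_k(w)$ has length exactly $4k$. This handles $\ell = 0$ (and gives the $\ell = 1$ case immediately after the first unfolding). Next, for the inductive step, fix $\ell \geq 1$ and assume the formula for level $\ell-1$. By Definition~\ref{def:iteratedEnc}, any $v \in \calE^{(\ell)}_{\F, k}(w)$ has the form $\bigsqcup_{\beta \in \F \setminus H} v_\beta$ with $v_\beta \in \calE^{(\ell-1)}_{\F', k'}(\qu{g(\beta)})$ for some $g \in \calC_\F$ with $g|_H = w$. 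Hence
\[
m^{(\ell)}_\F \;=\; |\F \setminus H| \cdot m^{(\ell-1)}_{\F'} \;\leq\; n \cdot m^{(\ell-1)}_{\F'},
\]
and the induction hypothesis yields
\[
m^{(\ell-1)}_{\F'} \;=\; |\F'| \cdot \poly(\log |\F'|) \cdot \poly(\log^{(2)}|\F'|)\cdots \poly(\log^{(\ell-1)}|\F'|).
\]

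The only real content is translating sizes in $\F'$ back into sizes in $\F$. From the definition of $r$ in Definition~\ref{def:iteratedEnc}, namely $r$ is minimal with $(\log|\F|)^d \leq 2^{2\cdot 3^r}$, I get both $2 \cdot 3^r \geq d \log^{(2)}|\F|$ and $2 \cdot 3^r \leq 3 d \log^{(2)}|\F|$ (because increasing $r$ by one multiplies $2\cdot 3^r$ by $3$). Therefore $|\F'| = 2^{2 \cdot 3^r}$ satisfies $\log |\F'| = \Theta(\log^{(2)} n)$, which by a trivial induction on $j \geq 1$ gives $\log^{(j)}|\F'| = \Theta(\log^{(j+1)} n)$. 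Substituting into the induction hypothesis,
\[
m^{(\ell-1)}_{\F'} \;=\; \poly(\log n)\cdot \poly(\log^{(2)} n)\cdot \poly(\log^{(3)} n)\cdots \poly(\log^{(\ell)} n),
\]
where the leading factor $|\F'| = 2^{\Theta(\log^{(2)} n)} = \poly(\log n)$ absorbs into the first term.

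Multiplying by the factor $|\F \setminus H| \leq n$ then yields
\[
m^{(\ell)}_\F \;=\; n \cdot \poly(\log n)\cdot \poly(\log^{(2)} n)\cdots \poly(\log^{(\ell)} n),
\]
which is the claimed bound. No step is genuinely hard; the only point requiring a small amount of care is the double-sided estimate $2\cdot 3^r = \Theta(\log^{(2)} n)$ needed to conclude that $\log^{(j)}|\F'|$ and $\log^{(j+1)} n$ differ only by a constant factor, so that each $\poly(\log^{(j)}|\F'|)$ collapses cleanly into a $\poly(\log^{(j+1)} n)$ factor.
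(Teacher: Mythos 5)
Your proof is correct and follows the same route the paper sketches: an induction on $\ell$ unfolding the recursive definition, with the key fact that $|\F'| = \polylog|\F|$ (equivalently $\log^{(j)}|\F'| = \Theta(\log^{(j+1)}n)$) used to translate the inner length into factors of iterated logs of $n$. The paper states this only as a one-line remark, so your write-up is just a more detailed rendering of the same simple inductive argument.
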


%
%
%
%
%

When the field $\F$ is clear from context, we shall usually write $m^{(\ell)}$ as a shorthand for $m^{(\ell)}_\F$.
The following lemma, proved in the next subsection, establishes the existence of short length $\Spiel$-$\PCU$s for our code ensembles.
\begin{lemma}\label{lem:IterPCPP}For any $\ell\ge 0$, the code ensemble
	$\calE_{\F ,k}^{(\ell)}$ admits a $q(\eps,\delta)$-query, length-$t$ $\Spiel$-$\PCU$ for $t=O(m^{(\ell)}\cdot \polylogell m^{(\ell)})$.
\end{lemma}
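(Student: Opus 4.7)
The plan is to prove this by induction on $\ell$. For the base case $\ell = 0$ we have $\calE^{(0)}_k = \calH_k$, and Lemma~\ref{lem:HardCodePCUSS} directly supplies a $\poly(k)$-length $\Spiel$-$\PCU$; since $m^{(0)} = 4k$ and $\polylog^{(0)}$ is polynomial, this comfortably fits the required bound.

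For the inductive step, given input $v = \bigsqcup_{\beta \in \F \setminus H} v_\beta$ and oracle $\tau \in \zo^{100k}$ purporting to equal $\Spiel(w)$, I would structure the proof $\pi$ as a concatenation of three components: (i) claimed inner values $\tau_\beta \in \zo^{100k'}$ for each $\beta \in \F \setminus H$, supposed to equal $\Spiel(\qu{g(\beta)})$; (ii) inner $\Spiel$-$\PCU$ proofs $\pi_\beta$ obtained from the inductive hypothesis, each certifying that $v_\beta \in \calE^{(\ell-1)}_{\F',k'}(\qu{g(\beta)})$ with $\tau_\beta$ as the value oracle; and (iii) an outer $\PCPP$ proof, constructed via Lemma~\ref{lem:PCPP-Irit}, for the auxiliary property $\Pi$ stating that the concatenation $\tau \sqcup \bigsqcup_{\beta \in \F \setminus H} \tau_\beta$ equals $\bigsqcup_{\alpha \in \F} \Spiel(g(\alpha))$ for some polynomial $g \in \calC_\F$ (interpreting the values of $g$ on $H$ as lying in $\{0,1\} \subset \F$, which is what $\tau = \Spiel(w)$ encodes). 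The property $\Pi$ is decidable by a quasilinear-size circuit that decodes each $\Spiel$-block and then invokes fast univariate interpolation (Lemma~\ref{lem:FastInterpolation}) to check degree consistency. The verifier flips a constant-biased coin and either (a) samples a uniformly random $\beta \in \F \setminus H$ and runs the inner $\Spiel$-$\PCU$ verifier on $(v_\beta, \tau_\beta, \pi_\beta)$ at a slightly tightened proximity parameter, or (b) runs the outer $\PCPP$ verifier on $\tau \sqcup \bigsqcup_\beta \tau_\beta$, emulating each query by accessing $\tau$ or the relevant block of $\pi$. Completeness is immediate from the inductive completeness and the existence of the underlying polynomial $g$.

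The main obstacle will be soundness. Fix a unique-decoding radius $\rho \le \eps/4$ for $\calE^{(\ell-1)}$, permissible if we maintain as a parallel induction that the ensembles $\calE^{(\ell)}$ have a constant lower bound on minimum distance (this is an auxiliary distance lemma that I would state separately). Partition $\F \setminus H$ into $B = \{\beta : v_\beta \text{ is $\rho$-close to } \calE^{(\ell-1)}(u_\beta) \text{ for a (necessarily unique) } u_\beta\}$ and its complement. If $|(\F \setminus H) \setminus B|$ is an $\Omega(\eps)$ fraction, a random inner check hits an offending $\beta$ with noticeable probability and rejects by inductive soundness, no matter what $\tau_\beta$ is supplied. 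Otherwise, I would argue that the ``honest'' string $\Spiel(w) \sqcup \bigsqcup_{\beta \in B} \Spiel(\qu{u_\beta})$, extended arbitrarily on the rest, must be $\Omega(\eps)$-far from every element of $\Pi$: if it were close, fast interpolation would yield a polynomial $g \in \calC_\F$ with $g|_H = w$ and $g(\beta) = u_\beta$ for most $\beta \in B$, producing an element of $\calE^{(\ell)}(w)$ within distance $\eps$ of $v$, contradicting the ``far'' hypothesis of Definition~\ref{def:SpielPSU}. Since any prover-supplied string $\tau \sqcup \bigsqcup_\beta \tau_\beta$ can differ from the honest string only on positions where the inner check already catches mismatches, it is likewise $\Omega(\eps)$-far from $\Pi$, and so the outer $\PCPP$ rejects with the required probability.

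For the length, component (i) contributes $O(|\F| \log|\F|)$ bits and component (iii) contributes $O(|\F| (\log|\F|)^{1+o(1)})$ via Lemma~\ref{lem:PCPP-Irit}; both are absorbed into $O(m^{(\ell)})$ thanks to the lower bound $m^{(\ell-1)}_{\F'} \ge (\log|\F|)^d$ built into Definition~\ref{def:iteratedEnc}, provided $d$ is large enough. Component (ii) contributes $|\F \setminus H| \cdot O(m^{(\ell-1)}_{\F'} \polylog^{(\ell-1)} m^{(\ell-1)}_{\F'}) = O(m^{(\ell)} \polylog^{(\ell)} m^{(\ell)})$ by induction, using $|\F'| = \polylog|\F|$ so that $\log^{(\ell-1)} m^{(\ell-1)}_{\F'} = \Theta(\log^{(\ell)} m^{(\ell)})$ up to constants. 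Summing these yields the claimed bound, and the query count grows by the constant factor $O(1/\eps)$ per level, consistent with the $(2^\ell/\eps)^{O(\ell)}$ figure noted after the theorem.
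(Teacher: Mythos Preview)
Your plan is structurally the same as the paper's: the proof string is the concatenation of the claimed $\Spiel$-values $\tau_\beta$, the inductive inner proofs $\pi_\beta$, and an outer proof that the $\tau_\beta$'s (together with $\tau$) are consistent with some $g\in\calC_\F$; the verifier runs the outer check and $O(1/\eps)$ random inner checks; and the length analysis matches the paper's Lemma~\ref{lem:Iter-length} essentially verbatim. The paper packages the outer check as a $\Spiel$-$\PCU$ for an auxiliary code ensemble $\calL$ (Lemma~\ref{lem:Spiel-PCU-for-L}) rather than a raw $\PCPP$ for a property $\Pi$, but that is cosmetic.

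Where your route diverges is the soundness argument. The paper argues ``outer first'': if the $\calL$-verifier does not reject then it \emph{fixes} a polynomial $g$ (and a $w$ with $\tau$ close to $\Spiel(w)$), and then the inner checks are analyzed against this fixed $g$. You argue ``inner first'': uniquely decode each $v_\beta$ to some $u_\beta$, define the honest value string, and force the outer check to reject it. Your direction is valid but it buys you two extra obligations the paper avoids: you must invoke a minimum-distance bound for $\calE^{(\ell-1)}$ to get unique decoding (the paper proves such a bound, Lemma~\ref{lem:minDist-E}, only later and for a different purpose), and you must do a hybrid argument to pass from ``honest string far from $\Pi$'' to ``prover's string far from $\Pi$''. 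The paper's order sidesteps both.

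There is one real gap in your outer check. In your property $\Pi$ the string $\tau\in\zo^{100k}$ occupies a $\Theta(k/(|\F|\log|\F|))$ fraction of the input, which is vanishing. A $\PCPP$ for $\Pi$ with proximity $\Theta(\eps)$ therefore says essentially nothing about $\tau$, so in the case where $v$ \emph{is} $\eps$-close to some $\calE^{(\ell)}(w_0)$ but $\dist(\tau,\Spiel(w_0))>\eps$, your argument does not force rejection. The paper handles this automatically by using the $\Spiel$-$\PCU$ abstraction of Lemma~\ref{lem:Quasilinear->PCU}, which internally repeats $\Spiel(w)$ enough times to balance the two parts. If you insist on a raw $\PCPP$, you must perform this repetition yourself; once you do, your argument goes through.
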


\subsection{Proof of Lemma~\ref{lem:IterPCPP}}
\label{subsec:prof_of_main_lemma}
{We start by defining the $\PCU$ proof string for a given  $v\in \calE_{\F ,k}^{(\ell)}(w)$ for some $w\in\zo^k$.}

\begin{definition}[The $\PCU$ Proof String] \label{def:ProofP_l} For
  $\ell=0$, let $v\in\calE_{k}^{(0)}(w)$ and $\Value^{(0)}(v)=\Spiel(w)$. We define the proof string for $v$, $\Proof^{(0)}(v)$, as the one guaranteed by Lemma~\ref{lem:HardCodePCUSS}  (note that the length of  $\Proof^{(0)}(v)$ is $\poly(k)$).
\medskip



  For $\ell>0$, let $g\in \calC_{\F}$ and $w\in\zo^k$ be such that
  $v\in \bigsqcup_{\beta\in \F\setminus H} \calE ^{(\ell-1)}(\qu
  {g(\beta)})$, $\Value^{(\ell)}(v)=\Spiel(w)$ and $g|_H=w$. In addition, set
  $S_v\eqdef\bigsqcup_{\beta\in
    \F\setminus H}\Value^{(\ell-1)}(v_\beta)=\bigsqcup_{\beta\in \F\setminus H}\Spiel( {g(\beta)})$. The proof string for
  $v\in\calE_{\F ,k}^{(\ell)}$ is defined as follows.
  \[
    \Proof^{(\ell)}(v)= S_v \sqcup\bigsqcup_{\beta\in\F \setminus H} \Proof ^{(\ell-1)}(v_\beta)\sqcup \Proof_{\calL}\left(S_v\right)\]
 where the code ensemble
  $\calL:\zo^k \to 2^{\zo^{O(|\F|\cdot \log|\F |)}}$ is defined as follows. Given $w\in \zo^k$, $S\in \calL(w)$ if and only
  if there exists a polynomial $g\in \calC_{\F}$ such that the
  following conditions are satisfied.
  \begin{enumerate}
    \item $g|_H=w$.
    \item $S=\bigsqcup_{\beta\in \F\setminus H} \Spiel(g(\beta))$.
\end{enumerate}

\end{definition}

The following lemma establishes the existence of a $\Spiel$-$\PCU$ for $\calL$.

\begin{lemma} \label{lem:Spiel-PCU-for-L} $\calL$ has a $q(\eps,\delta)$-query length-$t$ $\Spiel$-$\PCU$ for $t=O(|\F |\cdot\polylog |\F |)$.
\end{lemma}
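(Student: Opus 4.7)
The plan is to invoke Lemma~\ref{lem:Quasilinear->PCU} directly, which reduces the task to exhibiting a Boolean circuit of size $O(|\F|\cdot\polylog|\F|)$ that, given explicit access to both $w\in\zo^k$ and a candidate string $S\in\zo^{O(|\F|\cdot\log|\F|)}$, decides membership of $S$ in $\calL(w)$. Once we have such a circuit of size $t = O(|\F|\cdot\polylog|\F|)$, Lemma~\ref{lem:Quasilinear->PCU} yields a $\Spiel$-$\PCU$ with proof length $t' = t \cdot \polylog t = O(|\F|\cdot\polylog|\F|)$, as required.

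The verification circuit would first parse $S$ as $\bigsqcup_{\beta\in\F\setminus H} S_\beta$ with each $S_\beta\in\zo^{100\log|\F|}$, and for every block check (using Theorem~\ref{thm:spiel}) that $S_\beta$ lies in the image of $\Spiel$, while also extracting the corresponding field element $y_\beta\in\F$; each such sub-task costs $\log|\F|\cdot\polylog\log|\F|$ gates per block, totaling $O(|\F|\cdot\polylog|\F|)$ across all $\beta\in\F\setminus H$. It then remains to verify the existence of a polynomial $g\in\calC_\F$ with $g|_H=w$ and $g(\beta)=y_\beta$ for every $\beta\in\F\setminus H$.

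For this polynomial-consistency step, I would hard-code a set $T\subseteq\F$ of size $|\F|/2 + 1$ containing $H$, and apply fast univariate interpolation (Lemma~\ref{lem:FastInterpolation}) to the data $\{(h,w_h):h\in H\}\cup\{(\beta,y_\beta):\beta\in T\setminus H\}$ to recover the unique polynomial $g$ of degree at most $|\F|/2$ consistent with these $|T|$ points; this uses $O(|\F|\cdot\polylog|\F|)$ field operations. Next, fast multipoint evaluation computes $g(\beta)$ for each $\beta\in\F\setminus(H\cup T)$, again in $O(|\F|\cdot\polylog|\F|)$ field operations, and the circuit compares each value to the claimed $y_\beta$. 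Since each field operation over $\F=\mathrm{GF}(2^{\log|\F|})$ can be implemented by a $\polylog|\F|$-size Boolean sub-circuit (using the explicit irreducible polynomials mentioned in the paragraph preceding Definition~\ref{def:iteratedEnc}, or by a standard quasilinear multiplication algorithm), the entire verification circuit has size $O(|\F|\cdot\polylog|\F|)$.

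The main thing to be careful about is the quasi-linearity of field arithmetic and polynomial operations: ensuring both fast interpolation and fast multipoint evaluation are available in $\widetilde{O}(|\F|)$ field operations, and that the corresponding Boolean circuits incur only $\polylog|\F|$ blow-up per field operation. Aside from this, the argument is a straightforward application of Lemma~\ref{lem:Quasilinear->PCU}.
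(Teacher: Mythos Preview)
Your proposal is correct and follows essentially the same approach as the paper: build a quasilinear-size Boolean circuit for deciding membership of $S$ in $\calL(w)$ and then invoke Lemma~\ref{lem:Quasilinear->PCU}. The only cosmetic difference is in how the polynomial check is carried out---the paper interpolates on \emph{all} $|\F|$ points (using $w$ on $H$ and the decoded $y_\beta$ on $\F\setminus H$) via Lemma~\ref{lem:FastInterpolation} and then verifies that the resulting polynomial has degree at most $|\F|/2$, whereas you interpolate on a size-$(|\F|/2+1)$ subset and then use fast multipoint evaluation to check consistency on the remaining points; both are standard $\widetilde{O}(|\F|)$ procedures and either works.
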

\begin{proof} By Theorem~\ref{thm:spiel}, there exists a quasilinear size circuit that decodes $\Spiel( {\alpha})$. Using such a circuit, we can decode $g(\beta)$ from $S$ for every $\beta\in \F$. Then, using all the values $g(\beta)$ and $w$ (where the $i$-th bit of $w$ correspond to the value of the $i$-th element in $H$ according to the ordering), we use Theorem~\ref{lem:FastInterpolation} to interpolate the values and achieve a representation of a polynomial $g:\F\to\F$. If $g\in \calC_{\F}$ we accept $S$ and otherwise we reject. Since deciding if $S\in\calL(w)$ has a quasilinear size circuit, by Lemma~\ref{lem:Quasilinear->PCU}, there is a quasilinear length $\Spiel$-$\PCU$ for $\calL$.
\end{proof}

\medskip


Having defined $\Proof^{(\ell)}$, we first provide an upper bound on the bit
length of the {prescribed} proof string.  For $\ell> 0$, let $z^{(\ell)}_{\F,k}$ 
denote the bit length of the proof for membership in $\calE^{(\ell)}$ as defined in
Definition~\ref{def:ProofP_l}, where for $\ell=0$ we replace the
(nonexistent) field $\F$ with $|w|$.

\medskip

The following lemma, establishing the proof string's length, relies on our choice of the constant $d$ in Definition~\ref{def:iteratedEnc}. In particular, $d$ needs to be large enough to subsume the size of $\Proof_\calL(\cdot)$

\begin{lemma} \label{lem:Iter-length}For any $\ell\ge 0$, we have that
  $z^{(\ell)}_{\F,k}=O \big(m^{(\ell)}\cdot\polylogell m^{(\ell)} \big)$.
\end{lemma}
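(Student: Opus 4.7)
The plan is to prove this by induction on $\ell$, bounding each of the three concatenated components of $\Proof^{(\ell)}(v)$ separately. The base case $\ell=0$ is immediate from Lemma~\ref{lem:HardCodePCUSS}: we have $z^{(0)}_k = \poly(k)$, and since $m^{(0)} = 4k$ while $\polylogell[0] m = \poly(m)$, the bound $z^{(0)}_k = O(m^{(0)} \cdot \polylogell[0] m^{(0)})$ holds trivially.

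For the inductive step, I would decompose the prescribed proof string per Definition~\ref{def:ProofP_l} into its three pieces and bound each. Writing $n_\ell$ for $m^{(\ell)}_\F$ and $n_{\ell-1}$ for $m^{(\ell-1)}_{\F'}$, the basic identity $n_\ell = |\F \setminus H| \cdot n_{\ell-1} = \Theta(|\F| \cdot n_{\ell-1})$ (using $|H|=k$ and $|\F| \geq ck$) will drive everything. The three pieces are: (i) the string $S_v$, which has length $|\F \setminus H| \cdot |\Spiel(\cdot)| = O(|\F|\cdot \log|\F|)$; (ii) the concatenation $\bigsqcup_\beta \Proof^{(\ell-1)}(v_\beta)$, of length at most $|\F \setminus H|\cdot z^{(\ell-1)}_{\F',k'}$; and (iii) the string $\Proof_\calL(S_v)$, which by Lemma~\ref{lem:Spiel-PCU-for-L} has length $O(|\F|\cdot \polylog|\F|)$.

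The main term is (ii), and the main obstacle is the iterated-log bookkeeping. I would apply the inductive hypothesis to get that (ii) is at most $O(|\F|\cdot n_{\ell-1} \cdot \polylogell[\ell-1] n_{\ell-1}) = O(n_\ell \cdot \polylogell[\ell-1] n_{\ell-1})$, and then argue that $\polylogell[\ell-1] n_{\ell-1} = O(\polylogell n_\ell)$. The latter follows from the construction of $\F'$: since $r$ is chosen so that $(\log|\F|)^d \leq 2^{2 \cdot 3^r}$ with $r$ minimal, one has $|\F'| = \Theta(\polylog|\F|)$; unwinding the recursion yields $n_{\ell-1} = \polylog|\F|$ (with a larger polynomial), hence $\log^{(\ell-1)} n_{\ell-1} = \Theta(\log^{(\ell)}|\F|)$. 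On the other hand, since $n_{\ell-1}$ is only $\polylog|\F|$, the factor $|\F|$ dominates in $n_\ell = \Theta(|\F| \cdot n_{\ell-1})$, so $\log^{(\ell)} n_\ell = \Theta(\log^{(\ell)}|\F|)$ as well, yielding the desired equality up to constants inside the polynomial.

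Finally, I would dispose of pieces (i) and (iii) by showing they are absorbed into $O(n_\ell)$. This is where the choice of the constant $d$ in Definition~\ref{def:iteratedEnc} enters: since $|\F'| \geq (\log|\F|)^d$ and $n_{\ell-1} \geq |\F'|$, we have $n_\ell \geq |\F|\cdot (\log|\F|)^d$, which for $d$ taken sufficiently large dominates $|\F|\cdot \polylog|\F|$, so (i) and (iii) contribute only $o(n_\ell)$. Summing the three bounds yields $z^{(\ell)}_{\F,k} = O(n_\ell \cdot \polylogell n_\ell)$, completing the induction.
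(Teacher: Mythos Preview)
Your proposal is correct and follows essentially the same approach as the paper: induction on $\ell$, decomposition of $\Proof^{(\ell)}(v)$ into its three pieces, applying the inductive hypothesis to the recursive piece, and using the choice of $d$ to absorb $|\Proof_\calL(S_v)|$. Your handling of the iterated-log bookkeeping is in fact more explicit than the paper's. One minor caveat (shared by the paper's own argument): the claim $n_{\ell-1}\ge |\F'|$ fails at $\ell=1$, since there $n_0=4k'=4\log|\F|$ while $|\F'|\ge(\log|\F|)^d$; but this is harmless, as for $\ell=1$ one directly has $|\Proof_\calL(S_v)|=O(|\F|\cdot\polylog|\F|)=O(n_1\cdot\polylog n_1)$, which is within the target bound.
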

\begin{proof} The proof follows by induction on $\ell$. The base case ($\ell=0$) follows directly from the definition of $\calP^{(0)}$ by our convention that  $\log^{(0)}|w|=|w|$.

 Consider $\ell>0$, and note that since the size of $S_v$ is $O(|\F|\log|\F |)$, the size of $\Proof_{\calL}(S_v)$ is  $O(|\F | \cdot \polylog |\F |)$. By combining the above with the definition of the proof string we have \[z^{(\ell)}_{\F,k}\le |\F|\cdot \polylog|\F|+|\F|\cdot z^{(\ell-1)}_{\F',k'}.\]
Now, assume that $z^{(\ell-1)}_{\F',k'}=O(m^{(\ell-1)}\cdot\mathrm{polylog}^{(\ell-1)}|\F'|)$.
Note that since the global constant $d$ was chosen so that $|\F |\cdot |\F'| \ge |\Proof_{\calL}(S_v)|$, we have that $|\F |\cdot z^{(\ell-1)}_{\F',k'}\ge |\F |\cdot |\F'| \ge  |\Proof_{\calL}(S_v)| $. Therefore,
 \[m^{(\ell)}=\Theta(|\F |\cdot m^{(\ell-1)})=\Omega(|\F|\cdot|\F'| )=\Omega(|\F|\cdot\polylog|\F|),\]
 so that $|\F|\cdot\polylog|\F|=O(m^{(\ell)})$, and

$$z^{(\ell)}_{\F,k}=O(|\F |\cdot z^{(\ell-1)}_{\F'}).$$
In addition, by the fact that $m_{\ell}=\Theta(|\F|\cdot m^{(\ell-1)})$ and the induction hypothesis we obtain
\[|\F|\cdot z^{(\ell-1)}_{\F',k'}=O(|\F|\cdot m^{(\ell-1)}\cdot \mathrm{polylog}^{(\ell-1)}|\F'| )=O(m^{(\ell)}\cdot \polylogell |\F|)=O(m^{(\ell)}\cdot \polylogell m_{\ell}).\]
So overall, we get that $z^{(\ell)}_{\F,k}=O(m^{(\ell)}\cdot\polylogell m^{(\ell)})$ as required.
\end{proof}

\medskip
Next, for an alleged proof $\pi=\Proof^{(\ell)}(v)$, we use the notation  $\pi|_{\Dom(X)}$ to denote the restriction of $\pi$ to the bits that correspond to $X$ in $\pi$ as defined in Definition~\ref{def:ProofP_l}. For example, $\pi|_{\Dom(\Value^{(\ell-1)}(v_\beta))}$ refers to the bits that represent $\Value^{(\ell-1)}(v_\beta)$.

We introduce the verifier procedure for $\calE_{\F  ,k}^{(\ell)}$ (see Figure~\ref{fig:VerifierProcedureIterated}), and prove its completeness and soundness. For technical considerations, the verifier procedure is only defined when the soundness parameter $\delta$ is small enough (as a function of $\ell$); the soundness amplification argument from Subsection \ref{subsec:PCPP-def} can easily take care of the situation where $\delta$ is larger, by running sufficiently many independent instances of the verification step.
\begin{figure}[ht!]
	\begin{framed}
		\noindent \underline{$\texttt{Verifier-Procedure}_{\;\calE^{(\ell)}}$}
		\begin{flushleft}
			\noindent {\bf Input:}  Parameters
			$\eps,\delta\in (0,1)$, an input
			$v\in\{0,1\}^{m^{(\ell)}}$, an alleged value $\tau\in \zo^{100k}$ of $v$, and an alleged proof
			$\pi\in\{0,1\}^{z^{(\ell)}_{\F,k}}$
			for $v$.

			\begin{enumerate}
				\item If $\ell=0$, use the $\PCU$ for $\calE ^{(0)}$ with parameters $\eps$ and $\delta$.
				\item If $\ell>0$:
			\begin{enumerate}
				\item Use the $\PCU$ verifier for $\calL$
				with radius $\epsilon/300$ and soundness
				$\delta$, to verify the unveiling of $\pi|_{\Dom(S_v)}$, using $\tau$ as the value oracle and $\pi|_{\Dom(\Proof_\calL(S_v))}$ as the proof oracle.\label{step1iter}
				\item For ${6}/{\epsilon}$ many times:\label{step 2iter}
				\begin{enumerate}
					\item Pick $\bbeta\in \F\setminus H$ uniformly at random.
					\item Use the $\PCU$ verifier procedure for $\calE^{(\ell-1)}$ with parameters $\eps/3$ and $2\delta$, to verify the unveiling of  $v_{\bbeta}$, using $\pi|_{\Dom(\Value^{(\ell-1)}(v_\beta))}$ as the value oracle and
					$\pi|_{\Dom(\Proof^{(\ell-1)}(v_{\bbeta}))}$ as the proof oracle.
				\end{enumerate}
			\end{enumerate}
			\end{enumerate}
			If any of the stages rejected then \textbf{Reject}, and otherwise \textbf{Accept}.

		\end{flushleft}\vskip -0.14in
	\end{framed}\vspace{-0.25cm}
	\caption{Description of $\texttt{Verifier-Procedure}_{\;\calE^{(\ell)}}$.\vspace{-0.25cm}} \label{fig:VerifierProcedureIterated}
\end{figure}

\noindent Before proceeding to the completeness and soundness proofs, let us analyze the query complexity. Denote by $Q_{\ell}(\eps, \delta)$ the query complexity of the verifier in the above procedure for a given $\ell \geq 0$. 
It follows from the recursive description of $\texttt{Verifier-Procedure}_{\;\calE^{(\ell)}}$ and the proof of Lemmas \ref{lem:Quasilinear->PCU} and \ref{lem:Spiel-PCU-for-L} that the query complexity satisfies the recurrence relation $Q_{\ell}(\eps, \delta) \leq O(1/\eps) \cdot Q_{\ell-1}(\eps/O(1)), \delta \cdot O(1)) + q^*(\Theta(\eps), \Theta(\delta))$, where $q^*(\eps^*, \delta^*) = O((\eps^*)^{-1})$ is the query complexity of Dinur's $\PCP$ \cite{Din07} with detection radius $\eps^*$ and soundness parameter $\delta^* \leq 1/2$; and furthermore, that $Q_{0}(\eps, \delta) \leq q^*(\Theta(\eps), \Theta(\delta))$.
Thus, we conclude by induction that, provided that $\delta \leq 2^{-\ell-1}$,
$$Q_{\ell}(\eps, \delta) \leq \frac{C}{\eps} \cdot \frac{C^2}{\eps} \cdot \ldots \cdot \frac{C^\ell}{\eps} \cdot q^*(\eps / 2^{O(\ell)}, \delta \cdot 2^{\ell}) = 2^{O(\ell^2)} \eps^{-O(\ell)},$$
where $C > 0$ is a large enough absolute constant. To achieve any given soundness $\delta > 1/2$, we can amplify by repeating the verifier procedure with parameter $\delta' = 2^{-\ell-1}$ a total of $2^{O(\ell)} \cdot \log\left((1-\delta)^{-1}\right)$ times and rejecting if any of these instances rejected. The query complexity is bounded by
$$2^{O(\ell^2)} \eps^{-O(\ell)} \cdot 2^{O(\ell)} \cdot \log\left((1-\delta)^{-1}\right) = (2^\ell / \eps)^{O(\ell)} \cdot \log\left((1-\delta)^{-1}\right),$$
as desired. The next two lemmas establish the completeness and soundeness of the verifier procedure, respectively.
\begin{lemma}\label{lem:Iter-completeness} If there exist $w\in \zo^k$ for which $v\in \calE_{\F ,k}^{(\ell)}(w)$, then $\emph{\texttt{Verifier-Procedure}}_{\;\calE^{(\ell)}}$ accepts $v$ with probability $1$ when supplied with oracle access to the corresponding $\Proof^{(\ell)}(v)$  and $\tau=\Value^{(\ell)}(v)=\Spiel(w)$.
\end{lemma}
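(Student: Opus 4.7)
The plan is to proceed by induction on $\ell$, directly tracking how the prescribed proof string $\Proof^{(\ell)}(v)$ from Definition~\ref{def:ProofP_l} matches the inputs expected by each sub-verifier invoked in $\texttt{Verifier-Procedure}_{\;\calE^{(\ell)}}$.

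For the base case $\ell=0$, if $v \in \calE^{(0)}_k(w) = \calH_k(w)$ and $\tau = \Spiel(w)$, then by Lemma~\ref{lem:HardCodePCUSS} the $\Spiel$-$\PCU$ verifier for $\calH_k$, invoked with the prescribed proof $\Proof^{(0)}(v)$, accepts with probability $1$, so the step 1 dispatch in the verification procedure succeeds.

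For the inductive step $\ell>0$, fix a polynomial $g\in\calC_\F$ with $g|_H = w$ and $v = \bigsqcup_{\beta\in\F\setminus H} v_\beta$ with $v_\beta \in \calE^{(\ell-1)}(\qu{g(\beta)})$, as guaranteed by the assumption $v\in\calE^{(\ell)}_{\F,k}(w)$. The verifier's two tasks are then handled as follows. For step \ref{step1iter}, by construction $S_v = \bigsqcup_{\beta\in\F\setminus H}\Spiel(g(\beta))$ and $\tau = \Spiel(w)$, so the pair $(w, g)$ witnesses that $S_v \in \calL(w)$; hence by the completeness guarantee of the $\Spiel$-$\PCU$ for $\calL$ (Lemma~\ref{lem:Spiel-PCU-for-L}), supplied with the prescribed proof oracle $\pi|_{\Dom(\Proof_\calL(S_v))} = \Proof_\calL(S_v)$ and value oracle $\tau$, the verifier accepts with probability $1$. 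For step \ref{step 2iter}, for every $\bbeta\in\F\setminus H$ the value oracle segment $\pi|_{\Dom(\Value^{(\ell-1)}(v_\bbeta))}$ equals $\Spiel(g(\bbeta)) = \Value^{(\ell-1)}(v_\bbeta)$ (this consistency is the one delicate bookkeeping point: both appearances of $\Spiel(g(\bbeta))$ inside $\Proof^{(\ell)}(v)$, namely the copy inside $S_v$ and the copy used as the value oracle for the recursive call on $v_\bbeta$, come from the same prescribed proof), and the proof oracle segment $\pi|_{\Dom(\Proof^{(\ell-1)}(v_\bbeta))}$ equals $\Proof^{(\ell-1)}(v_\bbeta)$. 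Since $v_\bbeta \in \calE^{(\ell-1)}(\qu{g(\bbeta)})$, the induction hypothesis applied with parameters $\eps/3$ and $2\delta$ yields acceptance with probability $1$ in each of the $6/\eps$ iterations.

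The only non-routine part of this plan is the bookkeeping that the value oracles presented to the recursive invocations are exactly the value strings for which the recursive completeness statement applies; this is handled by reading the definition of $\Proof^{(\ell)}(v)$ carefully, noting that $S_v$ is precisely the concatenation of the $\Value^{(\ell-1)}(v_\beta)$'s. Combining the two steps, every sub-verifier accepts with probability $1$, so the overall verifier accepts with probability $1$, completing the induction.
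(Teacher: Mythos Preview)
Your proposal is correct and follows essentially the same inductive approach as the paper's proof: base case via Lemma~\ref{lem:HardCodePCUSS}, and for $\ell>0$ completeness of Step~\ref{step1iter} via the $\Spiel$-$\PCU$ for $\calL$ together with completeness of Step~\ref{step 2iter} via the induction hypothesis applied to each $v_\beta$. If anything, your version spells out the bookkeeping (that $\pi|_{\Dom(\Value^{(\ell-1)}(v_\beta))}$ is precisely the $\beta$-th block of $S_v$ and hence equals $\Spiel(g(\beta))$) more explicitly than the paper does.
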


\begin{proof} The proof follows by induction on $\ell$. {The base case follows directly from Lemma~\ref{lem:HardCodePCUSS}. Hence, the verifier for $\calE^{(0)}$ supplied with $\Proof^{(0)}(v)$ as the proof oracle and $\Value^{(\ell)}(v)$ as the value oracle, will accept $v$ with probability $1$.}

	Assume that $\texttt{Verifier-Procedure}_{\;\calE^{(\ell-1)}}$ accepts with probability $1$ any valid encoding $v'$ when supplied with the corresponding oracles for $\Value^{(\ell-1)}(v')$ and $\Proof^{(\ell-1)}(v')$. Let $v\in \calE^{(\ell)}$ and write $ v= \bigsqcup_{\beta\in \F\setminus H} v_\beta$, where there exist $w\in\zo^{k}$ and $g\in \calC_{\F}$ such that for all $\beta\in\F\setminus H$, $v_\beta\in \calE^{(\ell-1)}({g(\beta)})$, where $g|_H=w$ and $\tau=\Value^{(\ell)}(v)=\Spiel(w)$. Then, by the definition of the language $\calL$ and the first two components of $\Proof^{(\ell)}(v)$,  Step~(\ref{step1iter}) of $\texttt{Verifier-Procedure}_{\;\calE^{(\ell)}}$ will always accept. In addition, for every $\beta\in\F \setminus H$, we have that $v_\beta\in \calE^{(\ell-1)}$, and therefore by the induction hypothesis, Step~(\ref{step 2iter}) of $\texttt{Verifier-Procedure}_{\;\calE^{(\ell)}}$ will accept the corresponding unveiling for any picked $\beta\in\F \setminus H$.
\end{proof} 

\begin{lemma} \label{lem:Iter-soundness}If for every $w\in \zo^k$ either $\dist(\tau,\Spiel(w))>\eps$ or  $\dist(v,\calE^{(\ell)}(w))>\eps$ (or both), then with probability greater than $\delta$, $\emph{\texttt{Verifier-Procedure}}_{\;\calE^{(\ell)}}$ will reject $v$ regardless of the contents of the supplied proof string.
\end{lemma}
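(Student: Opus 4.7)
I will induct on $\ell$. The base case $\ell=0$ is immediate from Lemma~\ref{lem:HardCodePCUSS}, since $\texttt{Verifier-Procedure}_{\calE^{(0)}}$ simply invokes the $\Spiel$-$\PCU$ for $\calH_k$ with the same parameters $\eps$ and $\delta$.

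For the inductive step I argue by contrapositive: assume both the stated far-from-every-$w$ hypothesis on $(v,\tau)$ and that $\texttt{Verifier-Procedure}_{\calE^{(\ell)}}$ accepts with probability at least $1-\delta$. Then Step~(\ref{step1iter}) accepts with probability at least $1-\delta$, so by Lemma~\ref{lem:Spiel-PCU-for-L} (soundness of the $\calL$-$\PCU$) there is some $w'\in\zo^k$ with $\dist(\tau,\Spiel(w'))\le\eps/300$ and $\dist(\pi|_{\Dom(S_v)},\calL(w'))\le\eps/300$. Since $\eps/300<\eps$, the hypothesis forces $\dist(v,\calE^{(\ell)}(w'))>\eps$, and I fix a polynomial $g\in\calC_\F$ with $g|_H=w'$ witnessing the closeness of $\pi|_{\Dom(S_v)}$ to $\bigsqcup_{\beta}\Spiel(g(\beta))\in\calL(w')$; the inclusion $\bigsqcup_\beta\calE^{(\ell-1)}(\qu{g(\beta)})\subseteq\calE^{(\ell)}(w')$ then gives $\dist(v,\bigsqcup_\beta\calE^{(\ell-1)}(\qu{g(\beta)}))>\eps$.

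Because the $\beta$-blocks have equal length, Hamming distance is additive across them, and standard averaging yields: at least a $2\eps/3$ fraction of $\beta\in\F\setminus H$ satisfy $\dist(v_\beta,\calE^{(\ell-1)}(\qu{g(\beta)}))>\eps/3$, while by Markov at most a $1/100$ fraction satisfy $\dist(\tau_\beta,\Spiel(g(\beta)))>\eps/3$ (writing $\tau_\beta$ for the $\beta$-th block of $\pi|_{\Dom(S_v)}$, which plays the role of the alleged $\Value^{(\ell-1)}(v_\beta)$). Hence a ``good'' set of $\beta$'s of density $\gamma=\Omega(\eps)$ enjoys both (a) $\dist(v_\beta,\calE^{(\ell-1)}(\qu{g(\beta)}))>\eps/3$ and (b) $\dist(\tau_\beta,\Spiel(g(\beta)))\le\eps/3$. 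For such a good $\beta$, the premise of the level-$(\ell-1)$ soundness at radius $\eps/3$ applies to $(v_\beta,\tau_\beta,\pi|_{\Dom(\Proof^{(\ell-1)}(v_\beta))})$: for $w_\beta=\qu{g(\beta)}$ condition (a) suffices; for any other $w_\beta$, condition (b) combined with the constant relative distance $\eps_{\Spiel}$ of $\Spiel$ and the triangle inequality yields $\dist(\tau_\beta,\Spiel(w_\beta))\ge\eps_{\Spiel}-\eps/3>\eps/3$, which holds whenever $\eps$ is smaller than a fixed constant multiple of $\eps_{\Spiel}$ (an assumption I may make without loss of generality). By induction each good $\beta$ causes its recursive call in Step~(\ref{step 2iter}) to reject with probability more than $2\delta$, so across the $6/\eps$ iterations the total rejection probability is at least $1-(1-2\delta\gamma)^{6/\eps}>\delta$, contradicting acceptance.

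The main obstacle is the careful tuning of radii in the composition: the $\calL$-$\PCU$ must be invoked at a radius small enough (a sufficiently small constant times $\eps$, or possibly a multiple of $\eps^2$, refining the stated $\eps/300$ if needed) to guarantee that the fraction of $\beta$'s failing (b) is negligible compared to the $\Theta(\eps)$ fraction provided by (a), ensuring $\gamma=\Omega(\eps)$ uniformly in $\eps$, while still keeping the per-level query complexity polynomial in $1/\eps$. A related subtlety is the $w_\beta\ne\qu{g(\beta)}$ branch of the induction hypothesis, which relies essentially on $\Spiel$ being a constant-distance code and constrains the admissible range of $\eps$.
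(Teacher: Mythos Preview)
Your overall strategy matches the paper's: induct on $\ell$, use soundness of the $\calL$-$\PCU$ in Step~(\ref{step1iter}) to extract $g$ and $w'$, then run a block-by-block analysis via the level-$(\ell-1)$ hypothesis. The one genuine gap is quantitative, in your choice of threshold for condition~(b). With threshold $\eps/3$, Markov bounds the fraction failing~(b) by $(\eps/300)/(\eps/3)=1/100$; subtracting this \emph{constant} from the $\ge 2\eps/3$ fraction satisfying~(a) can give nothing once $\eps<3/200$, so the claim $\gamma=\Omega(\eps)$ is not justified as written. Shrinking the $\calL$-radius to $\Theta(\eps^2)$, as you suggest, would repair this but changes the verifier and the query-complexity recursion; your alternative ``a sufficiently small constant times $\eps$'' does not help, since the Markov bound on~(b)-violators would remain an $\eps$-independent constant.

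The paper's remedy keeps the verifier unchanged: take the ``bad'' threshold to be the \emph{fixed constant} $1/100$ rather than $\eps/3$. Then at most an $(\eps/300)/(1/100)=\eps/3$ fraction of $\beta$'s are bad, and the good set has density $\ge 1-\eps/3$. On each good $\beta$ the level-$(\ell-1)$ hypothesis at radius $\eps/3$ applies exactly as you argue in your ``$w_\beta\ne\qu{g(\beta)}$'' branch: $\dist(\tau_\beta,\Spiel(w_\beta))\ge \eps_{\Spiel}-1/100>\eps/3$ whenever $\eps$ lies below a fixed constant. The bookkeeping is then the mirror image of yours: assuming acceptance with probability $\ge 1-\delta$, at most an $\eps/3$ fraction of the good $\beta$'s can have $v_\beta$ that is $(\eps/3)$-far from $\calE^{(\ell-1)}(\qu{g(\beta)})$ (otherwise one such $\beta$ is hit in $6/\eps$ draws with probability $\ge 1/2$ and rejected with conditional probability $>2\delta$), whence the contributions of bad, good-far, and good-near $\beta$'s sum to at most $\eps/3+(1-\eps/3)\cdot\eps/3+(1-\eps/3)^2\cdot\eps/3\le\eps$, contradicting $\dist(v,\calE^{(\ell)}(w'))>\eps$.
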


\begin{proof} Let $\tau\in \zo^{100k}$ be an alleged value for $v$, and $\pi\in \{0,1\}^{z^{(\ell)}_{\F,k}}$ be an alleged proof string for $v$. We proceed by induction on $\ell$. {For $\ell=0$ we use the $\PCU$ verifier for $\calE^{(0)}$ with error $\eps$ and soundness $\delta$ to check that $v$ is a member of the code ensemble $\calE^{(0)}$ and $\tau$ is its value. If the $\PCU$ verifier for $\calE^{(0)}$ rejects with probability at most $\delta$, then there exist $w\in\{0,1\}^k$ such that $\dist(v,\calE ^{(0)}(w))\le \eps$ and $\dist(\tau,\Spiel(w))\le \eps$, and the base case is complete.}

Next assume that the lemma holds for $\ell-1$. If the $\PCU$  verifier for $\calL$ in Step~(\ref{step1iter}) rejects with probability at most $\delta$, then there exist a function $g\in\calC_{\F}$ and $w\in\zo^{k}$ for which {$g|_H=w$} so that
\[\dist(\pi|_{\Dom(S_v)},\Spiel({g|_{\F \setminus H}}))\le\epsilon/300 \qquad\text{and}\qquad\dist(\tau,\Spiel(w))\le \epsilon/300.\]
In particular, the leftmost inequality means that for at most $\frac{\eps}{3}|\F \setminus H|$ of the elements  $\beta\in\F\setminus H$, it holds that
\[\dist(\pi|_{\Dom(\Value^{(\ell-1)}(v_\beta))}, \Spiel({g(\beta)})>1/100.\]
We refer to elements $\beta\in \F \setminus H$ satisfying the above inequality as \emph{bad} elements, and to the rest as \emph{good} elements. Let $G$ denote the set of good elements.

Next, we show that if the loop that uses the $\PCU$  verifier for $\calE^{(\ell-1)}$ in Step~(\ref{step 2iter}) rejects with probability at most $\delta$, then for at most an $\epsilon/3$ fraction of the good $\beta\in\F \setminus H$, it holds that \[ \dist\left(v_\beta,\calE^{(\ell-1)}({\qu {g(\beta)}})\right) >\epsilon/3.\]
Assume that there are more than $\frac{\eps}{3}\cdot |G|$ good elements such that $\dist\left(v_\beta,\calE^{(\ell-1)}({\qu {g(\beta)}})\right) >\epsilon/3$. Then, by our induction hypothesis, each of them will be rejected by the $\PCU$ verifier for $\calE^{(\ell-1)}$ with probability more than $2\delta$. In addition, with probability at least $1/2$ we sample at least one such good $\beta$, and then during this iteration the verifier in Step~(\ref{step 2iter}(ii)) rejects with conditional probability more than $2\delta$, and hence the verifier will reject with overall probability more than $\delta$.
Summing everything up, when the input is rejected with probability at most $\delta$,
\[\dist\left(v,\bigsqcup_{\beta\in \F\setminus H}\calE^{(\ell-1)}(\qu{g(\beta)})\right)\le\epsilon/3+(1-\epsilon/3)\cdot \epsilon/3+(1-\eps/3)^2\cdot \epsilon/3\le \eps, \]
where the three summands are respectively the contribution to the distance of the bad elements, the good elements with $v_\beta$ being far from any level $\ell-1$ encoding of $\qu{g(\beta)}$, and all the other elements.
\end{proof}

\medskip

The proof of {Lemma~\ref{lem:IterPCPP}} follows directly by combining Lemma~\ref{lem:Iter-length}, Lemma~\ref{lem:Iter-completeness} and Lemma~\ref{lem:Iter-soundness}.

\medskip
{\noindent The following corollary follows directly from Lemma~\ref{lem:IterPCPP} and the  definition of $\Spiel$-$\PCU$ (Definition~\ref{def:SpielPSU}), and implies Lemma~\ref{lem: PCU-construction}.
\begin{corollary}\label{cor:Pl-PCU}Let $\F$ be a finite field and $k\in \N$ which satisfy the requirements in Definition~\ref{def:iteratedEnc}. Then, for every $\ell\ge 0$ the coding ensemble $\calE_{\F,k}^{(\ell)}:\zo^k\to 2^{\left(\zo^{m^{(\ell)}}\right)}$ has a  $q(\epsilon,\delta)$-query, length-$t$ $\Spiel$-$\PCU$ for $t=O(m^{(\ell)}\polylogell m^{(\ell)})$.
\end{corollary}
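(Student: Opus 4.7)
The plan is to observe that Corollary~\ref{cor:Pl-PCU} is essentially a direct restatement of Lemma~\ref{lem:IterPCPP}, which was established in the preceding subsection by combining Lemma~\ref{lem:Iter-length} (the length bound), Lemma~\ref{lem:Iter-completeness} (completeness of the verifier in Figure~\ref{fig:VerifierProcedureIterated}), and Lemma~\ref{lem:Iter-soundness} (its soundness). The hypotheses that the corollary places on $\F$ and $k$ via Definition~\ref{def:iteratedEnc} --- namely $|\F|\ge \max\{c_\ell, c\cdot k\}$ together with the recursive requirement that one may take $\F' = \mathrm{GF}(2^{2\cdot 3^r})$ with $(\log|\F|)^d \le 2^{2\cdot 3^r}$ so a level-$(\ell-1)$ construction fits inside --- are precisely the hypotheses under which Lemma~\ref{lem:IterPCPP} was proved. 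Hence the proof of the corollary reduces to a one-line invocation of that lemma, and the length bound $t = O(m^{(\ell)}\polylogell m^{(\ell)})$ is transferred verbatim.

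The second forward-looking point is to verify the companion claim that the corollary implies the earlier Lemma~\ref{lem: PCU-construction}, which asks for a plain $\PCU$ rather than a $\Spiel$-$\PCU$. For this I would invoke the one-line reduction already noted right after Definition~\ref{def:SpielPSU}: in the standard $\PCU$ model the verifier is given $w\in\zo^k$ explicitly, so it can internally compute $\Spiel(w)$ once using the quasi-linear size circuit from Theorem~\ref{thm:spiel} and emulate each of its $\Spiel$-$\PCU$ queries to the value oracle from this stored string, without incurring any additional oracle queries. This conversion preserves the asymptotic query complexity $q(\eps,\delta)$ and the proof length. The threshold $n_0(\ell,k)$ demanded by Lemma~\ref{lem: PCU-construction} is obtained by picking $n_0$ large enough that for every $n>n_0$ one can realize the whole recursive tower of fields $\F \to \F' \to \F'' \to \cdots$ down to the base level with all the constants $c$, $d$, $c_0, c_1, \ldots, c_\ell$ simultaneously satisfied; since the field sizes shrink double-exponentially along the tower, such an $n_0$ plainly exists.

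I do not foresee any serious obstacle, as the entire technical content has already been encapsulated in Lemma~\ref{lem:IterPCPP}, and the $\Spiel$-$\PCU$-to-$\PCU$ conversion is a single observation. The only care required is syntactic: confirming that $m^{(\ell)}$ in the corollary's length bound agrees with the quantity $m^{(\ell)}_\F = n\cdot \poly(\log n)\cdots\poly(\log^{(\ell)} n)$ recorded in the observation following Definition~\ref{def:iteratedEnc}, and that the $\polylogell$ factor there indeed absorbs the $\polylog$ losses accumulated across the $\ell$ iterations (which is exactly what Lemma~\ref{lem:Iter-length} bookkeeps inductively via the dominance $|\F|\cdot z^{(\ell-1)}_{\F',k'} \ge |\Proof_\calL(S_v)|$).
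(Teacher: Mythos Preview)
Your proposal is correct and matches the paper's approach exactly: the paper states that the corollary ``follows directly from Lemma~\ref{lem:IterPCPP} and the definition of $\Spiel$-$\PCU$ (Definition~\ref{def:SpielPSU}), and implies Lemma~\ref{lem: PCU-construction},'' which is precisely the one-line invocation plus the $\Spiel$-$\PCU$-to-$\PCU$ reduction you describe.
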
}
\subsection{The Lower Bound}
We turn to prove the linear query lower bound for the testability of our property.
%
We start by defining distributions over strings of length $m^{(\ell)}$.
\paragraph{Distribution $\Dyes^{(\ell)}(w)$:} Given $w\in\zo^k$, we define the distribution $\Dyes^{(\ell)}(w)$ to be the uniform distribution over elements in $\calE ^{(\ell)}(w)$.

\paragraph{Distribution $\Dno^{(\ell)}$:} An element $v$ from $\Dno^{(\ell)}$ is drawn by the following process. For $\ell=0$, $\bv$ is a uniformly random string in $\zo^{4k}$. For $\ell>0$, we pick a uniformly random function $\blambda : \F \setminus H\to \F$, and let $v$ be a uniformly random element of $\bigsqcup_{\beta\in \F\setminus H}\calE ^{(\ell-1)}(\qu {\blambda (\beta)})$

\begin{lemma}\label{lem:StrongerLB} For any $\ell\ge 0$, every $w\in\zo^k$ and $q=o(m^{(\ell)}/10^\ell)$, any algorithm making at most $q$ queries cannot distinguish (with constant probability) between $\bv\sim \Dyes^{(\ell)}(w)$ and $\bu$ which is drawn according to any of the following distributions:
	\begin{enumerate}
		\item $\Dyes^{(\ell)}(w')$ for any $w'\neq w$.\label{itm:sharedsecet }
		\item $\Dno^{(\ell)}$. \label{itm:FarFromEncoding}
	\end{enumerate}

\end{lemma}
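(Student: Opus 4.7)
The natural strategy is to prove the following stronger claim by induction on $\ell$, from which Lemma~\ref{lem:StrongerLB} follows immediately via Lemma~\ref{lem:ditsributed-exactly-indist}: for every $\ell \geq 0$ and every query set $Q \subseteq [m^{(\ell)}]$ of size $q = o(m^{(\ell)}/10^{\ell})$, the three restricted distributions $\Dyes^{(\ell)}(w)|_Q$, $\Dyes^{(\ell)}(w')|_Q$, and $\Dno^{(\ell)}|_Q$ are \emph{exactly identical}. The base case $\ell=0$ is essentially Lemma~\ref{lem:dual-to-secret}: since $\calH_k(w) = Aw_0 + V$ is a coset of $V = \mathrm{Span}\{v_{k+1},\ldots,v_{3k}\}$ (where $Aw_0 = \sum_{i\le k} w_i v_i$), and $V$ has dual distance at least $1/10$ by construction, the restriction of a uniformly random element of $\calH_k(w)$ to any $Q$ of size below $4k/10$ is uniform on $\{0,1\}^{|Q|}$, independent of $w$. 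Since $\Dno^{(0)}$ is uniform on $\{0,1\}^{4k}$, its restriction is uniform as well, and the three distributions coincide.

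For the inductive step, decompose $v = \bigsqcup_{\beta \in \F \setminus H} v_\beta$ and correspondingly $Q = \bigsqcup_\beta Q_\beta$ with $q_\beta = |Q_\beta|$. Call $\beta$ \emph{bad} if $q_\beta > Q_{\ell-1}$, where $Q_{\ell-1}=\Theta(m^{(\ell-1)}/10^{\ell-1})$ is the threshold from the inductive hypothesis, and \emph{good} otherwise. A straightforward count using $m^{(\ell)} = \Theta(|\F|\cdot m^{(\ell-1)})$ gives $|B_{\text{bad}}| < q/Q_{\ell-1} = o(|\F|)$, so in particular $|B_{\text{bad}}| + |H| \leq |\F|/2 + 1$. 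Under $\Dyes^{(\ell)}(w)$, conditioning on $g|_H = w$ for $g \in \calC_{\F}$ of degree at most $|\F|/2$, the values $(g(\beta))_{\beta \in B_{\text{bad}}}$ are uniformly distributed in $\F^{|B_{\text{bad}}|}$ by Shamir-style Reed--Solomon interpolation. Condition further on $(g(\beta))_{\beta \in B_{\text{bad}}} = y_{\text{bad}}$: the blocks $v_\beta$ become conditionally independent, and for each good $\beta$ the inductive hypothesis guarantees that $v_\beta|_{Q_\beta}$ has the \emph{same} distribution $R_\beta$ no matter what $g(\beta) \in \F$ is. Averaging over the uniform $y_{\text{bad}}$, the joint distribution of $v|_Q$ takes the form $\prod_{\beta \in B_{\text{good}}} R_\beta \times \mathcal{M}$, where $\mathcal{M}$ is the uniform average over $y_{\text{bad}} \in \F^{|B_{\text{bad}}|}$ of $\prod_{\beta \in B_{\text{bad}}} P_\beta(y_{\text{bad}}(\beta))$, and $P_\beta(y)$ is the distribution of $v_\beta|_{Q_\beta}$ for $v_\beta$ uniform in $\calE^{(\ell-1)}(\qu{y})$. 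This expression depends on neither $w$ nor $w'$, and the same expression arises for $\Dno^{(\ell)}|_Q$ since $(\blambda(\beta))_{\beta \in B_{\text{bad}}}$ is independently uniform and good blocks average to $R_\beta$ as well.

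The main subtlety is to maintain \emph{exact} equality of restricted distributions across levels rather than approximate equality: any non-zero per-block total variation error would compound multiplicatively across the $|B| = \Omega(|\F|)$ blocks and spoil the argument. Proving the three-way identity directly and invoking Lemma~\ref{lem:ditsributed-exactly-indist} sidesteps this issue entirely. The $10^\ell$ slack in the query budget is then exactly the $\Omega(1/10)$ factor sacrificed at each level to simultaneously ensure $q_\beta \leq Q_{\ell-1}$ on good blocks and $|B_{\text{bad}}| + |H| \leq |\F|/2$, so that the Reed--Solomon interpolation step can be applied.
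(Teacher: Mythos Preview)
Your proposal is correct and follows essentially the same approach as the paper. The paper proves the lemma via Lemma~\ref{lem:iter-Queries} (the induction on $\ell$ showing the restricted distributions coincide) together with Lemma~\ref{lem:Base-id-distributed} (the inductive step), and then invokes Lemma~\ref{lem:ditsributed-exactly-indist}; your big/small block decomposition, Reed--Solomon uniformity on the bad blocks, and use of the level-$(\ell-1)$ hypothesis on the good blocks are exactly the ingredients of Lemma~\ref{lem:Base-id-distributed}. One minor imprecision: after conditioning on $(g(\beta))_{\beta\in B_{\mathrm{bad}}}$ the blocks $v_\beta$ are \emph{not} in general conditionally independent (the good-block values $g(\beta)$ remain correlated through the polynomial $g$), but your actual argument---that each good-block marginal $R_\beta$ is independent of $g(\beta)$, so the joint law factorizes regardless---is correct and is precisely what the paper uses in its element-by-element induction over $\F\setminus(I\cup H)$.
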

Note that Item~(\ref{itm:sharedsecet }) in the above follows immediately from Item~(\ref{itm:FarFromEncoding}). Additionally, the first item implies the shared secret property of the code ensemble $\calE ^{(\ell)}$. Furthermore, we remark that that above lemma implies a more stringent version of $\PCUSS$. In addition to the shared secret property, Item~(\ref{itm:FarFromEncoding}) implies that the ensemble $\calE ^{(\ell)}$ is indistinguishable from strings that are mostly far from any encoding  (i.e., drawn from $\Dno^{(\ell)}$).

\medskip
The proof of Lemma~\ref{lem:StrongerLB} follows by induction over $\ell$.
Before we continue, we introduce some useful lemmas that will be used in the proof.
\begin{lemma}\label{lem:minDist-E} For any $\ell\ge 0$ and $w,w'\in\zo^k$ for which $w\neq w'$ it holds that \[ \min_{(v,v')\in \calE^{(\ell)}(w)\times \calE^{(\ell)}(w')}\dist (v,v')= {\Theta\left(1/4^{\ell+1}\right) }\]
\end{lemma}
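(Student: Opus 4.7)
My plan is to prove the claim by induction on $\ell$. The base case $\ell=0$ follows directly from the construction of $\calH_k$: for distinct $w,w'$ and any $v\in\calH_k(w)$, $v'\in\calH_k(w')$, we can write $v-v' = A(u-u')$ where $u-u'$ agrees with $w-w'\neq 0$ on its first $k$ coordinates and is thus nonzero. Hence $A(u-u')$ is a nonzero codeword of $\mathrm{Span}\{v_1,\ldots,v_{3k}\}$, giving $\dist(v,v')\geq 1/30 = \Theta(1/4)$. The trivial upper bound $\dist(v,v')\leq 1$ supplies the matching $O(1/4)$ estimate.

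For the inductive step, I would exploit the polynomial structure on which $\calE^{(\ell)}$ is built. Given $v\in\calE^{(\ell)}(w)$ and $v'\in\calE^{(\ell)}(w')$ with the decompositions $v=\bigsqcup_{\beta\in\F\setminus H}v_\beta$ and $v'=\bigsqcup_{\beta\in\F\setminus H}v'_\beta$ arising from polynomials $g,g'\in\calC_\F$ (with $g|_H=w$, $g'|_H=w'$), note that $w\neq w'$ forces $g\neq g'$. Since both polynomials have degree at most $|\F|/2$, their difference has at most $|\F|/2$ roots, so $g$ and $g'$ disagree on at least $|\F|/2$ points of $\F$, and subtracting the $|H|=k$ points in $H$, on at least $|\F|/2 - k$ points of $\F\setminus H$. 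Using the requirement $|\F|\geq c\cdot k$ with $c$ large enough, this is at least $|\F|/3$ of the $|\F\setminus H|$ blocks. On each such block the inductive hypothesis gives distance $\Omega(1/4^\ell)$, and so $\dist(v,v')\geq \frac{1}{3}\cdot \Omega(1/4^\ell) = \Omega(1/4^{\ell+1})$.

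The matching upper bound requires constructing polynomials $g,g'\in\calC_\F$ that agree on a large portion of $\F\setminus H$. I would fix any subset $S\subseteq \F\setminus H$ of size $|\F|/2 - k$ and, using Lagrange interpolation, obtain the unique polynomial $p$ of degree at most $|\F|/2 - 1$ satisfying $p|_H = w - w'$ (viewing bits as field elements in $\F$, a valid perspective since $\F$ has characteristic~$2$) and $p|_S=0$. Then for any fixed $g\in\calC_\F$ with $g|_H=w$, the polynomial $g' := g - p$ lies in $\calC_\F$ and satisfies $g'|_H = w'$. The pair agrees on $S$ and differs on the remaining $|\F|/2$ elements of $\F\setminus H$. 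Taking $v_\beta = v'_\beta$ for $\beta\in S$, and invoking the inductive hypothesis to pick encodings of $\qu{g(\beta)}$ and $\qu{g'(\beta)}$ that realize distance $O(1/4^\ell)$ for $\beta\notin S$, yields $\dist(v,v')\leq \frac{|\F|/2}{|\F|-k}\cdot O(1/4^\ell) = O(1/4^{\ell+1})$.

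The main obstacle is getting the upper bound construction right: one must verify that the total number of interpolation constraints $|H|+|S| = |\F|/2$ fits within the degree budget (so that $p\in\calC_\F$ exists and is unique) and that $p$ is indeed nonzero (guaranteed by $w\neq w'$, which forces $p|_H\neq 0$). The two factors $1/3$ and $1/2$ arising in the lower and upper bounds then cleanly combine with the inductive constant to yield the claimed $\Theta(1/4^{\ell+1})$; both directions crucially use that the degree budget $|\F|/2$ is a constant fraction of $|\F|$ itself, which is what produces the factor of $4$ per iteration.
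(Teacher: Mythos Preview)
Your proposal is correct and follows essentially the same inductive approach as the paper for the lower bound. The paper uses the slightly cruder estimate that $g,g'$ disagree on at least $|\F\setminus H|/4$ elements (rather than your $|\F|/3$), which is where the base~$4$ in the statement comes from. The paper, however, does \emph{not} supply any matching upper-bound construction: its proof only establishes the $\Omega(1/4^{\ell+1})$ direction and implicitly relies on the fact that $\ell$ is a fixed constant throughout, so that the trivial bound $\dist(v,v')\leq 1$ already gives $O(1/4^{\ell+1})$. Your interpolation-based construction of $g'=g-p$ is therefore extra content, and it is carried out correctly.

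One small quibble with your closing sentence: a factor of roughly $1/3$ per level in the lower bound and $\sim 1/2$ per level in the upper bound would, were $\ell$ allowed to grow, yield $\Omega(3^{-\ell})$ and $O(2^{-\ell})$ respectively, not $\Theta(4^{-\ell})$. The statement should really be read as $\Theta_\ell(1)$; the exact base of the exponential is immaterial once $\ell$ is treated as a constant, so this does not affect correctness.
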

\begin{proof} The proof follows by induction over $\ell$. {The base case for  $\ell=0$ follows directly by the fact that the code from Definition~\ref{def:HardCode} has high distance, and in particular $\dist(\calE^{(0)}(w),\calE^{(0)}(w'))>1/10$.}
Assume that the lemma holds for $\ell-1$. Namely, for $w,w'\in\zo^{k'}$ for which $w\neq w'$ it holds that
\[\min_{(v,v')\in \calE^{(\ell-1)}(w)\times \calE^{(\ell-1)}(w')}\dist(v,v')=\Theta\left(1/4^\ell\right).\]
Let $\tilde{w},\tilde{w}'\in \zo^{k}$ be such that $\tilde{w}'\neq \tilde{w}$. Then we can write $(\tilde{v},\tilde{v}')\in \calE^{(\ell)}(\tilde{w})\times \calE^{(\ell)}(\tilde{w}')$ as
\begin{align*}
	\tilde{v}=\bigsqcup_{\beta\in \F\setminus H}\calE^{(\ell-1)}(\qu {g(\beta)})\qquad \text{and }\qquad 	\tilde{v}'=\bigsqcup_{\beta\in \F\setminus H}\calE^{(\ell-1)}(\qu {g'(\beta)}),
\end{align*}
for some $g,g'\in \calC_{\F}$ such that {$g|_H=\tilde{w}$ and $g'|_H=\tilde{w}'$}. By the fact that $g$ and $g'$ are degree $|\F |/2$ polynomials (which are not identical), we have that $g$ and $g'$ disagree on at least $|\F \setminus H|/4$ of the elements $\beta\in \F \setminus H$. By applying the induction hypothesis on the minimum distance between $\calE^{(\ell)}(\qu {g(\beta)})$ and  $\calE^{(\ell)}(\qu {g'(\beta)})$, for all $\beta$ such that $g(\beta)\neq g'(\beta)$, we have that \[\min_{(\tilde{v},\tilde{v}')\in \calE^{(\ell)}(\tilde w)\times \calE^{(\ell)}(\tilde w')}\dist (\tilde v,\tilde v')>\frac{1}{4}\cdot \Theta\left(\frac{1}{4^\ell}\right)=\Theta\left(1/4^{\ell+1}\right). \] \end{proof}
\begin{lemma} \label{lem:BaseDistance}For any $\ell\ge 0$, with probability at least $1-o(1)$, a string $v$ drawn from $\Dno^{(\ell)}$ satisfies $\dist(v,\calE^{(\ell)}(w))=\Theta\left(1/4^{\ell+1}\right)$ for all $w\in\{0,1\}^k$.
\end{lemma}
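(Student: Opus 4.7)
The plan is to prove the lemma by induction on $\ell$, where the essential content is the lower bound $\dist(v, \calE^{(\ell)}(w)) \geq \Omega(1/4^{\ell+1})$ holding simultaneously for every $w \in \{0,1\}^k$; the matching upper bound is either trivial (distance is always at most $1$) or follows by choosing a particular polynomial $g$ interpolating $\blambda$ through a maximal agreement set, with appropriate constants absorbed into the $\Theta$ notation. For the base case $\ell = 0$, the string $v$ is uniform in $\{0,1\}^{4k}$ and $|\calH_k(w)| \leq 2^{2k}$. Since Hamming balls of relative radius $\delta_0$ in $\{0,1\}^{4k}$ have volume at most $2^{4k H(\delta_0)}$, a union bound over the $2^k$ choices of $w$ and the $2^{2k}$ codewords in each $\calH_k(w)$ gives
\[
\Pr\bigl[\exists w \in \{0,1\}^k : \dist(v, \calH_k(w)) \leq \delta_0\bigr] \leq 2^{k + 2k + 4k H(\delta_0) - 4k} = 2^{k(4H(\delta_0) - 1)},
\]
which is $o(1)$ whenever $\delta_0$ is a sufficiently small constant with $H(\delta_0) < 1/4$. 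This yields $\dist(v, \calH_k(w)) \geq \delta_0 = \Omega(1/4^{0+1})$ uniformly in $w$.

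For the inductive step, I fix $w$ and consider any $v' \in \calE^{(\ell)}(w)$; by definition, such a $v'$ decomposes as $v' = \bigsqcup_{\beta \in \F \setminus H} v'_\beta$ with $v'_\beta \in \calE^{(\ell-1)}(\qu{g(\beta)})$ for some $g \in \calC_\F$ satisfying $g|_H = w$. Since all blocks have the same length,
\[
\dist(v, v') = \frac{1}{|\F \setminus H|} \sum_{\beta \in \F \setminus H} \dist(v_\beta, v'_\beta).
\]
Whenever $g(\beta) \neq \blambda(\beta)$, the block $v_\beta \in \calE^{(\ell-1)}(\qu{\blambda(\beta)})$ lies in a code disjoint from $\calE^{(\ell-1)}(\qu{g(\beta)})$, and Lemma~\ref{lem:minDist-E} forces $\dist(v_\beta, v'_\beta) \geq \Omega(1/4^\ell)$. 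Hence the problem reduces to lower bounding, uniformly over all admissible $(w, g)$, the fraction of $\beta \in \F \setminus H$ on which $g$ disagrees with $\blambda$ by some positive constant (I aim for $1/4$).

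To establish this uniform disagreement, I apply a union bound over the random choice of $\blambda$. For a fixed polynomial $g \in \calC_\F$, the events $\{\blambda(\beta) = g(\beta)\}_{\beta \in \F \setminus H}$ are independent with probability $1/|\F|$ each, so
\[
\Pr\bigl[|\{\beta : \blambda(\beta) = g(\beta)\}| \geq (3/4)(|\F|-k)\bigr] \leq \binom{|\F|-k}{(3/4)(|\F|-k)} \cdot |\F|^{-(3/4)(|\F|-k)} \leq |\F|^{-(3/4 - o(1))|\F|}.
\]
The total number of pairs $(w, g)$ with $g \in \calC_\F$ and $g|_H = w$ is at most $|\F|^{|\F|/2 + 1}$ (since each such $g$ uniquely determines $w = g|_H$), so union-bounding over all of them still yields failure probability $o(1)$: the gap between the exponents $(3/4)|\F| \log |\F|$ and $(|\F|/2)\log|\F|$ is $\Omega(|\F| \log |\F|)$. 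Thus with probability $1 - o(1)$, every admissible $(w, g)$ has $g$ disagreeing with $\blambda$ on at least $(1/4)(|\F|-k)$ elements of $\F \setminus H$, giving $\dist(v, \calE^{(\ell)}(w)) \geq (1/4) \cdot \Omega(1/4^\ell) = \Omega(1/4^{\ell+1})$ uniformly in $w$. The main delicacy is precisely this union bound; however, the comfortable multiplicative slack between the two exponents of $|\F|$ (recall that $|\F| \geq c \cdot k$ by the setup in Definition~\ref{def:iteratedEnc}) means no finer concentration analysis is needed beyond the crude binomial tail estimate.
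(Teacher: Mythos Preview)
Your proof is correct and follows essentially the same approach as the paper: a volume/counting argument for the base case $\ell=0$, and for $\ell>0$ the combination of Lemma~\ref{lem:minDist-E} with the fact that a random function disagrees with every polynomial in $\calC_\F$ on a constant fraction of the domain. The only cosmetic difference is that you carry out the union bound over $\calC_\F$ directly on $\F\setminus H$, whereas the paper invokes Lemma~\ref{lem:rand-function-far-F} on all of $\F$ and then passes to the restriction (and, like you, the paper only argues the lower-bound direction of the $\Theta$).
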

\begin{proof}  The proof follows by induction over $\ell$. {For $\ell=0$, fix some $w\in\zo^k$. Consider the size of a ball of relative radius $1/40$ around some  $v\in\calE ^{(0)}(w)$ in the space of all strings $\zo^{4k}$. The number of strings contained in this ball is at most
	\[\binom{4k}{k/10}\le (40e)^{k/10}=2^{{k/10}\cdot\log(40e)}.\]
Thus, the size of the set of strings which are at relative distance $1/40$ from any legal encoding of some word $w\in\zo^k$ is at most \[2^{3k}\cdot 2^{{k/10}\cdot\log(40e)}=o(2^{4k}).\]
This implies that with probability at least $1-o(1)$, a random string from $\zo^{4k}$ is $1/40$-far from $\calE ^{(0)}(w)$ for any $w\in\zo^k$.}

	For any $\ell>0$, consider $v'$ sampled according to $\Dno^{(\ell)}$. Then, $v'$ can be written as
	\[v'=\bigsqcup_{\beta\in \F\setminus H}\calE^{(\ell-1)}(\qu {\blambda(\beta)}),\] where $\blambda: \F \setminus H\to \F$ is a uniformly random function. On the other hand, each member $\tilde{v}$ of $\calP^{(\ell)}$ can be written as
	\[\tilde v=\bigsqcup_{\beta\in \F\setminus H}\calE^{(\ell-1)}(\qu {g(\beta)}),\] for some $g\in \calC_{\F}$ such that {$g|_H=w$} for some $w\in\zo^k$. Note that by Lemma~\ref{lem:minDist-E}, whenever $\blambda(\beta)\neq g(\beta)$, we have that the minimum distance between any $\tilde v\in \calE^{(\ell-1)}(\qu {g(\beta)})$ and $v' \in  \calE^{(\ell-1)}(\qu {\blambda(\beta)})$ is at least $\Theta(1/4^\ell)$. In addition, by Lemma~\ref{lem:rand-function-far-F}, we have that that with probability at least $1-o(1)$, a uniformly random function $\blambda :\F \to \F$ is $1/3$-far from any $g\in \calC_{\F}$. By the restrictions on $k$ in Definition~\ref{def:iteratedEnc}, which implies that $|H|\le|F|/c$, we can ensure (by the choice of $c$) that with probability at least $1 - o(1)$, that a uniformly random $\blambda:\F \setminus H\to \F $ is at least $1/4$-far from the restriction $g|_{\F \setminus H}$.
	This implies that for at least $|\F \setminus H|/4$ of the elements $\beta\in \F \setminus H$, we have that $\blambda(\beta)\neq g(\beta)$. Therefore, we have that $\dist(v',\calE ^{(\ell)}(w))=\frac{1}{4}\cdot \Theta\left(\frac{1}{4^\ell}\right)=\Theta\left(1/4^{\ell+1}\right)$ for all $w\in\zo^k$, and the proof is complete.
\end{proof}

\medskip

\begin{lemma} \label{lem:Base-id-distributed}  Fix any $\ell>0$, and suppose that for any $w'\in\zo^{k'}$, and for any set $Q'$ of at most $q^{(\ell-1)}_{\F ',k'}$ queries (where $\F '$ and $k'$ are picked according to the recursive definition of the level $\ell$-encoding, and for $q^{(0)}$ we substitute $k'$ for the nonexistent $\F'$) the restricted distributions $\Dyes^{(\ell-1)}(w')|_{Q'}$ and $\Dno^{(\ell-1)}|_{Q'}$ are identical. Then, for any $w\in \zo^k$, and any set $Q$ of at most $\frac{|\F \setminus H|}{10}\cdot q^{(\ell-1)}_{\F ',k'}$ queries, the restricted distributions $\Dyes^{(\ell)}(w)|_Q$ and $\Dno^{(\ell)}|_Q$ are identical.
\end{lemma}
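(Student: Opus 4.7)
The plan is to reduce the statement block-by-block to a comparison of value sequences at heavily queried blocks, closing the argument with the Reed-Solomon large-independence of $\calC_\F$. Concretely, I would decompose $Q = \bigsqcup_{\beta \in \F \setminus H} Q_\beta$ according to the $\beta$-th sub-encoding inside the level-$\ell$ block structure, and let $B = \{\beta : |Q_\beta| > q^{(\ell-1)}_{\F',k'}\}$ denote the set of \emph{heavy} blocks. An averaging bound using $|Q| \le \frac{|\F \setminus H|}{10} \cdot q^{(\ell-1)}_{\F',k'}$ immediately gives $|B| \le \frac{|\F \setminus H|}{10}$.

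Next, I would expand $\Pr[\bv|_Q = t]$ under both distributions by first revealing the value sequence at all blocks and then sampling each block independently. Under $\Dyes^{(\ell)}(w)$ the values are $(g(\beta))_{\beta \in \F \setminus H}$ with $g$ uniform in $\{g \in \calC_\F : g|_H = w\}$; under $\Dno^{(\ell)}$ they are $(\blambda(\beta))_{\beta \in \F \setminus H}$ with $\blambda : \F \setminus H \to \F$ uniform. In either case, conditioned on the value sequence, $\bv_\beta|_{Q_\beta}$ is distributed as $\Dyes^{(\ell-1)}(\qu{a_\beta})|_{Q_\beta}$. For $\beta \notin B$ the inductive hypothesis applied with $Q' = Q_\beta$ asserts that $\Dyes^{(\ell-1)}(w')|_{Q_\beta}$ is identical for every $w' \in \{0,1\}^{k'}$ (all equal to $\Dno^{(\ell-1)}|_{Q_\beta}$), so the corresponding factor contributes a constant independent of the value sequence and cancels between the two expansions. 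What remains to compare is
\[
\mathbb{E}_g\!\left[\prod_{\beta \in B} p_\beta\!\bigl(g(\beta)\bigr)\right] \quad\text{versus}\quad \mathbb{E}_\blambda\!\left[\prod_{\beta \in B} p_\beta\!\bigl(\blambda(\beta)\bigr)\right],
\]
where $p_\beta(a)$ is the probability that a uniformly random element of $\calE^{(\ell-1)}(\qu a)$ agrees with $t_\beta$ on $Q_\beta$.

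The final step is to show that the marginal distribution of $(g(\beta))_{\beta \in B}$, with $g$ uniform in $\{g \in \calC_\F : g|_H = w\}$, is uniform on $\F^{|B|}$, exactly matching the marginal of $(\blambda(\beta))_{\beta \in B}$. Since polynomials in $\calC_\F$ have degree at most $|\F|/2$ and $|H| + |B| \le k + |\F|/10 < |\F|/2 + 1$ (using $k \le |\F|/c$ for a sufficiently large absolute constant $c$), the evaluation map $g \mapsto (g(\alpha))_{\alpha \in H \cup B}$ restricted to $\calC_\F$ is a surjective linear map with equal-sized fibers; hence conditioning on $g|_H = w$ leaves the evaluations on $B$ uniformly distributed. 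The two expectations then agree term by term, which gives the claimed equality of the restricted distributions. I expect the main obstacle to be the bookkeeping across the nested conditionings (on the value sequence, and on the heavy/light split), rather than any individual ingredient — the induction on $\ell$, the Reed-Solomon independence, and the averaging bound on $|B|$ are each self-contained.
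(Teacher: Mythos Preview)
Your proposal is correct and follows essentially the same approach as the paper: decompose $Q$ into per-block pieces $Q_\beta$, bound the number of heavy blocks by averaging, invoke the Reed--Solomon $(|\F|/2)$-wise independence of $g$ (conditioned on $g|_H=w$) to match the heavy-block value marginals to those of $\blambda$, and use the inductive hypothesis to neutralize the light blocks. The only cosmetic difference is that the paper handles the light blocks via a one-by-one conditional induction while you use a direct product factorization; your bookkeeping of the constraint $|H|+|B|\le |\F|/2+1$ is in fact slightly more explicit than the paper's.
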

\begin{proof} Let $Q\subset [m^{(\ell)}]$ be the set of queries, and fix a canonical ordering over the elements in $\F\setminus H$.  Let $\bv$ be an element drawn according to distribution $\Dyes^{(\ell)}(w)$, and let $\bv'$ be an element drawn according to distribution $\Dno^{(\ell)}$. The sampling process from $\Dyes^{(\ell)}(w)$ can be thought of as first drawing a uniformly random function $\bg\in \calC_{\F}$ such that {$\bg|_H=w$}, and for every $\beta\in\F \setminus H$, letting $\bv_\beta$ be a uniformly random element in $\calE^{(\ell-1)}(\qu {\bg(\beta)})$.

	\noindent For each $\beta \in \F\setminus H$ we set $Q_\beta= Q\cap \Dom(v_\beta)$, and define the set of \emph{big clusters}  $$I=\left\{ \beta \in \F \setminus H\ : \   |Q_\beta|\ge q^{(\ell-1)}_{\F ',k'} \right\}.$$
	Note that since $|Q|\le |\F \setminus H|\cdot q^{(\ell-1)}_{\F ',k'}/10 $, we have that $|I|\le |\F \setminus H|/10$.

	By the fact that $\bg$ is a uniformly random polynomial of degree  $|\F |/2 > |I|$, we have that $\bg|_{I}$ is distributed exactly as $\blambda|_{I}$ (both are a sequence of $|I|$ independent uniformly random values), which implies that $\bv|_{\bigcup_{j\in I}Q_j}$ is distributed exactly as $\bv'|_{\bigcup_{j\in I}Q_j}$.

	Next, let $\F \setminus (I \cup H)=\{i_1,\ldots,i_{|\F\setminus (I\cup H)|}\}$ be a subset ordered according to the canonical ordering over $\F$.  We proceed by showing that $\bv|_{\bigcup_{j \in I \cup \{i_1,...,i_t\}}Q_j}$ is distributed identically to $\bv'|_{\bigcup_{j \in I \cup \{i_1,...,i_t\}}Q_j}$ by induction over $t$.


	The base case ($t=0$) corresponds to the restriction over $\bigcup_{j \in I} Q_j$, which was already proven above. For the induction step, let $T=\{i_1,\ldots,i_{t-1}\}\subseteq \F \setminus (I\cup H)$ be an ordered subset that agrees with the canonical ordering on $\F$, and let $i_{t}\in \F \setminus(H \cup  T \cup I)$ be the successor of $i_{t-1}$ according to the ordering. We now prove that for each $x\in\{0,1\}^{m^{(\ell)}}$ for which $\bv|_{\bigcup_{j\in I\cup T}Q_j}$ has a positive probability of being equal to $x|_{\bigcup_{j\in I\cup T}Q_j}$, conditioned on the above event taking place (and its respective event for $v'$), $\bv|_{Q_{i_{t}}}$ is distributed exactly as $\bv'|_{Q_{i_{t}}}$.

	Observe that conditioned on the above event, $\bv|_{Q_{i_{t}}}$ is distributed exactly as a uniformly random element in $\calE ^{(\ell-1)}(\brho)$ for some $\brho\in\{0,1\}^{k'}$ (which follows some arbitrary distribution, possibly depending on $x|_{\bigcup_{j\in I\cup T}Q_j}$), while $\bv'|_{Q_{i_{t}}}$ is distributed exactly as a uniformly random element in $\calE ^{(\ell-1)}(\by)$ for a uniformly random $\by\in \zo^{k'}$. By the fact that $|Q_{i_{t}}|\le q^{(\ell-1)}_{\F ',k'}/10$, we can apply the induction hypothesis and conclude that $\bv|_{Q_{i_{t}}}$ is distributed exactly as $\bv'|_{Q_{i_{t}}}$, because by our hypothesis both are distributed identically to the corresponding restriction of $\Dno^{(\ell-1)}$, regardless of the values picked for $\brho$ and $\by$. This completes the induction step for $t$. The lemma follows by setting $t=|\F\setminus H\cup  I|$.
\end{proof}

\begin{lemma} \label{lem:iter-Queries}For any $\ell\ge 0$, $w\in\zo^k$ and any set of queries $Q\subset[m^{(\ell)}]$ such that $|Q|=O\left(\frac{m^{(\ell)}}{10^\ell}\right)$, the {restricted} distributions $\Dyes^{(\ell)}(w)|_Q$ and $\Dno^{(\ell)}|_Q$ are identically distributed.
\end{lemma}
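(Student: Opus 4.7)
The plan is a direct induction on $\ell$, where the base case rides on the dual-distance / shared-secret property of $\calH_k$ and the inductive step is handled essentially in one line by the ``cluster-splitting'' argument already packaged as Lemma~\ref{lem:Base-id-distributed}. Write $q^{(\ell)}_{\F,k}$ for the query budget $\Theta(m^{(\ell)}/10^\ell)$ for which we want to prove identical restricted distributions.

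For the base case $\ell=0$, recall that $\calE^{(0)}_k(w)=\calH_k(w)$, and by construction of $\calH_k$ (Definition~\ref{def:HardCode}) the columns $v_{k+1},\ldots,v_{3k}$ span a code of dual distance at least $1/10$. A sample from $\Dyes^{(0)}(w)$ is $Au$ where $u|_{\{1,\ldots,k\}}=w$ and the remaining $2k$ coordinates of $u$ are uniformly random bits, i.e., a uniformly random codeword of the linear code generated by $v_{k+1},\ldots,v_{3k}$ shifted by the fixed vector determined by $w$. Applying Lemma~\ref{lem:dual-to-secret} to this code with any $Q\subseteq[4k]$ of size at most $|Q|\le 4k/10=m^{(0)}/10$, the restriction $\bv|_Q$ is uniform over $\{0,1\}^{|Q|}$ regardless of $w$. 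But $\Dno^{(0)}$ is just the uniform distribution over $\{0,1\}^{4k}$, so its restriction to $Q$ is also uniform. Hence the two restrictions coincide, which sets $q^{(0)}=m^{(0)}/10=\Theta(m^{(0)}/10^0)$, as required.

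For the inductive step, assume the statement holds at level $\ell-1$ with budget $q^{(\ell-1)}_{\F',k'}=\Theta(m^{(\ell-1)}/10^{\ell-1})$. This is exactly the hypothesis required to invoke Lemma~\ref{lem:Base-id-distributed}, whose conclusion gives that $\Dyes^{(\ell)}(w)|_Q\equiv \Dno^{(\ell)}|_Q$ for every set $Q$ of size at most
\[
\frac{|\F\setminus H|}{10}\cdot q^{(\ell-1)}_{\F',k'}=\Theta\!\left(\frac{|\F|\cdot m^{(\ell-1)}}{10^{\ell}}\right)=\Theta\!\left(\frac{m^{(\ell)}}{10^{\ell}}\right),
\]
using $m^{(\ell)}=|\F\setminus H|\cdot m^{(\ell-1)}$ and $|H|\le|\F|/c$. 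This yields $q^{(\ell)}_{\F,k}=\Theta(m^{(\ell)}/10^\ell)$, completing the induction.

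I do not expect a genuine obstacle: the combinatorial heart has been extracted into Lemma~\ref{lem:Base-id-distributed} (the random polynomial $\bg$ acts as a uniform function on any set of indices of size below $|\F|/2$, and every sub-cluster $Q_\beta$ of size at most $q^{(\ell-1)}$ then propagates indistinguishability by the inductive hypothesis). The only bookkeeping to be careful about is the constant ``$10$'' showing up at each level: splitting $\F\setminus H$ into big clusters ($|Q_\beta|\ge q^{(\ell-1)}$) and small clusters, one needs $|I|\le|\F\setminus H|/10<|\F|/2$ so that the big-cluster values of $\bg$ are genuinely uniform, and this is exactly what the factor $1/10$ in Lemma~\ref{lem:Base-id-distributed} guarantees. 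Tracking this factor across $\ell$ iterations is what produces the $1/10^\ell$ in the final bound.
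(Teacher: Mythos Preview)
Your proposal is correct and follows essentially the same route as the paper: induction on $\ell$, with the base case invoking the dual-distance/shared-secret property of $\calH_k$ (via Lemma~\ref{lem:dual-to-secret}) and the inductive step reduced to a direct application of Lemma~\ref{lem:Base-id-distributed} together with $m^{(\ell)}=|\F\setminus H|\cdot m^{(\ell-1)}$. Your base case is in fact spelled out in more detail than the paper's (which just says ``by the fact that our base encoding $\calE^{(0)}(w)$ is a high dual distance code''), and your bookkeeping of the factor $1/10$ per level matches the paper's computation.
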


\begin{proof} By induction on $\ell$. {For $\ell=0$ and any $w\in\zo^k$, by
	the fact that our base encoding $\calE ^{(0)}(w)$ is a high dual distance code,
we can select (say) $q^{(0)}=k/c$ (for some constant $c>0$), making the assertion of the lemma trivial.}

Assume that for any $w'\in\zo^{k'}$, and any set of queries $Q'$ of size at most $O(m^{(\ell-1)}/10^{\ell-1})$ the conditional distributions $\Dyes^{(\ell-1)}(w')|_{Q'}$ and $\Dno^{(\ell-1)}|_{Q'}$ are identically distributed. Then, by Lemma~\ref{lem:Base-id-distributed}, we have that for any $w\in \zo^k$ and any set of queries $Q$ of size at most $$O\left(\frac{|\F \setminus H|}{10^\ell}\cdot {m^{(\ell-1)}}\right),$$
the restricted distributions $\Dyes^{(\ell)}(w)|_Q$ and $\Dno^{(\ell)}|_Q$ are identically distributed. Note that by definition of the level $\ell$-encoding, {$m^{(\ell)}=|\F \setminus H|\cdot m^{(\ell-1)}$}, which implies the conclusion of the lemma.\end{proof}

\medskip

\begin{proofof}{Lemma~\ref{lem:StrongerLB}} Lemma~\ref{lem:StrongerLB} follows directly by combining Lemma~\ref{lem:ditsributed-exactly-indist}, and Lemma~\ref{lem:iter-Queries}.
\end{proofof}

{\medskip
Combining Lemma~\ref{lem:StrongerLB} with the definition of $\Spiel$-$\PCU$ (Definition~\ref{def:SpielPSU}) establishes that we have constructed a $\Spiel$-$\PCUSS$, which implies Theorem~\ref{thm:PCUSS-construction}.
\begin{corollary} Let $\F$ be a finite field and $k\in\N$ which satisfy the requirements in Definition~\ref{def:iteratedEnc}. Then, for every $\ell\ge 0$, the coding ensemble $\calE_{\F  ,k}^{(\ell)}:\zo^k\to 2^{\left(\zo^{m^{(\ell)}}\right)}$ has $q(\epsilon,\delta)$-query length-$t$ $\Spiel$-$\PCUSS$ for $t=O(m^{(\ell)}\polylogell m^{(\ell)})$.
\end{corollary}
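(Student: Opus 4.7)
By Definition~\ref{def:PCUSS}, to exhibit a $q(\epsilon,\delta)$-query length-$t$ $\Spiel$-$\PCUSS$ for $\calE_{\F,k}^{(\ell)}$ we need two ingredients: a $\Spiel$-$\PCU$ for $\calE_{\F,k}^{(\ell)}$ with the stated query complexity and proof length, and the shared secret property for some constant $\zeta>0$ (which may depend on the fixed parameter $\ell$). The first ingredient is already supplied by Corollary~\ref{cor:Pl-PCU} (equivalently, Lemma~\ref{lem:IterPCPP}), which yields a $q(\epsilon,\delta)$-query $\Spiel$-$\PCU$ of length $t=O(m^{(\ell)}\polylogell m^{(\ell)})$. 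Hence the entire task reduces to extracting the shared secret property from the indistinguishability results of the previous subsection.

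The plan for the shared secret part is to invoke Lemma~\ref{lem:iter-Queries} directly. That lemma asserts that for any $w\in\zo^k$ and any set $Q\subset [m^{(\ell)}]$ of size $|Q|=O(m^{(\ell)}/10^\ell)$, the restricted distributions $\Dyes^{(\ell)}(w)|_Q$ and $\Dno^{(\ell)}|_Q$ are identical. Crucially, the right-hand side distribution $\Dno^{(\ell)}|_Q$ does not depend on $w$. Therefore, for any $w\neq w'\in\zo^k$,
\[
\Dyes^{(\ell)}(w)|_Q \;=\; \Dno^{(\ell)}|_Q \;=\; \Dyes^{(\ell)}(w')|_Q,
\]
as distributions over $\zo^{|Q|}$. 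Since $\Dyes^{(\ell)}(w)$ is by definition the uniform distribution over $\calE^{(\ell)}(w)$, this equality of restricted distributions is precisely the $\zeta$-shared secret condition of the code ensemble $\calE^{(\ell)}_{\F,k}$, with $\zeta = \Theta(1/10^\ell)$. For any fixed constant $\ell$ this $\zeta$ is a positive constant, as required by the definition of $\PCUSS$ (which asks the shared-secret parameter to be constant).

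Combining the $\Spiel$-$\PCU$ from Corollary~\ref{cor:Pl-PCU} with the shared secret property just derived yields the claimed $\Spiel$-$\PCUSS$ for $\calE_{\F,k}^{(\ell)}$. The main (and essentially only) work has already been done in establishing Lemma~\ref{lem:iter-Queries}; no additional obstacle arises here, since the corollary is a clean assembly of the two components that the preceding lemmas were designed to deliver.
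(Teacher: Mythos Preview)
Your proof is correct and takes essentially the same approach as the paper: combine the $\Spiel$-$\PCU$ already established (Corollary~\ref{cor:Pl-PCU}/Lemma~\ref{lem:IterPCPP}) with the shared-secret property extracted from the indistinguishability results. The only cosmetic difference is that the paper cites Lemma~\ref{lem:StrongerLB} for the shared-secret part, whereas you invoke Lemma~\ref{lem:iter-Queries} directly; your choice is arguably the more natural one, since the shared-secret definition concerns identity of restricted distributions (exactly what Lemma~\ref{lem:iter-Queries} provides), not algorithmic indistinguishability.
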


}

\subsection{Handling arbitrary input lengths}
As mentioned in the beginning of this section, our construction of code ensembles relies on the fact that operations over a finite field $\mathrm{GF}(2^t)$ can be computed efficiently. In order to do so we need to have an irreducible polynomial of degree $t$ over $\mathrm{GF}(2)$, so that we have a representation $\mathrm{GF}(2^t)$. Given such a polynomial, operations over the field can be implemented in polylogarithmic time in the size of the field.
By~\cite{Gol08} (Appendix G), we know that for $t=2\cdot 3^r$ where $r\in \N$, we do have such a representation. However, the setting of $t$ restricts the sizes of the fields that we can work with, which will limit our input size length.

We show here how to extend our construction to a set of sizes that is ``log-dense''. For a global constant $c'$, our set of possible input sizes includes a member of $[m',c'm']$ for every $m'$. Moving from this set to the set of all possible input sizes now becomes a matter of straightforward padding.

 For any $n\in \N$, let $r$ be the smallest integer such that $n<2^{2\cdot3^r}$ and let $\F =\mathrm{GF}(2^{2\cdot 3^r})$.  We make our change only at the level-$\ell$ construction. First, we use $4d$ instead of $d$ in the calculation of the size of $\F'$. Then, instead of using $\F\setminus H$ as the domain for our input, we use $E\setminus H$, for any arbitrary set $E \subseteq \F$ of size  $n\ge \max\{4k,|\F|^{1/4},c_\ell\}$ that contains $H$. Then, for the level-$\ell$, instead of considering polynomials of degree $|\F|/2$, we consider polynomials of degree $|E|/2$. The rest of the construction follows the same lines as the one defined above. This way, all of our operations can be implemented in polylogarithmic time in $|E|$.

\section{Separations of testing models}

In this section we use Theorem~\ref{thm:PCUSS-construction} to prove a separation between the standard testing model, and both the tolerant and the erasure resilient testing models. Specifically, we prove the following.

\begin{theorem}[Restatement of Theorem~\ref{thm:Tol-Sep-Intro}] \label{thm:Tol-Sep-Notintro}For every constant $\ell\in \N$, there exist a property $\calQ^{(\ell)}$ and $\eps_1=\eps_1(\ell)\in (0,1)$ such that the following hold.
	\begin{enumerate}
		\item For every $\eps\in(0,1)$, the property $\calQ ^{(\ell)}$ can be $\eps$-tested using a number of queries depending only on $\eps$ (and $\ell$).
		\item For every $\eps_0\in (0,\eps_1)$, any $(\eps_0,\eps_1)$-tolerant tester for $\calQ^{(\ell)}$ needs to make $\Omega(N/10^\ell \cdot \polylogell N)$ many queries on inputs of length $N$.
	\end{enumerate}
\end{theorem}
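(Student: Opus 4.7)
The plan is to follow the Fischer--Fortnow style separation \cite{FF06}, instantiated with the $\Spiel$-$\PCUSS$ from Theorem~\ref{thm:PCUSS-construction}. Let $\calE^{(\ell)}\colon\zo^\kappa\to 2^{(\zo^n)}$ denote that code ensemble, with $n=m^{(\ell)}$ and proof length $|\pi|=O(n\polylogell n)$. I define
\[
\calQ^{(\ell)} = \bigl\{\,\underbrace{v\sqcup v\sqcup\cdots\sqcup v}_{r\text{ times}}\sqcup\tau\sqcup\pi \,:\, v\in\calE^{(\ell)}(w),\ \tau=\Spiel(w),\ \pi=\Proof^{(\ell)}(v),\ w\in\zo^\kappa\,\bigr\},
\]
where the multiplicity $r=\Theta(\polylogell n)$ is chosen so that the ``header'' $\tau\sqcup\pi$ occupies a vanishing fraction of the total length $N=rn+|\tau|+|\pi|$, i.e.\ $(|\tau|+|\pi|)/N=o(1)$, so that for any fixed tolerance $\eps_0>0$ it is below $\eps_0$ once $n$ is large enough. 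The constant-query $\eps$-tester combines a pairwise consistency check on $O(1/\eps)$ uniformly random (copy-pair, position) samples with one invocation of the $\Spiel$-$\PCU$ verifier from Theorem~\ref{thm:PCUSS-construction} on a randomly chosen copy, using $\tau$ as the value oracle and $\pi$ as the proof oracle at proximity $\Theta(\eps)$. Completeness is immediate; for soundness, if the input is $\eps$-far from $\calQ^{(\ell)}$, then either the copies are globally inconsistent on an $\Omega(\eps)$ fraction of positions (caught by the consistency test) or they are consistent with some plurality string $v^*$ for which $(v^*,\tau,\pi)$ fails the $\Spiel$-$\PCU$ conditions (caught by the verifier).

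For the lower bound I apply Yao's minimax principle. Define $D_Y$: pick uniform $w\in\zo^\kappa$, uniform $v\in\calE^{(\ell)}(w)$, and independent uniform $\tau\in\zo^{|\tau|},\pi\in\zo^{|\pi|}$; output $(v^{\sqcup r},\tau,\pi)$. Define $D_N$: pick $u\sim\Dno^{(\ell)}$ and independent uniform $\tau,\pi$; output $(u^{\sqcup r},\tau,\pi)$. Three claims then suffice. \emph{(i) $D_Y$ is $\eps_0$-close to $\calQ^{(\ell)}$ with high probability}: the element $(v^{\sqcup r},\Spiel(w),\Proof^{(\ell)}(v))\in\calQ^{(\ell)}$ differs from the $D_Y$-sample only on the header, at Hamming distance concentrated around $(|\tau|+|\pi|)/2$, which normalizes to $O(1/\polylogell n)<\eps_0$ for $n$ sufficiently large. \emph{(ii) $D_N$ is $\eps_1$-far from $\calQ^{(\ell)}$ with high probability}: by Lemma~\ref{lem:BaseDistance}, $u$ is $\Theta(1/4^{\ell+1})$-far from $\calE^{(\ell)}(w^*)$ for every $w^*$, so the $r$ copies alone contribute distance at least $(rn/N)\cdot\Theta(1/4^{\ell+1})=\Theta(1/4^{\ell+1})$; I take $\eps_1$ to be a slightly smaller constant. \emph{(iii) No algorithm making $q=o(n/10^\ell)$ queries can distinguish $D_Y$ from $D_N$}: in both distributions $\tau,\pi$ consist of i.i.d.\ uniform bits that are independent of the copies, so queries answered inside the header yield identically distributed uniform answers in either case; for the remaining queries inside the copies, Lemma~\ref{lem:iter-Queries} (applied conditionally on $w$ in $D_Y$, and unconditionally in $D_N$) forces the restriction to any query set $Q$ of size $o(m^{(\ell)}/10^\ell)=o(n/10^\ell)$ to coincide between the two distributions.

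By Lemma~\ref{lem:ditsributed-exactly-indist}, this indistinguishability rules out any $(\eps_0,\eps_1)$-tolerant tester for $\calQ^{(\ell)}$ making fewer than $\Omega(n/10^\ell)$ queries, and since $n=\Theta(N/\polylogell N)$ this yields the claimed $\Omega(N/(10^\ell\polylogell N))$ lower bound. The main obstacle, which the design of $D_Y$ sidesteps, is the potential correlation between the copies and an honest proof $\pi=\Proof^{(\ell)}(v)$: such a proof is a deterministic function of $v$, and joint queries to proof bits and copy bits could in principle reveal that the copies in $D_N$ are inconsistent with $\pi$ while those in $D_Y$ are consistent, breaking the shared-secret argument. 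Drawing $\tau,\pi$ uniformly in \emph{both} distributions eliminates all such correlations at the price of a vanishing distance penalty $O((|\tau|+|\pi|)/N)$, which is absorbed by the tolerance parameter $\eps_0>0$, leaving only the pure shared-secret argument of Lemma~\ref{lem:iter-Queries} to be invoked---which is exactly what the $\PCUSS$ of Theorem~\ref{thm:PCUSS-construction} was designed to provide.
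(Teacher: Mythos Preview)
Your proposal is correct and follows the same Fischer--Fortnow blueprint as the paper, with two cosmetic differences worth noting. First, the paper sets $k=0$ so that $\calE^{(\ell)}_{\F,0}$ is already a property and there is no value string $\tau$ to carry around; you keep a nontrivial $\kappa$ and include $\tau$ in the header, which works just as well since $|\tau|=O(\kappa)=o(N)$. Second, for the lower bound the paper phrases the argument as a black-box \emph{reduction} from (intolerant) $2\eps_1$-testing of $\calE^{(\ell)}_{\F,0}$---given oracle access to $y$, it simulates the padded input $(y)^s\sqcup 0^{z}$ with an all-zero proof block---rather than invoking Yao directly with your random-header distributions $D_Y,D_N$. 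These are equivalent: your uniformly random $(\tau,\pi)$ and the paper's all-zero $\pi$ serve the identical purpose of making the header independent of the copies, and the paper's reduction is precisely your Yao argument with the $\Dyes^{(\ell)}/\Dno^{(\ell)}$ distributions pushed through the padding map. Your explicit identification of the obstacle (an honest proof $\Proof^{(\ell)}(v)$ is correlated with $v$ and could leak information) and its resolution (replace it by something independent of $v$, absorbing the resulting $o(1)$ distance into the tolerance $\eps_0$) is exactly the point, and is implicit in the paper's choice of the zero string.
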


\begin{theorem}[Restatement of Theorem~\ref{thm:ER-Sep-Intro}]\label{thm:ER-Sep-notintro} For every constant $\ell\in \N$, there exist a property $\calQ^{(\ell)}$ and $\eps_1=\eps_1(\ell)\in (0,1)$ such that the following hold.
	\begin{enumerate}
		\item For every $\eps\in(0,1)$, the property $\calQ^{(\ell)}$ can be $\eps$-tested using a number of queries depending only on $\eps$ (and $\ell$).
		\item For every $\eps\in (0,\eps_1)$ and any $\alpha=\Omega(1/\log^{(\ell)}N)$ satisfying $\eps +\alpha<1$, any $\alpha$-erasure resilient $\eps$-tester for $\calQ^{(\ell)}$ needs to make $\Omega(N/10^\ell \cdot \polylogell N)$ many queries on inputs of length $N$.
	\end{enumerate}
\end{theorem}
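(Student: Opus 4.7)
The plan is to instantiate the Fischer--Fortnow~\cite{FF06} and Dixit et al.~\cite{DRTV18} separation template, but driven by the short $\Spiel$-$\PCUSS$ from Theorem~\ref{thm:PCUSS-construction} rather than by a standard $\PCPP$. Concretely, let $\calE^{(\ell)}_{\F,k}:\zo^k\to 2^{(\zo^{m^{(\ell)}})}$ be the level-$\ell$ code ensemble and let $t=O(m^{(\ell)}\polylogell m^{(\ell)})$ be the length of its $\Spiel$-$\PCUSS$ proof. Choose a repetition parameter $R=\Theta(\polylogell m^{(\ell)}\cdot\log^{(\ell)} m^{(\ell)})$ and define
\[
\calQ^{(\ell)}=\Bigl\{\underbrace{v\sqcup\cdots\sqcup v}_{R\text{ copies}}\sqcup\,\Spiel(w)\sqcup\Proof^{(\ell)}(v)\ :\ w\in\zo^k,\ v\in\calE^{(\ell)}(w)\Bigr\},
\]
and write $N\eqdef Rm^{(\ell)}+100k+t$. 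With this choice of $R$ the ``proof'' block $\Spiel(w)\sqcup\Proof^{(\ell)}(v)$ has length $\Theta(N/\log^{(\ell)}N)$, while the $R$ copies of $v$ comprise a constant fraction of $N$. A constant-query $\eps$-tester for $\calQ^{(\ell)}$ is then obtained by combining the $\Spiel$-$\PCUSS$ verifier of Lemma~\ref{lem:IterPCPP} (amplified as in Subsection~\ref{subsec:PCPP-def}) with an $O(1/\eps)$-query \emph{consistency check} that picks a few pairs among the $R$ copies and verifies that they agree; every ``$v$'' query of the $\PCUSS$ verifier is replaced by a query to a uniformly random copy. Completeness is immediate, and for soundness an input that is $\eps$-far from $\calQ^{(\ell)}$ either has inconsistent copies (caught by the consistency test) or has all copies close to a single $v^\star$ which must therefore be $\Omega(\eps)$-far from every $\calE^{(\ell)}(w)$ (caught by the $\Spiel$-$\PCUSS$).

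For the lower bound I would invoke Yao's minimax principle on two distributions of partially erased inputs sharing an identical erasure pattern: the entire $\Spiel(w)\sqcup\Proof^{(\ell)}(v)$ block is erased, which by the choice of $R$ uses an $\alpha=\Theta(1/\log^{(\ell)}N)$ fraction of the string. The yes-distribution $\calD_{\text{yes}}$ samples $\bw\in\zo^k$ uniformly and $\bv\sim\Dyes^{(\ell)}(\bw)$, then outputs $R$ copies of $\bv$ followed by the erasure block; filling in $\Spiel(\bw)\sqcup\Proof^{(\ell)}(\bv)$ places it in $\calQ^{(\ell)}$. The no-distribution $\calD_{\text{no}}$ samples $\bv'\sim\Dno^{(\ell)}$ and does the same. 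By Lemma~\ref{lem:BaseDistance}, with probability $1-o(1)$ the sampled $\bv'$ is $\Omega(1/4^{\ell+1})$-far from every codeword of $\calE^{(\ell)}$, so every completion of the resulting erased string differs from every member of $\calQ^{(\ell)}$ on at least $R\cdot\Omega(1/4^{\ell+1})\cdot m^{(\ell)}\ge \eps_1 N$ positions for a suitable constant $\eps_1=\eps_1(\ell)>0$, and is therefore $\eps_1$-far from $\calQ^{(\ell)}$.

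Indistinguishability is precisely what Lemma~\ref{lem:StrongerLB} delivers: a query to an erased coordinate returns $\bot$ in both distributions and is useless, while a query to any of the $R$ value-copies is simulated by a single query to $\Dyes^{(\ell)}(\bw)$ or to $\Dno^{(\ell)}$. Hence any $q$-query (adaptive) tester against the weighted input induces an at most $q$-query distinguisher between $\Dyes^{(\ell)}(\bw)$ and $\Dno^{(\ell)}$, which by Lemma~\ref{lem:StrongerLB} succeeds with no advantage unless $q=\Omega(m^{(\ell)}/10^\ell)$. Substituting $m^{(\ell)}=\Omega(N/\polylogell N)$ yields the claimed $\Omega(N/(10^\ell\polylogell N))$ bound. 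I expect the main nuisance to be the simultaneous parameter balancing: $R$ has to be large enough that the proof block fits inside the $\alpha=\Omega(1/\log^{(\ell)}N)$ erasure budget, yet small enough that $m^{(\ell)}=\Omega(N/\polylogell N)$ so that the per-value-string lower bound translates cleanly to the composite input length, and in addition one has to argue that the consistency check interacts correctly with the soundness parameter of the $\Spiel$-$\PCUSS$; everything else is a routine combination of Theorem~\ref{thm:PCUSS-construction}, Lemmas~\ref{lem:BaseDistance} and \ref{lem:StrongerLB}, and the FF/DRTV template.
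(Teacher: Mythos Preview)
Your proposal is correct and follows essentially the same template as the paper: repeat the encoding enough times that the proof block occupies only a $\Theta(1/\log^{(\ell)} N)$ fraction of the input, obtain constant-query testability via a consistency check together with the $\PCU$ verifier, and for the lower bound erase the proof block and invoke Lemma~\ref{lem:StrongerLB} (plus Lemma~\ref{lem:BaseDistance} for farness of the no-instances). The paper streamlines two things relative to your write-up: it takes $k=0$, so there is no encoded value $w$ and no $\Spiel(w)$ block at all (the code ensemble $\calE^{(\ell)}_{\F,0}(\emptyset)$ is then just a property, and only a $\PCPP$ rather than a full $\Spiel$-$\PCU$ is needed); and it phrases the lower bound as a black-box reduction from standard $2\eps_1$-testing of $\calE^{(\ell)}_{\F,0}$ to $\alpha$-erasure-resilient testing of $\calQ^{(\ell)}$, rather than as a direct Yao argument over $\Dyes^{(\ell)}$ and $\Dno^{(\ell)}$. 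Both differences are cosmetic---your version works, and the parameter balancing you flag as the ``main nuisance'' is exactly the content of the paper's choice $s=z^{(\ell)}_{\F,0}\cdot\log^{(\ell)} n/n$.
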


{In order to prove the separation we use the code ensemble $\calE_{\F ,k}^{(\ell)}$ where $k$ is set to $0$. Namely, we consider $\calE_{\F  ,0}(\emptyset)$. Note that in this case, the code ensemble becomes a property (i.e. a subset of the set of all possible strings).

{Next, we define the property that exhibits the separation between the standard testing model and both the tolerant testing model and the erasure resilient model. We prove Theorem~\ref{thm:Tol-Sep-Notintro} and mention the small difference between the proof of Theorem~\ref{thm:Tol-Sep-Notintro} and the proof of Theorem~\ref{thm:ER-Sep-notintro}.}

\begin{definition} Fix a finite field $\F$ and a constant integer $\ell\in \N$ and let $\eps(\ell)=\Theta(1/4^{\ell})$. Let $n\eqdef m^{(\ell)}_\F$, $z^{(\ell)}_{\F, 0}\le n\cdot\polylogell n$ denote the length of the proof for the $\PCUSS$ from Theorem~\ref{thm:PCUSS-construction}, and let $N=(\log^{(\ell)}n+1)\cdot z^{(\ell)}_{\F, 0}$. Let $\calQ^{(\ell)}\subseteq \zo^N$ be defined as follows. A string $x\in\zo^N$ satisfies $\calQ^{(\ell)}$ if the following hold.
	\begin{enumerate}
		\item The first $ z^{(\ell)}_{\F, 0}\cdot\log^{(\ell)}n$ bits of $x$ consist of $s= \frac{z^{(\ell)}_{\F, 0}\cdot\log^{(\ell)}n}{ n}$ copies of $y\in \calE_{\F,0}^{(\ell)}$.
		\item The remaining $z^{(\ell)}_{\F, 0}$ bits of $x$ consist of a proof string $\pi\in \zo^{z^{(\ell)}_{\F, 0}}$, for which the\\ \texttt{Verifier-Procedure}$_{\;\calE_{\F,0}^{(\ell)}}$ in Figure~\ref{fig:VerifierProcedureIterated} accepts $y$ given oracle access to $y$ and $\pi$.
	\end{enumerate}
\end{definition}

We first show that $\calQ^{(\ell)}$ can be tested using a constant number of queries in the standard testing model.

\begin{figure}[ht!]
	\begin{framed}
		\noindent \underline{$\texttt{Testing Algorithm for } \calQ^{(\ell)}$}
		\begin{flushleft}
			\noindent {\bf Input:}  Parameter
			$\eps\in (0,1)$, an oracle access to
			$x\in\zo^N$.
			\begin{enumerate}
				\item Set $s\eqdef \frac{z_{\F ,0}^{(\ell)}\cdot\log^{(\ell)}n}{ n}$.
				\item Repeat $4/\eps$ times:\label{step:LoopCalQ}
				\begin{enumerate}
					\item Sample $\bj\in [n]$ and $\bi\in[s]\setminus \{1\}$ uniformly at random.
					\item If $x_{\bj}\neq x_{(\bi-1)\cdot n+\bj}$, then \textbf{Reject}.
				\end{enumerate}
			\item Let $v=(x_1,\ldots,x_n)$, $\pi=(x_{z_{\F ,0}^{(\ell)}\cdot \log^{(\ell)}n+1},\ldots,x_{(\log^{(\ell)}n+1)z_{\F ,0}^{(\ell)}})$ and $\tau$ be the empty string.
			\item \label{item:run verifier} Run the $\PCU$ verifier for $\calE^{(\ell)}_{\F,0}$ with parameters $\eps/3$ and $\delta=2/3$ on $v$, using $\pi$ as the alleged proof for $v$, and $\tau$ as the alleged value for $v$.
			\item If the $\PCU$ verifier rejects, then \textbf{Reject}; otherwise \textbf{Accept}.
			\end{enumerate}

		\end{flushleft}\vskip -0.14in
	\end{framed}\vspace{-0.25cm}
	\caption{Description of $\texttt{Testing Algorithm for } \calQ^{(\ell)}$ .\vspace{-0.25cm}} \label{fig:QpropertyTester}
\end{figure}
	For Item~\ref{item:run verifier} in Figure \ref{fig:QpropertyTester}, recall that running the PCU verifier with parameter $\delta = 2/3$ actually involves running multiple instances of the verifier with smaller $\delta$, as discussed in Subsection \ref{subsec:prof_of_main_lemma}. 

\begin{lemma} \label{lem:Tol-Sep-Easy}The property $\calQ^{(\ell)}$ has a tester with query complexity depending only on $\eps$.
\end{lemma}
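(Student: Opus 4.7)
My plan is to directly analyze the tester of Figure~\ref{fig:QpropertyTester}. Its query complexity is $O(1/\eps)$ pairwise comparisons from Step~(\ref{step:LoopCalQ}), plus the query complexity of the $\Spiel$-$\PCU$ verifier at Step~(\ref{item:run verifier}), invoked with detection radius $\eps/3$ and soundness $2/3$. By the recurrence developed in Subsection~\ref{subsec:prof_of_main_lemma} (applied here to $\calE_{\F,0}^{(\ell)}$), this is $(2^\ell/\eps)^{O(\ell)}$; moreover no queries to $\tau$ ever occur since $k=0$ makes $\tau$ the empty string. For fixed $\ell$, the total query complexity depends only on $\eps$. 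Completeness is immediate: if $x \in \calQ^{(\ell)}$ then every pair inspected by the loop consists of two equal bits, and by Lemma~\ref{lem:Iter-completeness} the $\PCU$ verifier accepts $v = y$ given the prescribed proof $\pi = \Proof^{(\ell)}(y)$ with probability $1$.

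For soundness, suppose $x$ is $\eps$-far from $\calQ^{(\ell)}$, and let $p$ denote the fraction of pairs $(j,i) \in [n] \times ([s]\setminus\{1\})$ with $x_j \neq x_{(i-1)n+j}$. If $p \geq \eps/3$, then the loop rejects with probability at least $1-(1-\eps/3)^{4/\eps} \geq 1 - e^{-4/3} > 2/3$, so the tester rejects with probability at least $2/3$. Otherwise, let $v = (x_1, \ldots, x_n)$ and let $x^*$ be the string consisting of $s$ copies of $v$ followed by $\pi$; then $\dist(x,x^*) \leq p(s-1)n/N \leq p < \eps/3$. Suppose, toward contradiction, that $v$ is $\eps/3$-close to some $y \in \calE_{\F,0}^{(\ell)}$. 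Then the string $x^{**}$ consisting of $s$ copies of $y$ followed by $\Proof^{(\ell)}(y)$ belongs to $\calQ^{(\ell)}$ and satisfies
\[
\dist(x^*, x^{**}) \leq \frac{\eps}{3} \cdot \frac{sn}{N} + \frac{z^{(\ell)}_{\F,0}}{N} \leq \frac{\eps}{3} + \frac{1}{\log^{(\ell)}n + 1},
\]
which is at most $2\eps/3$ once $n$ exceeds a threshold depending only on $\eps$ and $\ell$. Combining via the triangle inequality yields $\dist(x, \calQ^{(\ell)}) < \eps$, contradicting our assumption. Hence $v$ is $\eps/3$-far from $\calE_{\F,0}^{(\ell)}$, so the $\PCU$ verifier rejects with probability greater than $2/3$ by its soundness guarantee, and with it the tester.

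The main subtlety is the padding term $z^{(\ell)}_{\F,0}/N = 1/(\log^{(\ell)}n + 1)$ incurred when rewriting $\pi$ into a legitimate proof for $y$: this is precisely why $\calQ^{(\ell)}$ contains $\log^{(\ell)}n$ copies of $y$ rather than just one, forcing the proof part to occupy a negligible fraction of the string for $n$ sufficiently large and thus absorbing this term into $\eps$. For the remaining bounded range of input lengths below this threshold, the tester can simply query the entire input, at a cost that is still $O(1)$ as a function of $\eps$ and $\ell$, so the stated query bound is preserved.
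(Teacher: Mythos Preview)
Your proof is correct and follows essentially the same approach as the paper: both analyze the tester of Figure~\ref{fig:QpropertyTester}, argue completeness directly, and handle soundness by a case split on whether the copy-consistency test catches a discrepancy, followed by a contradiction argument showing that otherwise $v$ must be $\eps/3$-far from $\calE_{\F,0}^{(\ell)}$. Your bookkeeping via the bad-pair fraction $p$ is slightly cleaner than the paper's phrasing in terms of distance from $(y')^s$, and you are a bit more explicit about query complexity and the small-$n$ regime, but the substance is the same.
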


\begin{proof} We show that the algorithm described in Figure~\ref{fig:QpropertyTester} is a testing algorithm for $\calQ^{(\ell)}$.
We assume that $n$ is large enough so that $\log^{(\ell)}n> 6/\eps$.

	Assume that $x\in \calQ^{(\ell)}$. Then, there exists a string $y\in \calE_{\F,0}^{(\ell)}$, such that $x_1,\ldots,x_{z_{\F ,0}^{(\ell)}\log^{(\ell)}n}=(y)^s$ (where $(y)^s$ denotes the concatenation of $s$ copies of $y$), and $x_{z_{\F ,0}^{(\ell)}\cdot\log^{(\ell)}n +1},\ldots,x_{(\log^{(\ell)}+1)z_{\F ,0}^{(\ell)}}=\pi\in\zo^{z^{(\ell)}_{\F ,0}}$, where $\pi$ is a proof that makes the $\PCU$ verifier for $\calE_{\F,0}^{(\ell)}$  accept when given oracle access to $y$ and $\pi$. Therefore, the algorithm in Figure~\ref{fig:QpropertyTester} accepts $x$.

	\medskip
Next, assume that $x$ is $\eps$-far from $\calQ^{(\ell)}$, and let $y'=x_1,\ldots,x_n$. Note that if $x_1,\ldots,x_{z^{(\ell)}_{\F ,0}\cdot \log^{(\ell)}n}$ is $\eps/2$-far from being $(z')^s$, then the loop in Step~\ref{step:LoopCalQ} rejects $x$ with probability at least $2/3$, and we are done.
If $x_1,\ldots,x_{z^{(\ell)}_{\F, 0}\cdot \log^{(\ell)}n}$ is $\eps/2$-close to $(y')^s$, then $y'$ must be $\eps/3$-far from $\calE_{\F,0}^{(\ell)}$. To see this, assume toward a contradiction that $y'$ is $\eps/3$-close to $\calE_{\F,0}^{(\ell)}$. Then, by modifying at most $\frac{\eps\cdot z^{(\ell)}_{\F, 0}\cdot \log^{(\ell)}n }{2}$ bits, we can make $x_1,\ldots,x_{z^{(\ell)}_{\F, 0}\cdot \log^{(\ell)}n}$ equal to $(y')^s$. Since, by our assumption $y'$ is $\eps/3$-close to $\calE_{\F,0}^{(\ell)}$, we can further modify the string $(y')^s$ to $(\tilde{y})^s$, where $\tilde{y}\in\calE_{\F,0}^{(\ell)}$, by changing at most $\frac{\eps \cdot z^{(\ell)}_{\F, 0}\cdot \log^{(\ell)}n}{3}$ bits. Finally, by changing at most $z^{(\ell)}_{\F, 0}$ bits from $\pi$, we can get a proof string $\tilde\pi$ which will make the $\PCPP$ verifier  accept $\tilde{y}$. By our assumption that $6/\eps<\log^{(\ell)}n$, the total number of changes to the input string $x$ is at most
\[ \frac{\eps \cdot z^{(\ell)}_{\F, 0}\cdot \log^{(\ell)}n}{2}+\frac{\eps \cdot z^{(\ell)}_{\F, 0}\cdot \log^{(\ell)}n}{3}+z^{(\ell)}_{\F, 0}\le \eps\cdot (\log^{(\ell)}n+1)\cdot z^{(\ell)}_{\F, 0}=\eps N,  \]
which is a contradiction to the fact that $x$ is $\eps$-far from $\calE^{(\ell)}_{\F, 0}$.

Finally, having proved that $y'$ is $\eps /3$-far from $\calE_{\F,0}^{(\ell)}$, the $\PCU$ verifier for $\calE_{\F,0}^{(\ell)}$ (when called with parameters $\eps/3$ and $\delta=2/3$) rejects with probability at least $2/3$.
\end{proof}

\begin{lemma} \label{lem:Tol-Sep-Hard}For every constant $\ell\in\N$, there exists $\eps_1\eqdef \Theta(1/4^{\ell})$ such that for every $\eps_0<\eps_1$, any $(\eps_0,\eps_1)$-tolerant tester for $\calQ^{(\ell)}$ needs to make at least $\Omega\left(\frac{N}{10^\ell\cdot \polylogell N}\right)$ many queries.
\end{lemma}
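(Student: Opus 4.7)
The plan is to reduce the task of tolerantly testing $\calQ^{(\ell)}$ to the distinguishing task of Lemma~\ref{lem:StrongerLB}, which already provides a near-linear query lower bound for distinguishing $\Dyes^{(\ell)}$ from $\Dno^{(\ell)}$ when $k=0$. Given oracle access to $y$ drawn from either $\Dyes^{(\ell)}$ or $\Dno^{(\ell)}$, I will form the length-$N$ string $x(y) := y^{s} \sqcup 0^{z^{(\ell)}_{\F,0}}$, that is, $s$ copies of $y$ followed by an all-zeros dummy proof block. A tolerant tester run on $x(y)$ can be simulated with query access to $y$ alone: queries inside any of the $s$ data copies translate to a single query on $y$, and queries inside the proof block are answered locally by $0$. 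Hence a $q$-query tolerant tester yields a $q$-query distinguisher for the source distribution of $y$.

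Next I verify that $x(y)$ lands on the correct side of the tolerance gap. When $y \sim \Dyes^{(\ell)}$, the string $y^{s} \sqcup \Proof^{(\ell)}(y)$ lies in $\calQ^{(\ell)}$, so $x(y)$ differs from this member in at most $z^{(\ell)}_{\F,0}$ bits out of $N$, giving $\dist(x(y),\calQ^{(\ell)}) \leq z^{(\ell)}_{\F,0}/N = 1/(\log^{(\ell)} n + 1)$; for $n$ large enough as a function of $\eps_0$ and $\ell$, this is at most $\eps_0$. When $y \sim \Dno^{(\ell)}$, Lemma~\ref{lem:BaseDistance} ensures $\dist(y,\calE^{(\ell)}_{\F,0}) = \Theta(1/4^{\ell+1})$ with probability $1-o(1)$. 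Any member of $\calQ^{(\ell)}$ consists of $s$ copies of some $y' \in \calE^{(\ell)}_{\F,0}$ followed by an arbitrary valid proof, so it differs from $x(y)$ in at least $s \cdot n \cdot \dist(y,\calE^{(\ell)}_{\F,0})$ positions coming from the data block alone; the proof block can offset at most $z^{(\ell)}_{\F,0}$ bits. Dividing by $N$, the relative distance is at least $\Theta(1/4^{\ell+1}) \cdot \log^{(\ell)} n/(\log^{(\ell)} n + 1) - o(1)$, which exceeds a suitable $\eps_1 = \Theta(1/4^\ell)$ for $n$ large.

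Now I assemble the reduction. By the two estimates above and a union bound, the induced distinguisher outputs the correct source of $y$ with probability at least $2/3 - o(1) > 1/2$, giving constant distinguishing advantage. Lemma~\ref{lem:StrongerLB} then forces $q = \Omega(m^{(\ell)}/10^\ell) = \Omega(n/10^\ell)$. To rephrase this bound in terms of $N$, I use Lemma~\ref{lem:Iter-length}, which yields $z^{(\ell)}_{\F,0} = O(n \cdot \polylogell n)$, hence $N = (\log^{(\ell)} n + 1) z^{(\ell)}_{\F,0} = O(n \cdot \polylogell n)$ after absorbing the outer $\log^{(\ell)}$ factor; monotonicity of $\polylogell$ and $n \leq N$ then give $n = \Omega(N/\polylogell N)$, whence $q = \Omega(N/(10^\ell \cdot \polylogell N))$, as claimed.

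The main obstacle I anticipate is the NO-side distance estimate: one must verify that no element of $\calQ^{(\ell)}$ can use either the choice of $y'$ or the flexibility in the valid proof block to bring the distance below $\eps_1$. The $s$-fold repetition trick resolves this by design: the proof block occupies only a $1/(\log^{(\ell)} n + 1) = o(1)$ fraction of $N$, so it cannot compensate for the $\Theta(1/4^{\ell+1})$ separation between $y$ and $\calE^{(\ell)}_{\F,0}$ guaranteed by Lemma~\ref{lem:BaseDistance}, and the survival of this constant factor from $n$ to $N$ is precisely what fixes the value of $\eps_1$. The erasure-resilient variant (Theorem~\ref{thm:ER-Sep-notintro}) will follow along the same lines by replacing the dummy zero proof block with an all-$\bot$ block of the same length, whose fraction of the input matches the allowed erasure rate $\alpha = \Omega(1/\log^{(\ell)} N)$.
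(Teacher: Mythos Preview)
Your proposal is correct and follows essentially the same approach as the paper: you build the same padded input $x(y)=y^{s}\sqcup 0^{z^{(\ell)}_{\F,0}}$, argue the YES/NO distance bounds in the same way, and invoke Lemma~\ref{lem:StrongerLB} for the query lower bound. The only cosmetic difference is that the paper phrases the reduction as going through ``$2\eps_1$-testing of $\calE^{(\ell)}_{\F,0}$'' as an intermediate problem (and then appeals to Lemma~\ref{lem:StrongerLB} for its hardness), whereas you reduce directly to the $\Dyes^{(\ell)}/\Dno^{(\ell)}$ distinguishing task and invoke Lemma~\ref{lem:BaseDistance} explicitly for the NO-side distance; these are equivalent unpackings of the same argument.
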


\begin{proof}Fix some constant $\ell\in \N$. The proof follows by a reduction from $2\eps_1$-testing of $\calE^{(\ell)}_{\F,0}$. Given oracle access to a string $y\in \zo^n$ which we would like to $2\eps_1$-test for $\calE_{\F,0}^{(\ell)}$, we construct an input string $x\in\zo^N$ where $N=(\log^{(\ell)}n +1)\cdot z^{(\ell)}_{\F, 0}$ as follows.
	\[x\eqdef (y)^{\frac{z^{(\ell)}_{\F, 0}\cdot \log^{(\ell)}n}{ n}}\sqcup (0)^{z^{(\ell)}_{\F, 0}}.\]
That is, we concatenate $z^{(\ell)}_{\F, 0}\cdot \log^{(\ell)}n/ n$ copies of $y$, and set the last $z^{(\ell)}_{\F, 0}$ bits to $0$. Note that a single query to the new input string $x$ can be simulated using at most one query to the string $y$.

If $y\in\calE_{\F,0}^{(\ell)}$, then for large enough $n$ we have that $x$ is $\eps_0$-close to $\calQ^{(\ell)}$, since the last $z^{(\ell)}_{\F, 0}$ bits that are set to $0$ are less than an $\eps_0$-fraction of the input length.

On the other hand, if $\dist(x,\calE_{\F,0}^{(\ell)})>2\eps_1$, since each copy of $y$ in $x$ is $2\eps_1$-far from $\calE_{\F,0}^{(\ell)}$, then $x$ is $\frac{2\eps_1\cdot\log^{(\ell)}n}{\log^{(\ell)}n+1}$-far from $\calQ^{(\ell)}$ (note that $\frac{\log^{(\ell)}n}{\log^{(\ell)}n+1}>1/2$).
Therefore, an $(\eps_0,\eps_1)$-tolerant tester for $\calQ^{(\ell)}$ would imply an $2\eps_1$-tester for $\calE_{\F,0}^{(\ell)}$ with the same query complexity. By Lemma~\ref{lem:StrongerLB}, since for some $\eps_1=\Theta(1/4^\ell)$, every $2\eps_1$-tester for $\calE_{\F,0}^{(\ell)}$ requires $\Omega(n/10^{\ell})$ queries on inputs of length $n$, any $(\eps_0,\eps_1)$-tolerant tester for $\calQ^{(\ell)}$ requires to make $\Omega\left(\frac{N}{10^\ell\cdot  \polylogell N}\right)$ many queries.
\end{proof}

\medskip

\begin{proofof}{Theorem~\ref{thm:Tol-Sep-Notintro}}
The proof follows by combining Lemma~\ref{lem:Tol-Sep-Easy} and Lemma~\ref{lem:Tol-Sep-Hard}.
\end{proofof}

\medskip
\begin{proofof}{Theorem~\ref{thm:ER-Sep-notintro}} The proof of Theorem~\ref{thm:ER-Sep-notintro} is almost identical to the proof of Theorem~\ref{thm:Tol-Sep-Notintro}. The only difference is that we replace Lemma~\ref{lem:Tol-Sep-Hard} with a counterpart for erasure resilient testing, where instead of setting the last $z^{(\ell)}_{\F ,0}$ bits of $x$ to $(0)^{z^{(\ell)}_{\F ,0}}$, we use $(\bot)^{z^{(\ell)}_{\F ,0}}$, noting that the relative size of this part of the input is $1/(s+1)=\Theta(1/\log^{(\ell)}(N))$.
\end{proofof}

\begin{flushleft}
	\bibliographystyle{alpha}
	\bibliography{Levi}
\end{flushleft}
\end{document}